\documentclass[11pt,letterpaper]{article}

\pdfoutput=1
\pdfminorversion=4
\usepackage[margin=2cm]{geometry}
\usepackage{cite}
\usepackage{algorithm}  
\usepackage{algpseudocode}
\usepackage{subfig}
\floatname{algorithm}{Algorithm}

\interdisplaylinepenalty=2500
\usepackage{url}
\usepackage{optidef}
\usepackage{lipsum} 
\usepackage{todonotes} 
\newcommand{\mynioptions}{[color=blue!20,inline]}
\newcommand{\mynoptions}{[color=blue!20]}
\newcommand{\myni}{\expandafter\todo\mynioptions}
\newcommand{\myn}{\expandafter\todo\mynoptions}
\usepackage{amssymb} 
\usepackage{mathrsfs} 
\usepackage{amsmath} 
\usepackage{amsthm} 
\usepackage{graphicx}
\usepackage{tikz}
\usepackage{mathtools}
\usepackage[export]{adjustbox}
\usetikzlibrary{shapes,snakes,tikzmark,fit}
\usepackage{multirow}

\makeatletter
\define@key{Gin}{Trim}
            {\let\Gin@viewport@code\Gin@trim\expandafter\Gread@parse@vp#1 \\}
\makeatother
\newcommand{\trimValues}{580 5 550 25}
\newcommand{\trimValuesConfEll}{350 5 350 25}
\newcommand{\trimValuesWithLegend}{210 5 240 25}
\newcommand{\trimValuesNoLegend}{210 5 920 25}

\newtheorem{corr}{Corollary}
\newtheorem{thm}{Theorem}

\newtheorem{lem}{Lemma}
\newtheorem{rem}{Remark}
\newtheorem{assum}{Assumption}
\newtheorem{prop}{Proposition}
\newtheorem{prob}{Problem}

\newenvironment{customthm}[1]
  {\innercustomthm}
  {\endinnercustomthm}



\usepackage{xspace}
\usepackage{amsmath,amssymb,amsfonts}



\newcommand{\bs}[1]{\ensuremath{\boldsymbol{#1}}}



\newcommand{\bp}{\ensuremath{\bs p}\xspace}

\newcommand{\bfs}{\ensuremath{\bs s}\xspace}

\newcommand{\bv}{\ensuremath{\bs v}\xspace}
\newcommand{\bw}{\ensuremath{\bs w}\xspace}
\newcommand{\bx}{\ensuremath{\bs x}\xspace}
\newcommand{\by}{\ensuremath{\bs y}\xspace}


\newcommand{\bW}{\ensuremath{\bs W}\xspace}




\newcommand{\bth}{\ensuremath{\bs \theta}\xspace}

\newcommand{\blam}{\ensuremath{\bs \lambda}\xspace}

\newcommand{\bsig}{\ensuremath{\bs \sigma}\xspace}

\newcommand{\bom}{\ensuremath{\bs \omega}\xspace}

















\newcommand{\pci}[1]{\ensuremath{\bs{p}_{\boldsymbol{c}_{i}}}\xspace}





















\let\liminf\relax
\let\limsup\relax
\DeclareMathOperator*\liminf{\lim\inf}
\DeclareMathOperator*\limsup{\lim\sup}
\newcommand{\Ball}{\mathrm{Ball}}
\newcommand{\Mybox}{\mathrm{Box}}
\newcommand{\ProjL}{P_{\mathcal{L}}}
\newcommand{\Linner}{\mathcal{L}_\mathrm{inner}}
\newcommand{\Louter}{\mathcal{L}_\mathrm{outer}}
\newcommand{\ymax}{ \overline{y}_\mathrm{max}}
\newcommand{\yshift}{ \overline{y}_\mathrm{shift}}
\newcommand{\Prob}{\mathbb{P}}
\newcommand{\Utau}{\overline{U}_{\tau-1}}
\newcommand{\UN}{\overline{U}_{N-1}}
\newcommand{\bwtau}{\overline{\bw}_{\tau-1}}
\newcommand{\Exp}{\mathbb{E}}
\newcommand{\Projq}{\mathrm{Proj}_{\mathcal{Q}}}

\newcommand{\ProbOvBwtauNoTail}{\Prob_{\overline{\bw}}^{\tau-1}}
\newcommand{\ProbOvBwNNoTail}{\Prob_{\overline{\bw}}^{N-1}}

\newcommand{\hobs}{h_{\mathrm{obs}}}
\newcommand{\xunpert}{\overline{x}_{\mathrm{nodist}}(\tau;\overline{\lambda}_0, \overline{\Lambda}_{\tau-1}, \overline{U}_{\tau-1})}

\newcommand{\xunpertNoTail}{\overline{x}_{\mathrm{nodist}}(\tau;\cdot)}
\newcommand{\ProbbWtau}{\Prob_{\bW}^{\tau}}
\newcommand{\psibWtau}{\psi_{\bW_\tau}}
\newcommand{\psibWtauFull}{\psi_{\bW_\tau}(\overline{y};\tau, \overline{\lambda}_0, \overline{\Lambda}_{\tau-1})}
\newcommand{\psibWtauNoTail}{\psi_{\bW_\tau}(\overline{y};\tau, \cdot)}

\newcommand{\ProbbxLambda}{\Prob_{\bx}^{\tau,\overline{x}_0,\overline{\lambda}_0,\overline{\Lambda}_{\tau-1}, \Utau}}
\newcommand{\psibxzLambda}{\psi_{\bx}(\overline{z};\tau,\overline{x}_0,\overline{\lambda}_0,\overline{\Lambda}_{\tau-1},\Utau)}

\newcommand{\FSRsetDMSP}{\mathrm{FSRset}_S(\tau, \bar{s}_0,\overline{\lambda}_0,\Utau)}

\newcommand{\FSRsetDPV}{\mathrm{FSRset}_P(\tau, \bar{x}_0,\overline{\lambda}_0,\overline{\Lambda}_{\tau-1},\Utau)}
\newcommand{\FSRsetDPVNoTail}{\mathrm{FSRset}_P(\tau;\cdot)}
\newcommand{\Llambdatau}{L_{\lambda,\tau}}
\newcommand{\LlambdatauMinusOne}{L_{\lambda,\tau-1}}
\newcommand{\FSRsetDPVLlambda}{\mathrm{FSRset}_P(\tau, \bar{x}_0,\overline{\lambda}_0,\LlambdatauMinusOne( \overline{q}_{\tau-1}),\Utau)}
\newcommand{\Probbx}{\Prob_{\bx}^{\tau,\overline{x}_0,\overline{\lambda}_0,\LlambdatauMinusOne(\overline{q}_{\tau-1}), \Utau}}
\newcommand{\ProbbxDMSPAS}{\Prob_{\bx}^{\tau,\overline{x}_0,\overline{\lambda}_0,\LlambdatauMinusOne(\overline{q}_{\tau-1}),\Utau}}
\newcommand{\ProbbxNoTail}{\Prob_{\bx}^{\tau}}
\newcommand{\ExpbxNoTail}{\Exp_{\bx}^{\tau}}

\newcommand{\Probby}{\Prob_{\by}}

\newcommand{\psibxzNotail}{\psi_{\bx}(\overline{z};\tau,\cdot)}

\newcommand{\psibxnoarg}{\psi_{\bx}}
\newcommand{\Probbfs}{\Prob_{\bfs}^{\tau, \overline{s}_0, \overline{\lambda}_0, \Utau}}
\newcommand{\ProbbfsNoU}{\Prob_{\bfs}^{\tau, \overline{s}_0, \overline{\lambda}_0}}

\newcommand{\occupDPV}{\phi_{\bx}(\overline{y};\tau,\overline{x}_0,\overline{\lambda}_0,\overline{\Lambda}_{\tau-1},\Utau,h_\mathrm{obs})}
\newcommand{\occupDPVLlambda}{\phi_{\bx}(\overline{y};\tau,\overline{x}_0,\overline{\lambda}_0,\LlambdatauMinusOne( \overline{q}_{\tau-1}),\Utau,h_\mathrm{obs})}
\newcommand{\occupDPVymax}{\phi_{\bx}(\ymax;\tau,\cdot)}
\newcommand{\occupDPVxunpert}{\phi_{\bx}(\xunpertNoTail;\tau,\cdot)}
\newcommand{\occupDPVshift}{\phi_{\bx}(\xunpertNoTail+\yshift;\tau,\cdot)}

\newcommand{\occupDPVinitx}{\phi_{\bx}(\overline{y};\tau,\overline{x}_0,\cdot)}

\newcommand{\occupDPVinitxZero}{\phi_{\bx}\left(\overline{y} - \overline{x}_{\mathrm{nodist}}(\tau);\tau,\overline{0}\right)}

\newcommand{\occupDPVNoTail}{\phi_{\bx}(\overline{y};\tau,\cdot)}
\newcommand{\occupDPVNoTailSeq}{\phi_{\bx}(\overline{y}_i;\tau,\cdot)}
\newcommand{\occupDPVnothing}{\phi_{\bx}}
\newcommand{\occupDPVNoTailbeta}{\phi_{\bx}(\overline{y};\tau,\overline{q}_{\tau-1},\cdot)}
\newcommand{\occupDPVNoTailbd}{\phi_{\bx}( \overline{y}+ b\overline{d};\tau,\cdot)}
\newcommand{\occupDPVNoTailbdc}{\phi_{\bx}( \overline{y}+ b\overline{d}_c;\tau,\cdot)}
\newcommand{\occupDPVNoTailbidc}{\phi_{\bx}( \overline{y}+ b_i\overline{d}_c;\tau,\cdot)}
\newcommand{\occupDMSP}{\phi_{\bfs}(\overline{y};\tau,\overline{s}_0,\overline{\lambda}_0,\Utau,h_\mathrm{obs})}
\newcommand{\occupDMSPNoTail}{\phi_{\bfs}(\overline{y};\tau,\cdot)}
\newcommand{\occupDMSPnothing}{\phi_{\bfs}}
\newcommand{\ASDPV}{\mathrm{PrOccupySet}_P(\alpha;\tau, \overline{x}_0,\overline{\lambda}_0,\overline{\Lambda}_{\tau-1},\Utau,h_\mathrm{obs})}
\newcommand{\ASDPVinitx}{\mathrm{PrOccupySet}_P(\alpha;\tau, \overline{x}_0,\cdot)}
\newcommand{\ASDPVinitxZero}{\mathrm{PrOccupySet}_P(\alpha;\tau, \overline{0},\cdot)}
\newcommand{\ASDPVnothing}{\mathrm{PrOccupySet}_P}
\newcommand{\alphaSqtau}{\alpha_S( \overline{q}_\tau)}
\newcommand{\ASDPValpha}{\mathrm{PrOccupySet}_P(\alphaSqtau;\tau,\cdot,\cdot, \LlambdatauMinusOne(\overline{q}_{\tau-1}),\cdot)}
\newcommand{\ASDPValphaNoTail}{\mathrm{PrOccupySet}_P(\alphaSqtau;\tau,\cdot)}
\newcommand{\ASDPVNoTail}{\mathrm{PrOccupySet}_P(\alpha;\tau,\cdot)}
\newcommand{\ASDPVNoTailunder}{{\mathrm{UnPrOccupySet}_P}(\alpha;\tau,\cdot)}
\newcommand{\ASDPVNoTailover}{{\mathrm{OvPrOccupySet}_P}(\alpha;\tau,\cdot)}
\newcommand{\ASDPVNoTailPrime}{\mathrm{PrOccupySet}_P(\alpha';\tau,\cdot)}

\newcommand{\ASDPVoverMink}{\mathrm{PrOccupySet}_P^+(\alpha;\tau,\cdot)}
\newcommand{\OvASDPVoverMink}{\mathrm{OvPrOccupySet}_P^+(\alpha;\tau,\cdot)}
\newcommand{\ASDMSP}{\mathrm{PrOccupySet}_S(\alpha;\tau, \overline{s}_0,\overline{\lambda}_0,\Utau,h_\mathrm{obs})}
\newcommand{\ASDMSPnothing}{\mathrm{PrOccupySet}_S}
\newcommand{\ASDMSPNoTail}{\mathrm{PrOccupySet}_S(\alpha;\tau, \cdot)}

\newcommand{\ProbbsigTau}{\Prob_{\bsig}^{\tau,q_0}}
\newcommand{\ProbOvbsigTau}{\Prob_{\overline{\bsig}}^{\tau,q_0}}
\newcommand{\ProbOvbsigTauNoTail}{\Prob_{\overline{\bsig}}^{\tau}}
\newcommand{\ProbSigOnly}{\Prob_{\bsig}^t}
\newcommand{\ProbOvbsigTauMOne}{\Prob_{\overline{\bsig}}^{\tau-1,q_0}}

\newcommand{\sigmaAlg}{\mathscr{S}}

\newcommand{\GammaBi}{\Gamma_{b}}
\newcommand{\tausw}{\tau_{\mathrm{s}}}
\newcommand{\QtauProj}{\Gamma_{\overline{q}}(\tau, \mathcal{G}_S)}
\newcommand{\QtauProjObs}{\Gamma_{\overline{q}}\left(\tau, \cup_{q\in \mathcal{Q}} (q, \mathcal{X})\right)}
\newcommand{\QtauMinus}{\mathcal{G}_{\mathcal{Q}^\tau}}
\newcommand{\DPV}{DPV}
\newcommand{\DMSP}{DMSP}

\newcommand{\Occupfunstext}{Probabilistic occupancy functions}
\newcommand{\occupfuntext}{probabilistic occupancy function}

\newcommand{\avoidsettextNoAlpha}{probabilistic occupied sets}
\newcommand{\avoidsetstext}{$\alpha$-probabilistic occupied sets}
\newcommand{\avoidsettext}{$\alpha$-probabilistic occupied set}

\begin{document}
\title{Probabilistic Occupancy Function and Sets Using Forward Stochastic Reachability for Rigid-Body Dynamic Obstacles}
\author{Abraham~P.~Vinod and Meeko~M.~K.~Oishi
\thanks{This material is based upon work supported by
        the National Science Foundation. Vinod and Oishi are supported under
        Grant Number CMMI-1254990 (CAREER, Oishi), CNS-1329878, and IIS-1528047. Any opinions, findings, and conclusions or recommendations expressed in this material are those of the authors and do not necessarily reflect the views of the National Science Foundation.}
\thanks{A. Vinod and M. Oishi (corresponding author) are with Electrical and Computer Engineering, University of New Mexico, Albuquerque, NM 87131 USA; e-mail: \texttt{aby.vinod@gmail.com}, \texttt{oishi@unm.edu}}
}

\maketitle

\begin{abstract}
We present theory and algorithms for the computation of probability-weighted ``keep-out'' sets to assure probabilistically safe navigation in the presence of multiple rigid body obstacles with stochastic dynamics. Our forward stochastic reachability-based approach characterizes the stochasticity of the future obstacle states in a grid-free and recursion-free manner, using Fourier transforms and computational geometry. We consider discrete-time Markovian switched systems with affine parameter-varying stochastic subsystems (DMSP) as the obstacle dynamics, which includes Markov jump affine systems and discrete-time affine parameter-varying stochastic systems (DPV). We define a probabilistic occupancy function, to describe the probability that a given state is occupied by a rigid body obstacle with stochastic dynamics at a given time; keep-out sets are the super-level sets of this occupancy function. We provide sufficient conditions that ensure convexity and compactness of these keep-out sets for DPV obstacle dynamics. We also propose two computationally efficient algorithms to overapproximate the keep-out sets --- a tight polytopic approximation using projections, and an overapproximation using Minkowski sum. For DMSP obstacle dynamics, we compute a union of convex and compact sets that covers the potentially non-convex keep-out set. Numerical simulations show the efficacy of the proposed algorithms for a modified version of the classical unicycle dynamics, modeled as a DMSP. 
\end{abstract}

\begin{keywords} 
    Stochastic reachability; convex optimization; obstacle avoidance; stochastic optimal control; robotic navigation 
\end{keywords}

\section{Introduction}

Stochastic motion planning problems~\cite{thrun_probabilistic_2005, lavalle2006planning} require planning a probabilistically safe path for the navigation of a controllable robot in an environment with multiple stochastically moving rigid body obstacles under bounded control authority.
Most approaches 1) quantify the collision probability, 2) characterize keep-out regions, the set of states that should be avoided to ensure that the collision probability is below a desired threshold, and 3) generate dynamically-feasible trajectories given a set of keep-out regions, to achieve desired properties like minimizing a performance objective, staying within a safe region, and/or reaching a goal.
The first two steps are typically done together using either grid-based approaches~\cite{lavalle2006planning, thrun_probabilistic_2005,elfes1989using, ichter2017real, lambert2008fast, fulgenzi_dynamic_2007, bautin2010inevitable}, chance constraints~\cite{blackmore2011chance,masahiro_ono_iterative_2008,ono2015chance,luders_chance_2010,aoude2013probabilistically,du2011probabilistic}, or reachability~\cite{althoff2009model,SummersHSCC2011, HomChaudhuriACC2017,wu2012guaranteed,malone2017hybrid,chiang2015aggressive}.
The last step may be performed using existing motion planning approaches, such as sampling-based approaches like RRT$^{\ast}$ and PRM$^{\ast}$~\cite{karaman2011sampling} or optimization-based approaches like mixed-integer linear programming \cite{schouwenaars2002safe}, mixed-integer quadratic programming \cite{mellinger2012mixed}, and successive convexification~\cite{mao2017successive}. 
This paper extends our previous work on obstacle avoidance~\cite{HomChaudhuriACC2017} and forward stochastic reachability~\cite{VinodHSCC2017} to formulate a grid-free, recursion-free, and sampling-free approach to quantify the collision probability and characterize the keep-out sets.

Grid-based approaches query an occupancy grid~\cite{elfes1989using,thrun_probabilistic_2005,lavalle2006planning} to assess the collision probability.
The occupancy grid may be updated using probabilistic velocity obstacles~\cite{fulgenzi_dynamic_2007}, probabilistic inevitable collision state~\cite{bautin2010inevitable}, or by sampling~\cite{ichter2017real, lambert2008fast}.
Sampling-based approaches (Monte-Carlo simulations) are popular since they can accommodate rigid body obstacles with nonlinear dynamics.
This versatility comes at a high computational cost when high-quality approximations are desired~\cite{lambert2008fast, calafiore2006scenario}, although importance sampling and the parallelization has improved the computational tractability~\cite{ichter2017real}. 

Chance constraints have been used to plan trajectories for a Gaussian disturbance-perturbed robot navigating an environment with static polytopic obstacles~\cite{blackmore2011chance,masahiro_ono_iterative_2008,ono2015chance}, and extended to obstacles that translate (no rotation) according to a Gaussian process~\cite{luders_chance_2010,aoude2013probabilistically}. 
These approaches replace the probabilistic safety constraints with tighter deterministic constraints that the motion planner must satisfy, and hence are conservative. 
The probabilistic collision avoidance constraint in~\cite{du2011probabilistic} for spherical rigid body robot and the obstacles with Gaussian disturbances was formulated as an integral, and an approximation of the keep-out region was provided.

The third approach is to use backward stochastic reachability via dynamic programming, to compute the inevitable collision states~\cite{malone2017hybrid,chiang2015aggressive}. 
However, these approaches suffer from the curse of dimensionality~\cite{Abate2008,SummersHSCC2011}.
Researchers have proposed particle filters~\cite{blackmore2011chance, lesser_stochastic_2013} and approximate dynamic programming~\cite{kariotoglou2015multi} to improve the computational tractability.
For stochastic rigid body obstacles with a discrete disturbance, a convolution-based formulation was proposed to quantify the collision probability~\cite{HomChaudhuriACC2017}.


The main contributions of this paper are:
\begin{enumerate}
    \item forward stochastic reachability tools to analyze the stochasticity of the state corresponding to \DPV{} and \DMSP{} dynamics at a future time of interest,
    \item definition of the \emph{\occupfuntext{}} via forward stochastic reachability to describe the probability that a rigid body obstacle with \DPV{}/\DMSP{} dynamics occupies a given state at a given time of interest, 
    \item sufficient conditions for the closedness, boundedness, and compactness of the \emph{\avoidsettext{}} ($\alpha\in[0,1]$) and the upper semi-continuity of the \occupfuntext{} for a rigid body obstacle with \DMSP{} dynamics (subsumes \DPV{} dynamics), 
    \item sufficient conditions for the convexity of the \avoidsettext{} and the log-concavity of the \occupfuntext{} for a rigid body obstacle with \DPV{} dynamics, and
    \item two computationally efficient algorithms to compute (tight) approximations of the \avoidsettext{} for a rigid body obstacle with \DPV{} dynamics, that overapproximate the \avoidsettext{} when the dynamics are \DMSP{}.
\end{enumerate}

The paper is organized as follows: Section II provides the problem formation as well as mathematical preliminaries.  Section III describes the system dynamics.  Section IV describes the use of forward stochastic reachability to characterize probabilistic occupancy functions and the $\alpha$-probabilistic occupancy set, and their properties are described in Section V.  Section VI provides two algorithms to compute tight overapproximations, and they are demonstrated on a unicycle example in Section VII.  Conclusions are provided in Section VIII.


\section{Problem statements and preliminaries} 
\label{sec:prem}

We denote a discrete-time time interval by $ \mathbb{N}_{[a,b]}$ for $a,b\in \mathbb{N}$ and $a\leq b$, which inclusively enumerates all integers in between (and including) $a$ and $b$, random variables/vectors with bold case, non-random vectors with an overline, and concatenated random variables/vectors as bold case with overline or with bold uppercase letters.
The indicator function of a non-empty set $ \mathcal{G}$ is denoted by $1_{\mathcal{G}}(\overline{y})$, such that $1_{\mathcal{G}}(\overline{y})=1$ if $\overline{y}\in \mathcal{G}$ and is zero otherwise.
We denote $ I_n\in \mathbb{R}^n$ as the identity matrix, $\overline{0}$ as the zero vector (origin) in $ \mathbb{R}^n$, the Minkowski sum as $\oplus$, the Cartesian product of the set $ \mathcal{G}$ with itself $k\in \mathbb{N}$ times as $ \mathcal{G}^k$, the cardinality of $ \mathcal{G}$ with $ \vert \mathcal{G}\vert$, and the Lebesgue measure of a measurable set $ \mathcal{G}$ by $ \mathrm{m}( \mathcal{G})$.
Given $a,r>0$ and a vector $ \overline{c}\in \mathbb{R}^n$, we denote the $ \overline{c}$-centered Euclidean ball of radius $r$ by $ \Ball( \overline{c}, r) = \{ \overline{y} \in \mathbb{R}^n: {\Vert \overline{y} - \overline{c} \Vert}_2 \leq r\}$ and  the $ \overline{c}$-centered axis-aligned box of side $2a$ as $\Mybox( \overline{c}, a) = \{ \overline{y}\in \mathbb{R}^n: {\Vert \overline{y} - \overline{c} \Vert}_\infty \leq a\}$.

\subsection{Problem statements}

We will first define \DMSP{} dynamics, establish its connections with existing system formulations like DTSHS and Markov jump affine systems, describe \DPV{} dynamics as a special case of \DMSP{} dynamics, and show that a modified version of the classical unicycle can be modeled as a \DMSP{} system.

Next, we will develop the forward stochastic reachability tools~\cite{VinodHSCC2017} to analyze \DPV{}/\DMSP{} dynamics.
\begin{prob}
    Characterize the forward stochastic reachability for \DPV{} dynamics, i.e., construct analytical expressions for
    \begin{enumerate}
        \item the smallest closed set that covers all the reachable states
       (i.e., the forward stochastic reach set).
        \item the probability measure over the forward stochastic reach set (i.e.,
       the forward stochastic reach probability measure) 
    \end{enumerate}
    \label{prob_st:FSR_DPV}
\end{prob}
\begin{prob}
    Characterize the forward stochastic reachability for \DMSP{} dynamics using the forward stochastic reachability for a collection of appropriately defined \DPV{} dynamics.
    \label{prob_st:FSR_DMSP}
\end{prob}
For \DPV{} dynamics, we will also define a forward stochastic reach probability density, and provide sufficient conditions under which the forward stochastic reach set is convex, and the forward stochastic reach probability density and measure are log-concave. 

For rigid body obstacles with \DPV{}/\DMSP{} dynamics, we will use forward stochastic reachability to define a \occupfuntext{} and the \avoidsettext{} for a given time of interest, as done in~\cite{HomChaudhuriACC2017}.
We will seek grid-free, recursion-free, and computationally efficient algorithms to compute the \avoidsettext{} by exploiting known results to approximate convex and compact sets.
\begin{prob}
    Provide algorithms to approximate the \avoidsettext{}  $(\alpha\in[0,1])$ for a rigid body obstacle with \DPV{}/\DMSP{} dynamics:\label{prob_st:occupDPV_algo}
    \begin{enumerate}
        \item projection-based tight polytopic approximation, and 
        \item Minkowski sum-based overapproximation.
    \end{enumerate}
\end{prob}
\begin{customthm}{3.a}
    Provide sufficient conditions under which the \avoidsettext{} of a rigid body obstacle with \DPV{} dynamics is convex and compact.\label{prob_st:occupDPV_cvx_cmpt}
\end{customthm}
\begin{customthm}{3.b}
    Show that the \avoidsettext{} for a rigid body obstacle with \DMSP{} dynamics is covered by a union of convex and compact sets, the \avoidsettextNoAlpha{} of a collection of appropriately defined DPV dynamics.\label{prob_st:occupDMSP_algo}
\end{customthm}
For a rigid body obstacle with \DMSP{} dynamics, we will provide sufficient conditions under which the \occupfuntext{} is u.s.c, and the \avoidsettext{} is closed and bounded.
These sufficient conditions hold for \DPV{} dynamics as well.
In addition, we will also provide sufficient conditions under which the \occupfuntext{} is log-concave and the \avoidsettext{} is convex when the dynamics are \DPV{}.


\subsection{Real analysis}
\label{sub:real}

A function $f_u: \mathbb{R}^n \rightarrow \mathbb{R}$ is upper semi-continuous (u.s.c.) if its superlevel sets $ \{ x\in \mathbb{R}^n : f_u(x) \geq c\}$ are closed for every $c\in \mathbb{R}$~\cite[Defn. 7.13]{bertsekas_stochastic_1978}. 
Alternatively, $f_u( \overline{y})$ is u.s.c. if for every sequence $ \overline{y}_i \rightarrow \overline{y}$, we have $\limsup_{i \rightarrow \infty} f_u( \overline{y}_i) \leq f_u ( \overline{y})$~\cite[Lem. 7.13b]{bertsekas_stochastic_1978}.
For every u.s.c function $f_u(\cdot)$, $f_l( \overline{y})\triangleq -f_u( \overline{y})$ is lower semi-continuous (l.s.c.), i.e., $\liminf_{i \rightarrow \infty} (f_l( \overline{y}_i)) \geq f_l ( \overline{y})$ for every sequence $ \overline{y}_i \rightarrow \overline{y}$~\cite[Lem. 7.13a]{bertsekas_stochastic_1978}.

A function $f: \mathbb{R}^n \rightarrow \mathbb{R}$ is log-concave if $\log(f)$ is concave, and is quasiconcave if the sets $\{ \overline{y}\in \mathbb{R}^n : f( \overline{y}) \geq \beta\}$ are convex for all $\beta\in \mathbb{R}$~\cite[Sec. 3.4 and 3.5]{boyd_convex_2004}. 
By Heine-Borel theorem~\cite[Thm. 12.5.7]{TaoAnalysisII}, a set in $ \mathbb{R}^n$ is compact if and only if it is closed and bounded.


\subsection{Probability theory and Fourier transforms}
\label{sub:FT}

A random vector $\by$ is a measurable transformation defined in the probability space $(\mathcal{Y},\sigmaAlg(\mathcal{Y}), \Probby)$ with sample space $\mathcal{Y}\subseteq \mathbb{R}^p$, sigma-algebra $\sigmaAlg(\mathcal{Y})$, and probability measure $\Prob$ over $ \sigmaAlg( \mathcal{Y})$.
When $\by$ is absolutely continuous, $\by$ has a probability density function (PDF) $\psi_{\by}$ such that given $ \mathcal{G}\in \sigmaAlg( \mathcal{Y})$, $\Prob_{\by}\{\by\in \mathcal{G}\}=\int_{\mathcal{G}}\psi_{\by}(\overline{y})d\overline{y}$ where $\overline{y}\in \mathbb{R}^p$ and $\psi_{\by}$ is a non-negative Borel measurable function with $\int_{ \mathcal{Y}}\psi_{\by}(\overline{y})d\overline{y} = 1$~\cite[Ch. 1]{ChowProbability1997}.
The support of $\by$, denoted by $ \mathrm{supp}(\by)$, is the smallest closed subset of $\mathcal{Y}$ with probability of occurrence one.
Equivalently, from~\cite[Defn. A.5]{dharmadhikari1988unimodality},
\begin{align}
    \mathrm{supp}(\by)&=\left\{ \overline{y}\in \mathcal{Y}: \forall r>0,\ \Probby\{\by\in \Ball( \overline{y},r)\}>0\right\}. \label{eq:supp_defn}
\end{align}

From~\cite[Sec. 2.1]{dharmadhikari1988unimodality}, $\Probby$ is centrally symmetric if $\Probby\{\by\in \mathcal{G}\} = \Prob\{\by\in-\mathcal{G}\}$ for every $ \mathcal{G}\in \sigmaAlg( \mathcal{Y})$.
From~\cite[Sec. 2]{prekopa_logarithmic_1980}, a probability measure $\Probby$ is log-concave if for all convex $\mathcal{G}_A,\mathcal{G}_B\in\sigmaAlg (\mathcal{Y})$ and $\zeta\in[0,1]$, $\Probby\{ \by \in (\zeta \mathcal{G}_A\oplus (1-\zeta) \mathcal{G}_B)\} \geq {\Probby\{\by\in \mathcal{G}_A\}}^\zeta{\Probby\{\by\in \mathcal{G}_B\}}^{1-\zeta}$.
Given a convex borel set $ \mathcal{G}\in \sigmaAlg( \mathcal{Y})$ and a log-concave probability measure $\Probby$, the following function $h: \mathcal{Y} \rightarrow \mathbb{R}$ is log-concave,
\begin{align}
    h( \overline{c}) = \Probby\{ \by \in \{\overline{c}\} \oplus \mathcal{G}\}\label{eq:log-concaveProbSetMove}.
\end{align}

For some matrix $H\in \mathbb{R}^{n\times p}$, the stochasticity of the random vector $\bx = H \by$ may be characterized using Fourier transforms.
The \emph{characteristic function} (CF) of $\by\in \mathbb{R}^{p}$ is 
\begin{align} 
    \Psi_{\by}(\overline{\gamma})= \Exp_{\by}\left[\mathrm{exp}\left({j\overline{\gamma}^\top\by}\right)\right] &=\int_{\mathbb{R}^p}e^{j\overline{\gamma}^\top\overline{z}} \psi_{\by}(\overline{z})d\overline{z}= \mathscr{F}\left\{\psi_{\by}(\cdot)\right\}(-\overline{\gamma})\label{eq:cfun_def} 
\end{align} 
where $ \mathscr{F}\{\cdot\}$ denotes the Fourier transformation operator and $\overline{\gamma}\in \mathbb{R}^{p}$.
Since PDFs are absolutely integrable, every PDF has a unique CF~\cite[Pg. 382]{billingsley_probability_1995}.
The CF of the random vector $\bx$ is then given by
\begin{align}
    \Psi_{\bx}( \overline{\eta})&=\Psi_{\by}( H^\top \overline{\eta})\label{eq:Psibxfromby}
\end{align}
with Fourier variable $ \overline{\eta}\in \mathbb{R}^n$~\cite[Sec. 2.1]{VinodHSCC2017}. 
We can obtain the PDF of $\bx$ via inverse Fourier transform, provided $\Psi_{\bx}$ is absolutely integrable~\cite[Cor. 1.21]{SteinFourier1971}, square integrable~\cite[Thm. 2.4]{SteinFourier1971}, or Schwartz\footnote{Infinitely differentiable function on $ \mathbb{R}^p$ such that the function and its derivatives decrease rapidly~\cite[Sec. 6.2]{stein_fourier_2003}.}~\cite[Ch. 6, Thm. 2.4]{stein_fourier_2003},
\begin{align} 
    \psi_{\bx}(\overline{z})= \mathscr{F}^{-1}\left\{\Psi_{\bx}(\cdot)\right\}(-\overline{z}) &={\left(\frac{1}{2\pi}\right)}^n\int_{ \mathbb{R}^n}e^{-j\overline{\eta}^\top\overline{z}} \Psi_{\bx}(\overline{\eta})d\overline{\eta}.\label{eq:cfun_ift} 
\end{align} 
Here, $ \mathscr{F}^{-1}\{\cdot\}$ denotes the inverse Fourier transformation operator and $d\overline{\eta}$ is short for $d\eta_1d\eta_2\ldots d\eta_p$.
Alternatively, Levy's inversion theorem~\cite[Sec. 8.5, Thm. 1]{ChowProbability1997} may be used to compute $\int_{a_1}^{b_1}\ldots\int_{a_n}^{b_n} \psi_{\bx}( \overline{z})d \overline{z}$ for any $a_i<b_i \in \mathbb{R}$ with $i\in \mathbb{N}_{[1,n]}$ from its CF $\Psi_{\bx}$, the probability that $\bx$ lies in a hypercuboid.


\subsection{Properties of a rigid body}
\label{sub:rigid}

Let $ \mathcal{X} = \mathbb{R}^n$ denote the state space.
\begin{assum}
    The rigid body shape is a Borel set and has a non-zero Lebesgue measure.\label{assum:Borel}
\end{assum}
\begin{assum}
    The rigid bodies are only allowed to translate. \label{assum:obs}
\end{assum}
Assumption~\ref{assum:Borel} is typically satisfied by real-world problems since open and closed sets are Borel~\cite[Sec. 1.11]{RudinReal1987} and rigid body obstacle shapes are sets with positive ``volume''.
While Assumption~\ref{assum:obs} is common practice in motion planning problems~\cite[Sec. 4.3.2]{lavalle2006planning}~\cite{aoude2013probabilistically}, it excludes analysis of rigid body obstacles whose shape has a state-dependent orientation, for example, a unicycle obstacle with a non-centrally symmetric bounded shape that depends on the obstacle's heading.
However, by defining an overapproximative shape (that is Borel) that encompasses all attainable shapes, rigid body obstacles that do have state-dependent orientation can be accommodated. 
For the unicycle obstacle, we can use a large ball that contains the obstacle irrespective of the heading.

We then define the set of states ``occupied'' by the rigid body obstacle given the state of some point in the obstacle (say, center of mass) is $ \overline{c}\in \mathcal{X}$ as
\begin{align}
    \mathcal{O}( \overline{c}) &=\{\overline{z}\in \mathcal{X}: \hobs(\overline{z}-\overline{c})\geq 0\}\subseteq \mathcal{X}\label{eq:obs_rigidbody_defn}
\end{align}
using the zero super-level set of a known Borel-measurable function $\hobs: \mathcal{X} \rightarrow \mathbb R$.
For example, we define $\hobs(\overline{z}) = \frac{1}{2}-\|\overline{z}\|_{\infty}$ for an obstacle whose shape is an axis-aligned hyperbox $\Mybox( \overline{c}, 1)$ and define $\hobs(\overline{z}) = 1-\|\overline{z}\|_{2}$ for an obstacle whose shape is a unit sphere $\Ball( \overline{c}, 1)$.
\begin{lem}\label{lem:rigidBody}
    For an obstacle shape $ \mathcal{O}(\overline{y})$ with $ \overline{y}\in \mathcal{X}$, 
    {\renewcommand{\theenumi}{\alph{enumi}}
        \begin{enumerate}
            \item\label{lem:rigidBodyCenter} (translation invariance) $\mathcal{O}(\overline{y})= \{\overline{y}\} \oplus \mathcal{O}( \overline{0})$,
            \item\label{lem:rigidBody1} $-\mathcal{O}(-\overline{y}))=\{ \overline{z}\in \mathcal{X}: \overline{y}\in \mathcal{O}( \overline{z})\}$, and
            \item\label{lem:rigidBody2} $ 1_{(-\mathcal{O}(-\overline{y}))}( \overline{z})=1_{\mathcal{O}(\overline{0})}(\overline{y}-\overline{z})$.
        \end{enumerate}
    }
\end{lem}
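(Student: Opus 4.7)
The plan is to prove all three parts by unwinding the defining equation $\mathcal{O}(\overline{c})=\{\overline{z}\in\mathcal{X}:h_{\mathrm{obs}}(\overline{z}-\overline{c})\geq 0\}$ from~\eqref{eq:obs_rigidbody_defn}. Each claim reduces to a short chain of ``iff'' steps, so no deep machinery is needed; the only subtle point is being careful with the signs inside $-\mathcal{O}(-\overline{y})$, which denotes the reflection through the origin of the set $\mathcal{O}(-\overline{y})$.

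For part~(\ref{lem:rigidBodyCenter}), I would start from $\overline{z}\in\mathcal{O}(\overline{y})$, which by definition is equivalent to $h_{\mathrm{obs}}(\overline{z}-\overline{y})\geq 0$. Setting $\overline{w}=\overline{z}-\overline{y}$, this is equivalent to $\overline{w}\in\mathcal{O}(\overline{0})$ (applying the definition with $\overline{c}=\overline{0}$), i.e., $\overline{z}\in\{\overline{y}\}\oplus\mathcal{O}(\overline{0})$. This gives both inclusions at once, establishing the translation invariance.

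For part~(\ref{lem:rigidBody1}), I would unwrap the outer negation first: $\overline{z}\in -\mathcal{O}(-\overline{y})$ iff $-\overline{z}\in\mathcal{O}(-\overline{y})$ iff $h_{\mathrm{obs}}(-\overline{z}-(-\overline{y}))\geq 0$ iff $h_{\mathrm{obs}}(\overline{y}-\overline{z})\geq 0$. Reading the last inequality through the definition of $\mathcal{O}(\overline{z})$ (with $\overline{c}=\overline{z}$ and the point under test being $\overline{y}$) shows this is equivalent to $\overline{y}\in\mathcal{O}(\overline{z})$, which is exactly the defining condition for membership in the right-hand side.

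For part~(\ref{lem:rigidBody2}), I would combine the previous two parts: by part~(\ref{lem:rigidBody1}), $\overline{z}\in -\mathcal{O}(-\overline{y})$ iff $\overline{y}\in\mathcal{O}(\overline{z})$, and by part~(\ref{lem:rigidBodyCenter}) with the center taken to be $\overline{z}$, the latter is equivalent to $\overline{y}-\overline{z}\in\mathcal{O}(\overline{0})$. Translating this equivalence of set memberships into indicator functions yields $1_{-\mathcal{O}(-\overline{y})}(\overline{z})=1_{\mathcal{O}(\overline{0})}(\overline{y}-\overline{z})$, completing the lemma. The hard part, if any, is simply presenting the sign bookkeeping cleanly so the reader can verify that the chain of equivalences is correct in both directions.
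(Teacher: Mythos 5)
Your proposal is correct and follows essentially the same route as the paper's proof: part (a) by the change of variables $\overline{z}'=\overline{z}-\overline{y}$ in the defining inequality, and parts (b) and (c) by the same chain of equivalences $\overline{z}\in-\mathcal{O}(-\overline{y})\Leftrightarrow h_{\mathrm{obs}}(\overline{y}-\overline{z})\geq 0\Leftrightarrow\overline{y}\in\mathcal{O}(\overline{z})\Leftrightarrow\overline{y}-\overline{z}\in\mathcal{O}(\overline{0})$. The sign bookkeeping is handled correctly throughout.
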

\begin{proof}
 See Appendix~\ref{app:proof_lem_rigid_body}.
\end{proof}
Lemma~\ref{lem:rigidBody} provides some useful properties of the set $ \mathcal{O}(\overline{c})$ \emph{independent} of the geometric properties of the rigid body like closedness, convexity, and boundedness.
Moreover, Lemma~\ref{lem:rigidBody}\ref{lem:rigidBodyCenter} shows that it is sufficient to impose geometric restrictions only on $\mathcal{O}(\overline{0})$, due to Assumption~\ref{assum:obs}.

The obstacle shape $ \mathcal{O}( \overline{0})$ is centrally symmetric set if $\mathcal{O}( \overline{0}) = - \mathcal{O}( \overline{0})$~\cite[Sec. 2.1]{dharmadhikari1988unimodality}.
Here, $(-\mathcal{O}( \overline{c}))$ is the reflection of set $ \mathcal{O}( \overline{c})$ about origin,
\begin{align}
    -\mathcal{O}( \overline{c})&=\{\overline{z}\in \mathcal{X}: -\overline{z}\in \mathcal{O}(\overline{c})\}= - I_n \mathcal{O}( \overline{c}).\label{eq:obs_rigidbody_defn_reflect}
\end{align}

\subsection{Computation of tight polytopic approximations for arbitrary convex and compact sets}
\label{sub:tight_polytopic_approx}

Tight polytopic over/underapproximation of convex and closed sets have been well studied~\cite[Ex. 2.25]{boyd_convex_2004}~\cite[Ch. 2]{webster1994convexity}. 
Consider an arbitrary convex and closed set 
\begin{align}
    \mathcal{L} = \{ \overline{y}\in \mathbb{R}^n: f( \overline{y}) \geq \beta\} \label{eq:E_def}
\end{align}
for a known u.s.c and quasiconcave function $f: \mathbb{R}^n \rightarrow \mathbb{R}$ and known $\beta\in \mathbb{R}$.
Let the maxima of $f$ be larger than $\beta$, \emph{i.e.}, $ \mathcal{L}$ is non-empty.

Given $K>0$ points external to $ \mathcal{L}$, $ \overline{p}_i\in \mathbb{R}^n\setminus \mathcal{L},\ i\in \mathbb{N}_{[1,K]}$.
We project $ \overline{p}_i$ onto $ \mathcal{L}$ by solving \eqref{prob:projection_problem} for each $i\in \mathbb{N}_{[1,K]}$~\cite[Sec. 8.1]{boyd_convex_2004},
\begin{align}
     \begin{array}{rl}
         \underset{ \overline{y}\in  \mathbb{R}^n}{\mathrm{minimize}}& {\Vert \overline{y} - \overline{p}_i \Vert}_2\\
         \mbox{subject to}& f( \overline{y})\geq \beta
    \end{array} \label{prob:projection_problem}.
\end{align}
For each $p_i$, \eqref{prob:projection_problem} has a unique optimal solution, and we denote this projection point as $\ProjL( \overline{p}_i) \in \mathcal{L}$. 
We also associate a hyperplane \eqref{eq:hyperplane_defn} with each $p_i$,
\begin{align}
    \overline{a}_i^\top ( \overline{y} - \ProjL( \overline{p}_i))&\leq 0\mbox{ with } \overline{a}_i = \overline{p}_i - \ProjL( \overline{p}_i). \label{eq:hyperplane_defn}
\end{align}
Algorithm~\ref{algo:tight} solves \eqref{prob:projection_problem} for every $ \overline{p}_i$ to compute two polytopes $\Linner(K)$ and $\Louter(K)$,
\begin{align}
    \Linner(K)&= \mathrm{ConvexHull}( \ProjL(\overline{p}_1),\ldots, \ProjL(\overline{p}_K)) \label{eq:Linner_defn},\\
    \Louter(K)&= \cap_{i=1}^K\{ \overline{y}\in  \mathbb{R}^n: \overline{a}_i^\top( \overline{y} - \ProjL( \overline{p}_i))\leq 0\}.\label{eq:Louter_defn}
\end{align}
\begin{lem}\label{lem:algo_proof}
    For a convex, closed, and non-empty set $ \mathcal{L}$ and $K$ points $ \overline{p}_i\not\in \mathcal{L},\ i\in \mathbb{N}_{[1,K]}$, Algorithm~\ref{algo:tight} provides tight polytopic over- and underapproximation, i.e., $\Linner(K)\subseteq \mathcal{L}\subseteq\Louter(K)$.
\end{lem}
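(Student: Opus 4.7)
The plan is to establish the two set inclusions $\Linner(K)\subseteq \mathcal{L}$ and $\mathcal{L}\subseteq \Louter(K)$ separately, each via a short convex-analytic argument built on the well-posedness of the projection problem \eqref{prob:projection_problem}. Since $f$ is u.s.c.\ and quasiconcave and $\mathcal{L}$ is defined as its superlevel set \eqref{eq:E_def}, the set $\mathcal{L}$ is closed and convex (as already noted in Section~\ref{sub:real}); together with non-emptiness this guarantees that \eqref{prob:projection_problem} admits a unique minimizer $\ProjL(\overline{p}_i)\in \mathcal{L}$ for each $i$, so the objects $\Linner(K)$ and $\Louter(K)$ in \eqref{eq:Linner_defn}--\eqref{eq:Louter_defn} are well defined.

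For the inner approximation, I would argue that $\ProjL(\overline{p}_i)\in \mathcal{L}$ by feasibility of \eqref{prob:projection_problem}, so every vertex of $\Linner(K)$ lies in $\mathcal{L}$. Since $\mathcal{L}$ is convex, it contains the convex hull of these $K$ vertices, which is precisely $\Linner(K)$ by \eqref{eq:Linner_defn}.

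For the outer approximation, the key ingredient is the standard projection characterization for convex closed sets: for any $\overline{p}_i\notin \mathcal{L}$, the projection $\ProjL(\overline{p}_i)$ satisfies the variational inequality
\begin{align}
    (\overline{p}_i - \ProjL(\overline{p}_i))^\top(\overline{y} - \ProjL(\overline{p}_i)) &\leq 0 \quad \forall\, \overline{y}\in \mathcal{L},\nonumber
\end{align}
which is exactly the half-space defined in \eqref{eq:hyperplane_defn} with normal $\overline{a}_i = \overline{p}_i - \ProjL(\overline{p}_i)$. Hence every $\overline{y}\in \mathcal{L}$ satisfies all $K$ such inequalities simultaneously, and therefore belongs to the intersection $\Louter(K)$ from \eqref{eq:Louter_defn}.

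The main (minor) obstacle is supplying the variational inequality for $\ProjL$ in our setting, since the paper defines $\mathcal{L}$ through $f$ rather than directly assuming convexity and closedness; I would therefore include a brief justification that the quasiconcavity and upper semi-continuity of $f$ make $\mathcal{L}$ convex and closed, after which the classical projection-onto-convex-set argument applies verbatim. The word \emph{tight} is consistent with this construction because each hyperplane in \eqref{eq:hyperplane_defn} is a supporting hyperplane of $\mathcal{L}$ at $\ProjL(\overline{p}_i)$ and each vertex of $\Linner(K)$ lies on the boundary of $\mathcal{L}$; no additional proof obligation is needed for the inclusion statement itself.
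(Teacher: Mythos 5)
Your proposal is correct and follows essentially the same route as the paper's proof: existence and uniqueness of the projection onto the closed, convex, non-empty set $\mathcal{L}$, the projected points lying in $\mathcal{L}$ so that their convex hull $\Linner(K)$ is contained in $\mathcal{L}$, and the half-spaces in \eqref{eq:hyperplane_defn} being supporting half-spaces of $\mathcal{L}$ so that their intersection $\Louter(K)$ contains $\mathcal{L}$ (you make the variational inequality explicit where the paper cites textbook results). The only substantive difference is that the paper's proof additionally interprets ``tight'' as a convergence statement and shows the monotone sequences $\Linner(K)$ and $\Louter(K)$ converge to $\mathcal{L}$ as $K\rightarrow\infty$, whereas you justify ``tight'' only via the supporting-hyperplane property; this is acceptable for the formal inclusion claimed in the lemma.
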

\begin{proof}
    See Appendix~\ref{app:proof_algo_proof}.
\end{proof}

The computation of $ \overline{p}_i$ is easy when $ \mathcal{L}$ is bounded.
For some $ \overline{y}\in \mathcal{L}$, $\exists r>0$ such that $ \mathcal{L}\subseteq \Ball( \overline{y}, r)$.
We can now obtain the desired $K$ points that lie outside $ \mathcal{L}$ by sampling the surface of this ball, denoted by $\partial \Ball( \overline{y}, r)$.
For $n\in\{2,3\}$, we can uniformly discretize the boundary of a bounding circle/sphere respectively to obtain $ \overline{p}_i$.
For higher dimensions, we obtain $ \overline{p}_i$ by sampling an appropriately dimensioned standard normal distribution, and normalizing the samples to force it to lie on the surface of the bounding hypersphere~\cite{harman2010decompositional}. 
Computation of the bounding radius $r$ is equivalent to the computation of the diameter of a compact set~\cite[Sec. 2.2]{webster1994convexity}, or may be set to a sufficiently large value.

\begin{algorithm}
    \caption{Tight polyhedral approximations of $ \mathcal{L}$ \eqref{eq:E_def}}\label{algo:tight}
    \begin{algorithmic}[1]    
        \Require{u.s.c and quasiconcave $f$, $\beta\leq\max_{ \overline{y}\in \mathbb{R}^n} f( \overline{y})$,  $K>0$, points $ \overline{p}_i\not\in \mathcal{L}$ for $i\in \mathbb{N}_{[1,K]}$}
        \Ensure{$\Linner(K),\Louter(K)$ s.t. $\Linner(K)\subseteq \mathcal{L}\subseteq\Louter(K)$}     
        \State Solve \eqref{prob:projection_problem} for every $i\in \mathbb{N}_{[1,K]}$ to obtain $\ProjL( \overline{p}_i)$ 
        \State Compute $\Linner(K)$ using the convex hull as in \eqref{eq:Linner_defn} 
        \State Compute $\Louter(K)$ using the supporting hyperplanes as in \eqref{eq:Louter_defn}
  \end{algorithmic}
\end{algorithm}


\section{System definition}
\label{sec:sys}

\subsection{Discrete-time Markovian switched system with affine parameter-varying stochastic subsystems (DMSP)}
\label{sub:DMSP}

The \DMSP{} dynamics consists of affine parameter varying subsystems where the evolution of the parameter depends on the discrete mode of the system.
The subsystems are affine since the control input is assumed to follow a known\footnote{Allows incorporation of drift terms into the obstacle dynamics.} open-loop policy.
The discrete mode switches in a Markovian time-dependent switching.
Given a finite time horizon $N\in \mathbb{N}$, the \DMSP{} dynamics for $t\in \mathbb{N}_{[0,N-1]}$ are given by
\begin{subequations}
\begin{align}
    \bsig_{t+1}&= \mathscr{P}_{\bsig}(t,\bsig_t)\label{eq:sys_disc}\\
    \blam_{t+1}&=l_{\blam}(\bsig_{t+1},t, \blam_{t}) \label{eq:param_cts}\\
    \bx_{t+1}&=A(\blam_t)\bx_t+B(\blam_t)\overline{u}_t +
    F(\blam_t)\bw_t \label{eq:sys_cts}
\end{align}\label{eq:DMSP_sys}%
\end{subequations}
with discrete state $\bsig_t\in \mathcal{Q}$ (a finite subset of $\mathbb{R}$), parameter $\blam_t\in \mathcal{P}$ (a finite subset of $\mathbb{R}^l$), continuous state $\bx_t\in \mathcal{X} = \mathbb{R}^n$, $\bfs_t={[{\bsig_t}\ {\bx_t}^\top]}^\top\in \mathcal{S}= \mathcal{Q}\times \mathcal{X}$, input $\overline{u}_t\in \mathcal{U}\subseteq \mathbb{R}^m$, stochastic disturbance $\bw_t\in \mathcal{W}\subseteq \mathbb{R}^p$, appropriately defined parameter-varying discrete state dependent matrices $A,B,$ and $F$, time-dependent Markovian stochastic map $ \mathscr{P}_{\bsig}$, and (possibly nonlinear) known Borel-measurable function $l_{\blam}: \mathcal{Q}\times\mathbb{N}_{[0,N-1]}\times  \mathcal{P} \rightarrow \mathcal{P}$ that translates the discrete state to the parameter.
An example of a time-dependent Markovian switching law $ \mathscr{P}_{\bsig}$ is
\begin{align}
    \bsig_{t+1}&=\begin{cases}
        \begin{array}{ll}
            \mathscr{M}(\bsig_t)\ & t=k\tausw,\ k\in \mathbb{N}_{[1,N-1]} \\
            \bsig_t & \mbox{otherwise}
        \end{array}
    \end{cases}\label{eq:MarkovSwitchLaw}
\end{align}
with $ \mathscr{M}$ used to represent a time-invariant\footnote{The transition probabilities are stationary~\cite[Sec. 8]{billingsley_probability_1995}.} Markov chain defined on $ \mathcal{Q}$ with a transition probability matrix $M\in \mathbb{R}^{\vert \mathcal{Q} \vert \times \vert \mathcal{Q} \vert}$. 
Here, the discrete state switches every $\tausw\in \mathbb{N}\setminus\{0\}$ time steps with the switching governed by a Markov chain $ \mathscr{M}$.

\begin{figure}
\def\firstcircle{(0,0) circle (0.5cm)}
\def\secondcircle{(0:0cm) circle (1.2cm)}
\def\thirdcircle{(0:0cm) circle (1.9cm)}

\def\firstcircle{(0.4\linewidth,0) rectangle (6,2)}
\def\secondcircle{(0.4\linewidth,0) circle (2.5cm)}

\centering
\begin{tikzpicture}
    \node[align=center, text width = 10cm] (DPVtext) at (0.4\linewidth,0) {Discrete-time  stochastic parameter-varying affine systems (\DPV{}) $\equiv$ Discrete-time stochastic time-varying affine systems (DTV)};
    \node[draw, fit=(DPVtext), shape=rectangle, rounded corners] (DPVell) {};
    \node[align=center, text width = 7cm, above of=DPVtext, yshift=0.15cm] (MJAStext) {Markov jump affine systems};
    \node[align=center, text width = 7cm, above of=MJAStext] (DMSPtext) {Discrete-time Markovian switched systems with \DPV{} subsystems (\DMSP{})};
    \node[align=center, above of=DMSPtext] (DTSHStext) {Discrete-time stochastic hybrid systems (DTSHS)};
    \node[draw, rounded corners, fit=(MJAStext)(DPVell),shape=rectangle] (MJASrect) {};
    \node[draw, rounded corners, fit=(MJASrect)(DMSPtext),shape=rectangle] (DMSPrect) {};
    \node[draw, rounded corners, fit=(DMSPrect)(DTSHStext),shape=rectangle] (DTSHSrect) {};
\end{tikzpicture}
\caption{Relationship between various system models\label{fig:sys_type}}
\end{figure}
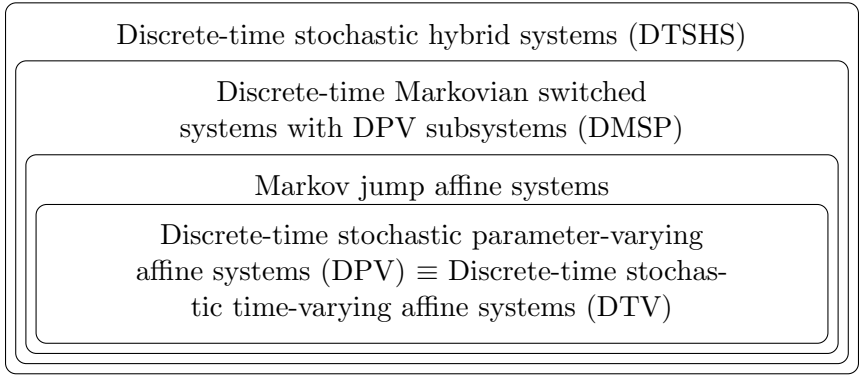

The \DMSP{} is a DTSHS with a discrete transition kernel that is independent of the continuous state (permitting only time-dependent switching) and has an identity reset map~\cite{Abate2008}.
The evolution of the \DMSP{} therefore may also be defined using the execution of the DTSHS~\cite[Def. 3]{Abate2008}. 
The \DMSP{} dynamics described in \eqref{eq:DMSP_sys} includes three special classes of systems: 
\begin{enumerate}
    \item \emph{discrete-time stochastic affine time-varying system} (DTV) with non-stochastic parameter $\overline{\lambda}_t=t$ and $\mathcal{Q}=\emptyset$,
    \item \emph{discrete-time stochastic affine parameter-varying system} (DPV), with non-stochastic parameter $\overline{\lambda}_t=t$ that evolves according to $l_{\lambda}(\cdot,t,\overline{\lambda}_t)=g_{\lambda}(t,\overline{\lambda}_{t})$ for some known function $g_{\lambda}: \mathbb{N}_{[0,N]}\times \mathcal{P} \rightarrow \mathcal{P}$, $g_{\lambda}(0,\cdot)= \overline{\lambda}_0$, and $\mathcal{Q}=\emptyset$~\cite{hoffmann_survey_2015, LimPhD}, and
    \item \emph{Markov jump affine system}, with $\blam_t=\bsig_t$ and $ \mathscr{P}_{\bsig}(t,\bsig_t)$ is of the form \eqref{eq:MarkovSwitchLaw} with $\tausw=1$~\cite{costa_discrete-time_2005}. 
\end{enumerate}
Note that the parameter $\blam_t$ defined in \eqref{eq:DMSP_sys} is non-stochastic in \DPV{} systems and DTV systems, and hence the parameter is denoted by $ \overline{\lambda}_t$.
Also, the class of \DPV{} systems and DTV systems defined here are equivalent.
Clearly, the DTV systems are \DPV{} systems with time as the parameter.
To express \DPV{} systems as DTV systems, we use $G(t): \mathbb{N}_{[0,N]} \rightarrow \mathcal{P}$ which denotes the known parameter trajectory corresponding to the parameter dynamics $\overline{\lambda}_{t+1}=g_\lambda(t,\overline{\lambda}_{t})$ initialized to  $ \overline{\lambda}_0$. 
Using $G$, one can convert a \DPV{} system into a DTV system at the expense of more complicated definitions for $A(\cdot)$, $B(\cdot)$, and $F(\cdot)$.
One can also write \eqref{eq:DMSP_sys} as a Markov affine jump system-like model by combining \eqref{eq:sys_disc} and \eqref{eq:param_cts} to define a single discrete state that evolves under a more sophisticated stochastic map $ \mathscr{P}_{\bsig}$. 
However, unlike Markov affine jump system models, \DMSP{} dynamics permit $\tausw\neq 1$.
The relationship between all these system formulations is described in Figure~\ref{fig:sys_type}.

\subsubsection{Stochasticity of the discrete state}

From \eqref{eq:sys_disc}, $\bsig_\tau$ is random variable for every $\tau\in \mathbb{N}_{[1,N]}$.
The Markovian stochastic map $ \mathscr{P}_{\bsig}$ characterizes the (possibly time-varying) discrete transition kernel $\ProbSigOnly\{\bsig_{t+1}\vert\bsig_t\}$ for $t\in \mathbb{N}_{[0,N-1]}$.
We denote the probability measure of the concatenated discrete state random vector $\overline{\bsig}_\tau={[\bsig_1\ \bsig_2\ \ldots\ \bsig_\tau]}^\top\in \mathcal{Q}^\tau$ as $\ProbOvbsigTau$. 
Given an initial discrete state $q_0\in \mathcal{Q}$ and $\ProbSigOnly\{\bsig_{t+1}\vert\bsig_t\}$, we define the probability of $ \overline{\bsig}_\tau$ being $ \overline{q}_\tau={[q_1\  q_2\ \ldots\ q_{\tau}]}^\top\in \mathcal{Q}^\tau$ as
\begin{align}
    \ProbOvbsigTau\{\overline{\bsig}_\tau=\overline{q}_\tau\}&=\prod_{t=0}^{\tau-1}\ProbSigOnly\{\bsig_{t+1}=q_{t+1}\vert\bsig_t=q_{t}\}.\label{eq:ProbBarSig}
\end{align} 
In contrast to the discrete transition kernel $T_q$ in~\cite{Abate2008}, $\ProbSigOnly$ is independent of the continuous state.
As is typically done for Markov chains, a probability measure $\ProbbsigTau$ and a probability mass function $\psi_{\bsig}[q;\tau,q_0]$ for the discrete state $\bsig_\tau$ at time $\tau$ may be defined using \eqref{eq:ProbBarSig}.

\subsubsection{Stochasticity of the parameter}

The definition \eqref{eq:param_cts} ensures that the parameter $\blam_t$ is a random vector, and the current discrete state influences the evolution of the current continuous state through the parameter.
Since $l_{\blam}$ is a Borel-measurable function, we can define\footnote{We omit the explicit mention of $q_0$ and $ \overline{\lambda}_0$ in $\Llambdatau$ for brevity.} a Borel-measurable map $\Llambdatau: \mathcal{Q}^t \rightarrow \mathcal{P}^t$ given $ \overline{\lambda}_0$, $q_0$ and $\overline{\bsig}_\tau$ for any $t\in \mathbb{N}$.
Given $ \overline{\bsig}_\tau$, we define $ \overline{\blam}_\tau = \Llambdatau( \overline{\bsig}_\tau) = {[ \blam_1^\top\ \blam_2^\top\ \ldots\ \blam_\tau^\top]}^\top\in \mathcal{P}^\tau$ as the concatenated random vector describing the parameter values taken in $ \mathbb{N}_{[1,\tau]}$.
The stochasticity of the random vector $\overline{\blam}_\tau$ is induced from $\ProbOvbsigTau$ through $\Llambdatau$ and $\overline{\lambda}_0$, and its realizations is denoted by $ \overline{\Lambda}_\tau = \Llambdatau( \overline{q}_\tau)$.

\subsubsection{Stochasticity of the continuous state}
The disturbance in the subsystem dynamics \eqref{eq:sys_cts} is an independent and identically distributed (IID) random process with known parameter-independent probability density $\psi_{\bw}$ and well-defined probability space $( \mathcal{W}, \sigmaAlg( \mathcal{W}), \Prob_{\bw})$.
Without loss of generality, we assume that $ \mathrm{supp}(\bw) = \mathcal{W} \subseteq \mathbb{R}^p$.
We denote the initial continuous state as $ \overline{x}_0\in \mathcal{X}$ and the known concatenated vector inputs as $\UN={[\overline{u}_{N-1}^\top\ \overline{u}_{N-2}^\top\ \ldots\ \overline{u}_1^\top\ \overline{u}_0^\top]}^\top\in \mathcal{U}^N$.
Note that due to the hierarchy present in \eqref{eq:DMSP_sys}, the continuous state dynamics at the first time step is influenced by the initial parameter value $ \overline{\lambda}_0$ and not the initial discrete state $ \overline{q}_0$.
From \eqref{eq:sys_cts}, $\bx_\tau$ is a random vector whose stochasticity depends on $\overline{x}_0$, $\overline{\lambda}_0$, $\overline{\Lambda}_{\tau-1} = \LlambdatauMinusOne( \overline{q}_{\tau-1})$, and $ \Utau$. 
We denote the probability measure associated with $\bx_t$ is denoted by $\Probbx$. 
Note that the continuous state is preserved under a discrete state switching, i.e., an identity reset map.

\begin{rem}\label{rem:causality}
    We parameterize $\Probbx$ by $\Utau$ even when the control policy is known until $N-1>\tau-1$ is known to emphasize causality.
\end{rem}

\subsubsection{Stochasticity of the state of \eqref{eq:DMSP_sys}}

As in~\cite[Sec. 3]{davis_piecewise-deterministic_1984}, the state space of \eqref{eq:DMSP_sys} is $ \mathcal{S} = \mathcal{Q}\times \mathcal{X}$, with the
 corresponding sigma-algebra $ \sigmaAlg ( \mathcal{S})$
\begin{align}
    \sigmaAlg(\mathcal{S})&= \sigmaAlg\left(\cup_{q\in \mathcal{Q}} \left\{(q, \mathcal{G}_X(q)): q\in \mathcal{Q}, \mathcal{G}_X(q) \in \sigmaAlg( \mathcal{X})\right\}\right).\label{eq:sigmaAlgDMSP}
\end{align}
Additionally, for any $q_1,q_2\in \mathcal{Q},\ q_1\neq q_2$ and  non-empty $\mathcal{G}_{X,1}(q_1),\ \mathcal{G}_{X,2}(q_2)\in \sigmaAlg( \mathcal{X})$, the events $(\{q_1\}\times \mathcal{G}_{X,1}(q_1))$ and $(\{q_2\}\times \mathcal{G}_{X,2}(q_2))$ are disjoint. 

We denote the set of discrete states present in $ \mathcal{G}_S$ as the projection of a set $ \mathcal{G}_S \in\sigmaAlg( \mathcal{S})$ onto $ \mathcal{Q}$,
\begin{align}
    \Projq( \mathcal{G}_S)&= \{q\in \mathcal{Q}: (q, \emptyset) \subseteq \mathcal{G}_S\}.\label{eq:ProjectionDefn}
\end{align}
\begin{lem}{\textbf{(Events in $ \sigmaAlg( \mathcal{S})$)}}
   Every Borel set  $ \mathcal{G}_S\in\sigmaAlg( \mathcal{S})$ can be written as
    \begin{align}
        \mathcal{G}_S&=\cup_{q\in \Projq( \mathcal{G}_S)} (\{q\}\times \mathcal{G}_X(q))\label{eq:GsDef}
    \end{align}
    for some $\mathcal{G}_X(q) \in \sigmaAlg(\mathcal{X})$ for every $q\in \Projq( \mathcal{G}_S)$. 
    Moreover, the representation \eqref{eq:GsDef} partitions $ \mathcal{G}_S$. \label{lem:BorelS}
\end{lem}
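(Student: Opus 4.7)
The plan is to use a standard ``good sets'' argument leveraging the fact that $\mathcal{Q}$ is finite and that $\sigmaAlg(\mathcal{S})$ is defined from the generating collection of rectangles $\{(q,\mathcal{G}_X(q)): q\in \mathcal{Q},\ \mathcal{G}_X(q)\in \sigmaAlg(\mathcal{X})\}$. The existence of the decomposition \eqref{eq:GsDef} will follow from showing that every $q$-section of a Borel set in $\sigmaAlg(\mathcal{S})$ is Borel in $\sigmaAlg(\mathcal{X})$, and the partition claim is then immediate from the disjointness of $\{q_1\}\times \mathcal{X}$ and $\{q_2\}\times \mathcal{X}$ for $q_1\neq q_2$.

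First, for each fixed $q\in \mathcal{Q}$ I would introduce the $q$-section operator $\sigma_q: 2^{\mathcal{S}}\to 2^{\mathcal{X}}$ defined by $\sigma_q(\mathcal{G}_S) = \{\overline{x}\in \mathcal{X}: (q,\overline{x})\in \mathcal{G}_S\}$, and I would set $\mathcal{G}_X(q) \triangleq \sigma_q(\mathcal{G}_S)$. The claim I need is that $\sigma_q(\mathcal{G}_S)\in\sigmaAlg(\mathcal{X})$ for every $\mathcal{G}_S\in\sigmaAlg(\mathcal{S})$. I would prove this by the monotone class / ``good sets'' principle: let
\[
\mathcal{C} = \{\mathcal{G}_S\in\sigmaAlg(\mathcal{S}): \sigma_q(\mathcal{G}_S)\in\sigmaAlg(\mathcal{X})\}.
\]
For any generator $\{q'\}\times \mathcal{G}_X(q')$, $\sigma_q(\{q'\}\times \mathcal{G}_X(q'))$ equals $\mathcal{G}_X(q')$ if $q=q'$ and $\emptyset$ otherwise, so every generator lies in $\mathcal{C}$. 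Since $\sigma_q$ commutes with complements ($\sigma_q(\mathcal{S}\setminus \mathcal{G}_S)=\mathcal{X}\setminus \sigma_q(\mathcal{G}_S)$) and with countable unions ($\sigma_q(\cup_i \mathcal{G}_{S,i})=\cup_i \sigma_q(\mathcal{G}_{S,i})$), the collection $\mathcal{C}$ is a $\sigma$-algebra containing the generators. Hence $\mathcal{C}=\sigmaAlg(\mathcal{S})$, which gives $\mathcal{G}_X(q)\in\sigmaAlg(\mathcal{X})$.

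Second, since $\mathcal{Q}$ is finite and $\mathcal{S}=\cup_{q\in \mathcal{Q}}(\{q\}\times \mathcal{X})$, every element $(q,\overline{x})\in \mathcal{G}_S$ satisfies $\overline{x}\in \mathcal{G}_X(q)$ by definition of $\sigma_q$, and conversely $\overline{x}\in \mathcal{G}_X(q)$ implies $(q,\overline{x})\in \mathcal{G}_S$; therefore $\mathcal{G}_S = \cup_{q\in \mathcal{Q}}(\{q\}\times \mathcal{G}_X(q))$. For $q\notin \Projq(\mathcal{G}_S)$, the corresponding slice $\mathcal{G}_X(q)$ is empty (under the intended reading of \eqref{eq:ProjectionDefn} as ``there exists some $\overline{x}$ with $(q,\overline{x})\in \mathcal{G}_S$''), so dropping those terms does not change the union and yields the restricted form \eqref{eq:GsDef}. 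Finally, for $q_1\neq q_2$ in $\Projq(\mathcal{G}_S)$, the sets $\{q_1\}\times \mathcal{G}_X(q_1)$ and $\{q_2\}\times \mathcal{G}_X(q_2)$ are disjoint since their first coordinates differ, establishing the partition claim.

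The only nontrivial step is verifying slice-measurability in the first paragraph, and even there the ``good sets'' argument is entirely standard; the finiteness of $\mathcal{Q}$ removes any measurability issue for the outer (countable) union over $q$ and makes the partitioning obvious. I do not anticipate any real obstacle beyond being careful with the interpretation of $\Projq$ in \eqref{eq:ProjectionDefn}.
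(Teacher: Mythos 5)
Your proof is correct. Note that the paper itself states Lemma~\ref{lem:BorelS} without any proof (there is no proof environment following it and no corresponding appendix entry), so there is nothing to compare against; your section-operator/good-sets argument is the standard and appropriate way to fill that gap. All the steps check out: the collection of sets whose $q$-sections are Borel is a $\sigma$-algebra containing the generators $\{q'\}\times \mathcal{G}_X(q')$ (whose sections are either $\mathcal{G}_X(q')$ or $\emptyset$), hence equals $\sigmaAlg(\mathcal{S})$; the finiteness of $\mathcal{Q}$ makes the reassembly $\mathcal{G}_S=\cup_{q}(\{q\}\times\sigma_q(\mathcal{G}_S))$ and the disjointness immediate. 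You are also right to flag that \eqref{eq:ProjectionDefn} as literally written ($(q,\emptyset)\subseteq\mathcal{G}_S$) is vacuous and must be read as ``the $q$-slice of $\mathcal{G}_S$ is non-empty''; under that reading, discarding the $q\notin\Projq(\mathcal{G}_S)$ terms loses nothing, exactly as you argue.
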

Lemma~\ref{lem:BorelS} states that the events of interest for the system \eqref{eq:DMSP_sys} can be decomposed into a collection of events, each event consiting of discrete state taking a particular value and an associated Borel set in which the continuous state must lie, and this decomposition is disjoint.
A similar approach was taken in~\cite[Sec. 2]{Abate2008} to define the sigma-algebra of the hybrid state in DTSHS.
We will denote the known initial state of \DMSP{} by $\overline{s}_0={[q_0\ {\overline{x}_0}^\top]}^\top\in \mathcal{S}$.

For an initial state $ \overline{x}_0$, initial parameter value $ \overline{\lambda}_0$, and known parameter sequence $ \overline{\Lambda}_{N-1}$, the \DPV{} dynamics are given by 
\begin{align}
    \bx_{t+1}&=A(\overline{\lambda}_t)\bx_t+B(\overline{\lambda}_t)\overline{u}_t + F(\overline{\lambda}_t)\bw_t.\label{eq:DPV_sys}
\end{align}%



\subsection{Modified unicycle dynamics: \DMSP{} and \DPV{} models}
\label{sub:unicycle}

We consider the unicycle model discretized in time with sampling time $T_s$. 
Additionally, we enforce the following
\begin{enumerate}
    \item heading velocity $\bv_t$ is a random variable with a known probability density $\psi_v$, and
    \item turning rate $\bom_t$ is a discrete-valued random variable whose evolution is modelled as a time-dependent Markov chain.
\end{enumerate}
The modified unicycle dynamics are given by
\begin{subequations}
\begin{align}
    \left[\begin{array}{c}
            {(\bp_{t+1})}_{x} \\
            {(\bp_{t+1})}_{y} \\
            \bth_{t+1}
    \end{array}\right]&=\left[\begin{array}{c}
        {(\bp_t)}_{x} \\
        {(\bp_t)}_{y} \\
        \bth_t
    \end{array}\right] + T_s\left[\begin{array}{c}
            \bv_t \cos\big(\bth_t\big) \\
            \bv_t \sin\big(\bth_t\big) \\
            \bom_t
    \end{array}\right] \label{eq:obs_dyn_cts}\\
    \bom_{t+1}&=\begin{cases}
        \begin{array}{ll}
            \mathscr{M}_{\bom}(\bom_t)\ & t=k\tausw,\ k\in \mathbb{N}_{[1,N-1]} \\
            \bom_t & \mbox{otherwise}
    \end{array}
    \end{cases} \label{eq:obs_omega}\\
    \bv_t&\sim \psi_{\bv} \label{eq:obs_speed}
\end{align}\label{eq:obs_nonlin_sys}%
\end{subequations}
with location $\bp_t={[{(\bp_t)}_x\ {(\bp_t)}_y]}^\top\in \mathbb{R}^2$, heading $\bth_t\in(-\pi,\pi]$, and turning rate $\bom_t\in \mathcal{Q}$ updated every $\tausw$ time steps based on the Markov chain $ \mathscr{M}_{\bom}$.
The system \eqref{eq:obs_nonlin_sys} has no inputs, but can be extended to include terms that describe the known drifts in the model.  

Defining the continuous state, the parameter, and the discrete as the current location $\bx_t = \bp_t$, the current heading $\blam_t=\bth_t$, and the previous turning rate $\bsig_t = \bom_{t-1}$ of the unicycle respectively, we recast the nonlinear system \eqref{eq:obs_nonlin_sys} into the \DMSP{} dynamics with 
\begin{subequations}
\begin{align}
    \bsig_{t+1}&=\begin{cases}
        \begin{array}{ll}
            \mathscr{M}_{\bom}(\bsig_t)\ & t=k\tausw+1,\ k\in \mathbb{N}_{[1,N-1]} \\
            \bsig_t & \mbox{otherwise}
    \end{array}
    \end{cases} \label{eq:obs_omega_sig}\\
    \blam_{t+1}& = \blam_{t} + T_s \bsig_{t+1} \label{eq:param_cts_obs}\\ 
    \bx_{t+1}&=\bx_t+
    B_O(\blam_t)\bv_t \label{eq:obs_unicycle_dyn}\\
    \bv_t&\sim \psi_{\bv} \label{eq:obs_speed_DMSP}
\end{align}\label{eq:obs_DMSP}%
\end{subequations}
where $B_O(\blam_t) = T_s{\left[\cos\big(\blam_t\big)\ \sin\big(\blam_t\big)\right]}^\top$.
We define the initial continuous state of \eqref{eq:obs_DMSP} as the initial unicycle location $ \overline{x}_0\in \mathcal{X}$, the initial parameter value of \eqref{eq:obs_DMSP} as the initial unicycle heading $\lambda_0\in(-\pi,\pi]$, the initial discrete state of \eqref{eq:obs_DMSP} as the initial unicycle turning rate $\omega_0\in \mathcal{Q}$, and the initial state of \eqref{eq:obs_DMSP} as $ \overline{s}_0 = {[q_0\ \bar{z}_0]}^\top\in \mathcal{Q}\times\mathcal{X}$.

Consider a known sequence of turning rates $ \overline{q}_\tau$ where $q_t=\omega_{t-1}$ for $t\in \mathbb{N}_{[1,\tau]}$.
The unicycle location is given by the following \DPV{} dynamics,
\begin{subequations}
    \begin{align}
    \lambda_{t+1}&=\lambda_t+ q_{t+1} T_s \label{eq:obs_unicycle_dyn_param_DPV}\\
    \bx_{t+1}&=\bx_t + B_O(\lambda_t)\bv_t \label{eq:obs_unicycle_dyn_DPV}\\
    \bv_t&\sim \psi_{\bv}.\label{eq:obs_speed_DPV}
\end{align}\label{eq:obs_DPV}%
\end{subequations}%
\section{Forward stochastic reachability}
\label{sec:FSR}

Forward stochastic reachability of a system characterizes the stochasticity of the state of a given stochastic system at a future time of interest.
It provides the probability measure associated with the state, known as the \emph{forward stochastic reach probability measure} (FSRPM), and the support of the state, known as the \emph{forward stochastic reach set} (FSR set), at the time of interest.

\subsection{Forward stochastic reachability for \DPV{} dynamics}
\label{sub:FSR_DPV}

Consider the \DPV{} \eqref{eq:DPV_sys} initialized to $ \overline{x}_0, \overline{\lambda}_0$, $\overline{\Lambda}_{N-1}$, and $\UN$.
Similarly to $ \UN$, define $\overline{\bw}_{N-1}={[\bw_{N-1}^\top\ \bw_{N-2}^\top\ \ldots\ \bw_{1}^\top\ \bw_{0}^\top]}^\top\in \mathcal{W}^N$ as the concatenated random vector of $\bw_t$ with probability measure $\ProbOvBwNNoTail$, easily defined using $\psi_{\bw}$ by the IID assumption on $\bw$.
Here, the FSRPM is the probability measure $\ProbbxLambda$ and the FSR set $\FSRsetDPV$ is the support of the state random vector $\bx_{\tau}$ at time $\tau$ respectively.
If a non-negative Borel function $\psibxzLambda$ exists, such that $\int_{ \mathcal{X}}\psibxzLambda d\overline{z} = 1$, and for any $ \mathcal{G}_X\in \sigmaAlg(\mathcal{X})$,
\begin{align}
    \ProbbxLambda\{\bx_\tau\in \mathcal{G}_X\} =\int_{ \mathcal{G}_X} \psibxzLambda dz.\label{eq:FSRPD_defn}
\end{align}
then $\psibxzLambda$ is the \emph{forward stochastic reach probability density} (FSRPD) of the \DPV{}.

The definition of the matrices $ \mathscr{A}$, $\mathscr{C}_U$, and $ \mathscr{C}_W$ in \eqref{eq:matrices_skaf} is inspired from~\cite[eq. (3)]{SkafTAC2010}.
\begin{subequations}
\begin{align}
    \mathscr{A}(i,j;\overline{\Lambda}_{\tau-1}, \overline{\lambda}_0)&=\left(\prod_{t=i}^{j-1} A( \overline{\lambda}_{t})\right)\in \mathbb{R}^{n\times n}, \mbox{ with }\mathscr{A}(i,i;\overline{\Lambda}_{\tau-1}, \overline{\lambda}_0)=I_n,\mbox{ and } i,j\in \mathbb{N},i<j \label{eq:ConcatA} \\ 
    \mathscr{C}_U (\tau;\overline{\Lambda}_{\tau-1}, \overline{\lambda}_0) &=[\mathscr{A}({\tau},{\tau};\overline{\Lambda}_{\tau}, \overline{\lambda}_0)B( \overline{\lambda}_{\tau-1})\ \mathscr{A}({\tau-1},{\tau};\overline{\Lambda}_{\tau}, \overline{\lambda}_0)B(\overline{\lambda}_{\tau-2})\ \ldots\ \mathscr{A}(1,{\tau};\overline{\Lambda}_{\tau}, \overline{\lambda}_0)B( \overline{\lambda}_{0})] \in \mathbb{R}^{n\times (m\tau)},\label{eq:Cinput}\\
    \mathscr{C}_W (\tau;\overline{\Lambda}_{\tau-1}, \overline{\lambda}_0)&=[\mathscr{A}({\tau},{\tau};\overline{\Lambda}_{\tau}, \overline{\lambda}_0)F( \overline{\lambda}_{\tau-1})\ \mathscr{A}({\tau-1},{\tau};\overline{\Lambda}_{\tau}, \overline{\lambda}_0)F(\overline{\lambda}_{\tau-2})\ \ldots\ \mathscr{A}(1,{\tau};\overline{\Lambda}_{\tau}, \overline{\lambda}_0)F( \overline{\lambda}_{0})] \in \mathbb{R}^{n\times (p\tau)}.\label{eq:Cdist}
\end{align}\label{eq:matrices_skaf}
\end{subequations}
We compactly write the state $\bx_{\tau}$ at a time of interest $\tau\geq 1$ as an affine transformation of $ \overline{\bw}_{\tau-1}$ by separating the elements that evolve under the influence of the stochastic disturbance from those that evolve deterministically, i.e.,
\begin{align}
    \bx_\tau &=\xunpert+ \mathscr{C}_W(\tau;\overline{\Lambda}_{\tau-1}, \overline{\lambda}_0)\bwtau\label{eq:traj_LTV}
\end{align}
with $\xunpert$ as the unperturbed state (disturbance free), 
\begin{align}
    \xunpert&= \mathscr{A}(0,\tau;\overline{\Lambda}_{\tau-1},\overline{\lambda}_0)\overline{x}_0 +\mathscr{C}_U(\tau;\overline{\Lambda}_{\tau-1},\overline{\lambda}_0)\Utau \label{eq:xunpert}.
\end{align}
We will omit references to $ \overline{x}_0, \overline{\lambda}_0$, $\overline{\Lambda}_{N-1}$, and $ \Utau$ for brevity.
We define the random vector $\bW_\tau =\mathscr{C}_W(\tau;\cdot) \overline{\bw}_{\tau -1}$ with the probability measure $\ProbbWtau$ such that $\forall \mathcal{G}\in \sigmaAlg( \mathcal{W}^\tau)$,
\begin{align}
    \ProbbWtau\{\bW_\tau \in \mathcal{G}\}=\ProbOvBwtauNoTail\{\mathscr{C}_W(\tau;\cdot) \overline{\bw}_{\tau -1} \in \mathcal{G}\} &=\ProbOvBwtauNoTail\{ \overline{\bw}_{\tau-1}\in\{ \overline{y}\in \mathcal{W}^\tau:\mathscr{C}_W(\tau;\cdot) \overline{y} \in \mathcal{G}\}\}.\label{eq:ProbbW_defn}
\end{align}
The probability measure $\ProbbWtau$ also has an associated PDF $\psibWtauFull$.
Theorem~\ref{thm:FSRPM_def_DPV} follows from \eqref{eq:traj_LTV} and \eqref{eq:ProbbW_defn}, and Corollary~\ref{corr:FSRPD_def_DPV} follows from \eqref{eq:FSRPD_defn}.
\begin{thm}{\textbf{(FSRPM for \DPV{})}}\label{thm:FSRPM_def_DPV}
    For any time instant $\tau\in\mathbb{N}_{[1,N]}$ and some Borel set $ \mathcal{G}_X \in \sigmaAlg( \mathcal{X})$, the FSRPM $\ProbbxNoTail$ of the \DPV{} is given by 
    \begin{align}
        \ProbbxNoTail \{\bx_\tau \in \mathcal{G}_X\} &=\ProbOvBwtauNoTail\left\{ \bW_\tau\in \left( \mathcal{G}_X \oplus\left\{-\xunpertNoTail\right\}\right)\right\}.\nonumber
\end{align}
\end{thm}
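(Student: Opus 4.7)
The plan is a direct pushforward argument built on the affine decomposition (10), $\bx_\tau = \xunpertNoTail + \mathscr{C}_W(\tau;\cdot)\bwtau$. Under the conditioning implicit in $\ProbbxNoTail$ (namely fixed $\overline{x}_0$, $\overline{\lambda}_0$, $\overline{\Lambda}_{\tau-1}$, and $\Utau$), the unperturbed trajectory $\xunpertNoTail$ defined by (11) is a deterministic vector in $\mathcal{X}$, so all randomness in $\bx_\tau$ is carried by the random vector $\bW_\tau = \mathscr{C}_W(\tau;\cdot)\bwtau$.

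The first step is to rearrange (10) sample-path-wise on the underlying probability space to obtain the event equivalence $\{\bx_\tau \in \mathcal{G}_X\} = \{\bW_\tau \in \mathcal{G}_X \oplus \{-\xunpertNoTail\}\}$. The Minkowski sum on the right is merely the translate of $\mathcal{G}_X$ by $-\xunpertNoTail$, and since translation by a fixed vector is a Borel isomorphism of $\mathbb{R}^n$, this translate lies in $\sigmaAlg(\mathcal{X})$; the right-hand event is therefore measurable.

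The second step is to apply the pushforward definition (11) of $\ProbbWtau$ with $\mathcal{G} = \mathcal{G}_X \oplus \{-\xunpertNoTail\}$, which evaluates the probability of $\{\bW_\tau \in \mathcal{G}_X \oplus \{-\xunpertNoTail\}\}$ as $\ProbOvBwtauNoTail\{\bW_\tau \in \mathcal{G}_X \oplus \{-\xunpertNoTail\}\}$. Combining with the event equivalence from the previous step identifies this quantity with $\ProbbxNoTail\{\bx_\tau \in \mathcal{G}_X\}$, yielding the claim.

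The argument is essentially routine measure-theoretic bookkeeping; no density or integral manipulations are required, since the theorem is stated at the level of probabilities of events, not densities. The only mild subtlety is that $\ProbbWtau$ must be read as a measure on Borel subsets of the image space $\mathbb{R}^n$; the linearity (and hence Borel-measurability) of $\mathscr{C}_W(\tau;\cdot)$ from $\mathcal{W}^\tau$ into $\mathbb{R}^n$ ensures that the preimage of the translated set is itself a Borel subset of $\mathcal{W}^\tau$, so (11) applies without issue. I expect no genuine obstacle here; the work in the theorem has already been done by introducing the random vector $\bW_\tau$ and its pushforward measure in (11).
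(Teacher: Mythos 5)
Your proposal is correct and follows exactly the route the paper intends: the paper gives no separate proof of this theorem, stating only that it ``follows from \eqref{eq:traj_LTV} and \eqref{eq:ProbbW_defn},'' and your argument is precisely the sample-path rearrangement of \eqref{eq:traj_LTV} into the event equivalence $\{\bx_\tau \in \mathcal{G}_X\} = \{\bW_\tau \in \mathcal{G}_X \oplus \{-\xunpertNoTail\}\}$ followed by an application of the pushforward definition \eqref{eq:ProbbW_defn}. The measurability remarks you add (translation is a Borel isomorphism; the linear map $\mathscr{C}_W(\tau;\cdot)$ has Borel preimages) are sound and only make explicit what the paper leaves implicit.
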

\begin{corr}{\textbf{(FSRPD for \DPV{})}}\label{corr:FSRPD_def_DPV}
    Given the probability density $\psibWtauNoTail$ for $\bW_\tau$ for any time instant $\tau\in\mathbb{N}_{[1,N]}$ and $\overline{z}\in \mathcal{X}$, the FSRPD $\psibxzNotail$ of the \DPV{} is 
    \begin{align}
        \psibxzNotail&=\psibWtau\left( \overline{z} - \xunpertNoTail;\tau,\cdot\right). \nonumber
\end{align}
\end{corr}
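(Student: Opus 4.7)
The plan is to derive the density formula directly from Theorem~\ref{thm:FSRPM_def_DPV} by rewriting the probability-measure expression as a Lebesgue integral over $\mathcal{G}_X$ and then invoking uniqueness of the PDF.

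First, I would start from the conclusion of Theorem~\ref{thm:FSRPM_def_DPV},
\begin{align*}
    \ProbbxNoTail \{\bx_\tau \in \mathcal{G}_X\} &=\ProbOvBwtauNoTail\left\{ \bW_\tau\in \left( \mathcal{G}_X \oplus\left\{-\xunpertNoTail\right\}\right)\right\},
\end{align*}
and use the hypothesis that $\bW_\tau$ admits the density $\psibWtauNoTail$ (so that the right-hand side can be written as $\int_{\mathcal{G}_X \oplus \{-\xunpertNoTail\}} \psibWtau(\overline{y};\tau,\cdot)\, d\overline{y}$). Since $\xunpertNoTail$ is deterministic given $\overline{x}_0,\overline{\lambda}_0,\overline{\Lambda}_{\tau-1},\Utau$ (by \eqref{eq:xunpert}), the set $\mathcal{G}_X \oplus \{-\xunpertNoTail\}$ is simply a translate of $\mathcal{G}_X$ and is Borel whenever $\mathcal{G}_X$ is Borel.

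Next, I would apply the translation change of variables $\overline{z} = \overline{y} + \xunpertNoTail$, whose Jacobian determinant is $1$. This transforms the integral to
\begin{align*}
    \ProbbxNoTail \{\bx_\tau \in \mathcal{G}_X\} &= \int_{\mathcal{G}_X} \psibWtau\bigl(\overline{z} - \xunpertNoTail;\tau,\cdot\bigr)\, d\overline{z}.
\end{align*}
The integrand is a non-negative Borel-measurable function of $\overline{z}$ (it is $\psibWtauNoTail$ precomposed with a measurable translation), and taking $\mathcal{G}_X = \mathcal{X}$ shows that it integrates to $1$ over $\mathcal{X}$.

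Comparing this with the defining equation \eqref{eq:FSRPD_defn} of the FSRPD, both $\psibxzNotail$ and $\psibWtau(\overline{z} - \xunpertNoTail;\tau,\cdot)$ are non-negative Borel-measurable functions whose integrals agree on every Borel set $\mathcal{G}_X \in \sigmaAlg(\mathcal{X})$. By uniqueness of the Radon--Nikodym derivative (equivalently, uniqueness of the PDF up to a set of Lebesgue measure zero, as noted in Section~\ref{sub:FT}), the two densities coincide almost everywhere, which is the claimed identity. The only subtlety I anticipate is verifying measurability of the translate $\mathcal{G}_X \oplus \{-\xunpertNoTail\}$ and justifying the pointwise (rather than a.e.) statement, but both are standard: translation is a measurable homeomorphism of $\mathbb{R}^n$, and the identification of the density on the right-hand side with the FSRPD is understood modulo the usual Lebesgue-null ambiguity.
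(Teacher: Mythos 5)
Your proposal is correct and follows exactly the route the paper intends: the paper simply asserts that Corollary~\ref{corr:FSRPD_def_DPV} ``follows from \eqref{eq:FSRPD_defn}'' together with Theorem~\ref{thm:FSRPM_def_DPV}, and your argument---writing the measure from Theorem~\ref{thm:FSRPM_def_DPV} as an integral of $\psibWtauNoTail$ over the translated set, applying the unit-Jacobian change of variables, and matching against the defining equation \eqref{eq:FSRPD_defn} via uniqueness of the density---is precisely the fleshed-out version of that claim. No gaps.
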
%
\begin{rem}
    We can relax the IID assumption on $\bw$ if $\ProbOvBwNNoTail$ is known.
\end{rem}
For a \DPV{} with a Gaussian disturbance $\bw\sim \mathcal{N}(\overline{\mu}_{\bw}, \Sigma_{\bw})$, the random vector $ \overline{\bw}_{\tau -1}$ is Gaussian with mean $[\overline{\mu}_{\bw}\ \ldots\ \overline{\mu}_{\bw}]=1_{\tau\times 1}\otimes \overline{\mu}_{\bw}\in \mathbb{R}^{\tau p}$ and covariance matrix $I_\tau \otimes \Sigma_{\bw} \in \mathbb{R}^{(\tau p)\times (\tau p)}$ due to the IID assumption.
Hence, $\bW_\tau=\mathscr{C}_W(\tau;\cdot) \overline{\bw}_{\tau -1}$ is a Gaussian random vector~\cite[Sec. 9.2]{GubnerProbability2006} with PDF,
\begin{subequations}
\begin{align}
    \psibWtau(\overline{y};\tau,\cdot)&\sim \mathcal{N}( \overline{\mu}_{\bW_\tau}, \Sigma_{\bW_\tau}) \label{eq:Gauss_psibWApp_pdf},\mbox{ with } \\
    \overline{\mu}_{\bW_\tau}&= \mathscr{C}_W(\tau;\cdot)(1_{\tau\times 1}\otimes \overline{\mu}_{\bw}),\label{eq:Gauss_psibWApp_mu} \\
    \Sigma_{\bW_\tau}&= \mathscr{C}_W(\tau;\cdot)(I_\tau \otimes \Sigma_{\bw}){\left(\mathscr{C}_W(\tau;\cdot)\right)}^\top.\label{eq:Gauss_psibWApp_sigm}
\end{align}\label{eq:Gauss_psibWApp}%
\end{subequations}%
From Corollary~\ref{corr:FSRPD_def_DPV} and \eqref{eq:Gauss_psibWApp}, the FSRPD for a \DPV{} with a Gaussian disturbance $\bw\sim \mathcal{N}(\overline{\mu}_{\bw}, \Sigma_{\bw})$ is,
\begin{align}
    \psibxzNotail&\sim \mathcal{N}( \overline{\mu}_{\bx_\tau}, \Sigma_{\bx_\tau}) \label{eq:Gauss_psibx_pdf}
\end{align}
with $\overline{\mu}_{\bx_\tau}=\xunpertNoTail + \overline{\mu}_{\bW_\tau}\in \mathbb{R}^n$ and $\Sigma_{\bx_\tau}=\Sigma_{\bW_\tau}$.
A recursive estimation of $\overline{\mu}_{\bx_\tau}$ and $\Sigma_{\bx_\tau}$ is also available via Kalman filters (see~\cite[Sec. III.A]{blackmore_probabilistic_2006}).
Note that Corollary~\ref{corr:FSRPD_def_DPV} requires only the definition of $\psibWtauNoTail$, which is the density of a random vector under linear transformation.
For non-Gaussian disturbances, we can use Fourier transforms (see \eqref{eq:cfun_def}--\eqref{eq:cfun_ift}) to compute the PDF $\psibWtauNoTail$ or the probability $\ProbbWtau\{\bW_\tau \in \mathcal{G}\}$ for some $ \mathcal{G}\in \sigmaAlg( \mathcal{W}^N)$.

Next, we obtain an exact description of the support of $\bx_\tau$, the forward stochastic reach set.
Note that Theorem~\ref{thm:FSRset_oplus} provides a tighter representation than in~\cite[Lem. 3]{VinodHSCC2017}. 
\begin{thm}{\textbf{(FSR set for \DPV{})}}\label{thm:FSRset_oplus} 
    For $\tau\in \mathbb{N}_{[1,N]}$, $$\FSRsetDPVNoTail= \{ \xunpertNoTail\} \oplus \mathscr{C}_{W}(\tau;\cdot) \mathcal{W}^\tau.$$
\end{thm}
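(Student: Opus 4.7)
The plan is to exploit the affine representation of $\bx_\tau$ given in \eqref{eq:traj_LTV}, namely $\bx_\tau = \xunpertNoTail + \mathscr{C}_W(\tau;\cdot)\overline{\bw}_{\tau-1}$, which cleanly separates the deterministic drift from the stochastic contribution. Since the FSR set is the support of $\bx_\tau$ and affine maps interact well with supports, the FSR set should be the affine image of $\mathrm{supp}(\overline{\bw}_{\tau-1})$. A preliminary step is to confirm that $\mathrm{supp}(\overline{\bw}_{\tau-1}) = \mathcal{W}^\tau$: because $\bw$ is IID with $\mathrm{supp}(\bw) = \mathcal{W}$, the joint PDF $\psi_{\overline{\bw}_{\tau-1}}$ factors as a product of marginals, so the smallest closed set carrying unit probability is exactly the Cartesian product $\mathcal{W}^\tau$, verifiable directly from \eqref{eq:supp_defn}.

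Next I would prove the set equality by double inclusion. For the forward inclusion $\{\xunpertNoTail\} \oplus \mathscr{C}_W(\tau;\cdot)\mathcal{W}^\tau \subseteq \FSRsetDPVNoTail$, pick any $\overline{z}_0 = \xunpertNoTail + \mathscr{C}_W(\tau;\cdot)\overline{w}_0$ with $\overline{w}_0\in\mathcal{W}^\tau$ and verify the pointwise criterion of \eqref{eq:supp_defn}. Given $r>0$, continuity of the affine map $\overline{w} \mapsto \xunpertNoTail + \mathscr{C}_W(\tau;\cdot)\overline{w}$ supplies an $r'>0$ with $\Ball(\overline{w}_0,r')\cap \mathcal{W}^\tau$ mapped into $\Ball(\overline{z}_0,r)$; since $\overline{w}_0\in\mathrm{supp}(\overline{\bw}_{\tau-1})$ this preimage has strictly positive $\ProbOvBwtauNoTail$-probability, hence $\ProbbxNoTail\{\bx_\tau\in \Ball(\overline{z}_0,r)\}>0$, placing $\overline{z}_0$ in the support. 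For the reverse inclusion, I would apply Theorem~\ref{thm:FSRPM_def_DPV} to deduce that $\ProbbxNoTail\{\bx_\tau \in \{\xunpertNoTail\}\oplus \mathscr{C}_W(\tau;\cdot)\mathcal{W}^\tau\} = 1$, since $\bW_\tau = \mathscr{C}_W(\tau;\cdot)\overline{\bw}_{\tau-1}$ lies in $\mathscr{C}_W(\tau;\cdot)\mathcal{W}^\tau$ almost surely. The support, being the smallest closed set carrying the full probability mass, is then contained in any closed set of full probability.

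The main obstacle is the closedness of the affine image $\mathscr{C}_W(\tau;\cdot)\mathcal{W}^\tau$, since linear images of closed sets need not be closed in general. I would handle this by appealing to the typical structural assumptions encountered in practice: when $\mathcal{W}$ is a convex polyhedron, or $\mathcal{W} = \mathbb{R}^p$ (e.g., the Gaussian case), or $\mathcal{W}$ is compact, the image $\mathscr{C}_W(\tau;\cdot)\mathcal{W}^\tau$ is automatically closed. Alternatively, one can cleanly sidestep the issue by combining the forward inclusion, which already shows that every point of $\{\xunpertNoTail\}\oplus\mathscr{C}_W(\tau;\cdot)\mathcal{W}^\tau$ belongs to the closed set $\FSRsetDPVNoTail$, with the observation that the closure of the image still has probability one, forcing the two closed sets to coincide.
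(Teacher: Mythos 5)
Your proposal is correct and follows the same basic route as the paper: both exploit the affine decomposition \eqref{eq:traj_LTV} to reduce $\mathrm{supp}(\bx_\tau)$ to a translate of the support of the disturbance contribution. The paper's proof is a short chain of equalities --- it rewrites the defining condition \eqref{eq:supp_defn} in terms of $\ProbbWtau$, concludes $\FSRsetDPVNoTail = \mathrm{supp}(\bW_\tau)\oplus\{\xunpertNoTail\}$, and then asserts $\mathrm{supp}(\bW_\tau) = \mathscr{C}_W(\tau;\cdot)\,\mathcal{W}^\tau$ as an ``observation'' from \eqref{eq:ProbbW_defn}. Your double-inclusion argument makes that last step explicit, and in doing so you correctly isolate the one genuine subtlety the paper glosses over: in general the support of a linear image is the \emph{closure} of the linear image of the support, so the identity as stated requires $\mathscr{C}_W(\tau;\cdot)\,\mathcal{W}^\tau$ to be closed. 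Your forward inclusion via continuity of the affine map, your reverse inclusion via minimality of the support among closed full-measure sets, and your observation that the reverse direction alone only pins down the closure are all sound; under the structural hypotheses you list (polyhedral, compact, or full-space $\mathcal{W}$, which cover the Gaussian case used throughout the paper) the image is closed and the theorem holds exactly as stated. In short, your proof reaches the same conclusion by the same decomposition but supplies the justification for the final step that the paper leaves implicit.
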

\begin{proof}
    Since $\FSRsetDPVNoTail=\mathrm{supp}( \bx_\tau)$, we have
    \begin{align}
        \FSRsetDPVNoTail &=\left\{ \overline{z}\in \mathcal{X}: \forall r>0,\ \ProbbxNoTail\{\bx_\tau\in \Ball( \overline{z},r)\}>0\right\} \nonumber \\
        &=\left\{ \overline{z}\in \mathcal{X}: \forall r>0,\ \ProbbWtau\{\bW_\tau\in \Ball( \overline{z}-\xunpertNoTail,r)\}>0\right\} \nonumber\\
        &=\left\{ \overline{y}=\overline{z}-\xunpertNoTail: \forall r>0,\ \ProbbWtau\{\bW_\tau\in \Ball( \overline{y},r)\}>0\right\}. \nonumber
    \end{align}
    Hence, $\FSRsetDPVNoTail= \mathrm{supp}( \bW_\tau) \oplus \{\xunpertNoTail\}$.
    The proof is complete with the observation from \eqref{eq:ProbbW_defn} that $ \mathrm{supp}(\bW_\tau) = \mathscr{C}_W(\tau;\cdot) \mathcal{W}^\tau$.
\end{proof}
\begin{rem}
    For any $ \mathcal{G}\in \sigmaAlg( \mathcal{X})$, $ \mathcal{G}\cap \FSRsetDPVNoTail =\emptyset$ implies $\ProbbxNoTail\{\bx_\tau\in \mathcal{G}\}=0$ by \eqref{eq:supp_defn}.
\end{rem}

Theorems~\ref{thm:FSRPM_def_DPV} and~\ref{thm:FSRset_oplus} solve Problem~\ref{prob_st:FSR_DPV}.
We define the dimension of the FSR set as the dimension of its affine hull~\cite[Sec. 2.1.3]{boyd_convex_2004}.
By Theorem~\ref{thm:FSRset_oplus}, we see that the affine hull of the FSR set is contained in the column space\footnote{Range of the linear transformation $\mathscr{C}_W(\tau;\cdot)$\cite[Sec. 2.4]{strang_linear_2006}.} of $ \mathscr{C}_W(\tau;\cdot)$, a subspace of $ \mathcal{X}$.
Clearly, the FSR set is not full-dimensional when column rank of $ \mathscr{C}_W(\tau;\cdot)$ is not $n$.
In such cases, $\bx_t$ is an affine transformation of a lower-dimensional random vector which has a full-dimensional support.

\begin{figure}
    \centering
    \includegraphics[width=0.4\linewidth,Trim=\trimValuesConfEll,clip]{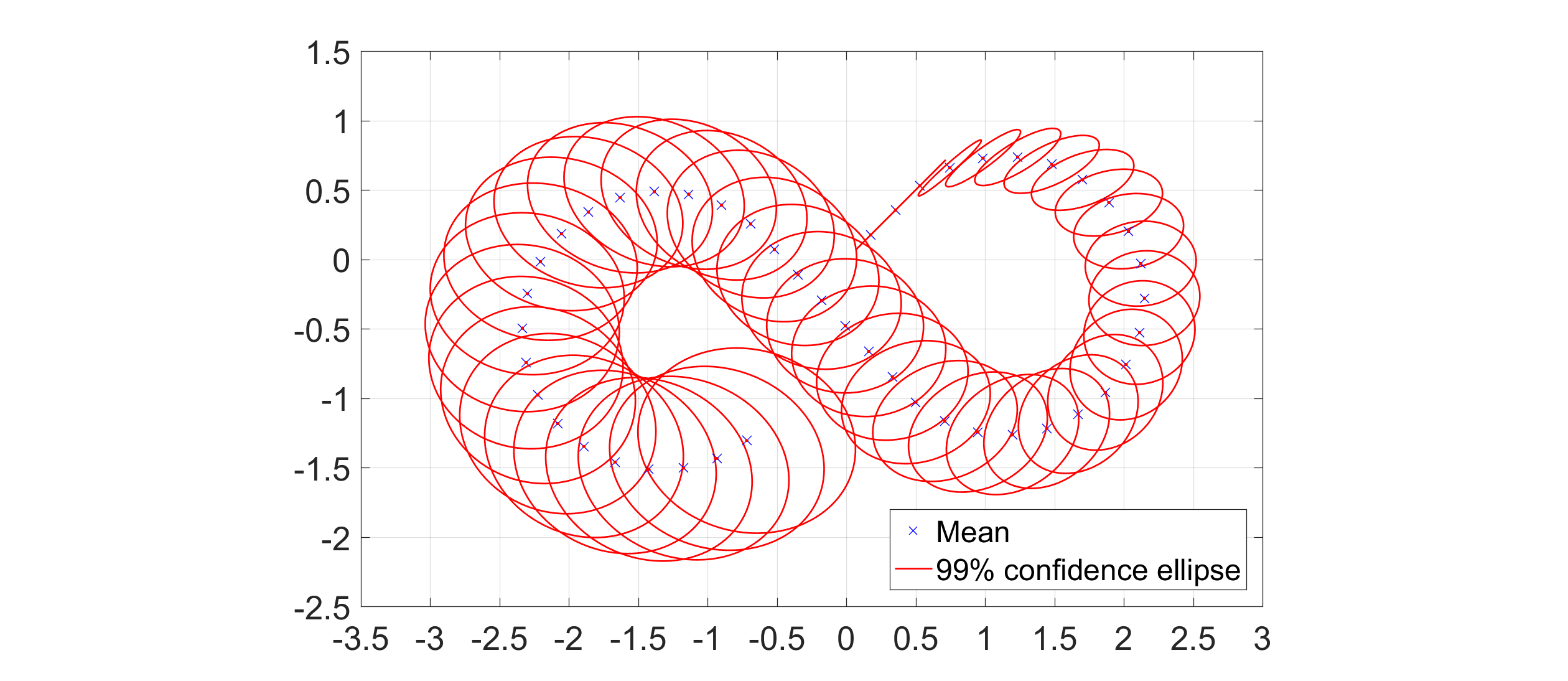} 
    \caption{Confidence region for the center of mass of an obstacle with unicycle dynamics \eqref{eq:obs_DMSP} starting at the origin with a predetermined turning rate sequence. 
See Appendix~\ref{app:UnicycleVals} for numerical values. The computation took $8.7$ ms.}
    \label{fig:unicycle_FSRPD}
\end{figure}
Figure~\ref{fig:unicycle_FSRPD} shows the $99\%$-confidence region ($0.01$-superlevel set of the PDF) for the center of mass of an obstacle with unicycle dynamics \eqref{eq:obs_DPV} for a predetermined turning rate sequence. 
The closed-form expressions for the FSRPD (Gaussian) of the corresponding \DPV{} is given in \eqref{eq:Gauss_psibx_pdf}. 
For any $\kappa\in \mathbb{R}, \kappa>0$, the $\kappa$-superlevel set of $\psibxzNotail$ is an ellipsoid $\mathcal{E}(\overline{\mu}_{\bx_\tau},Q_{\bx_\tau}(\kappa))$~\cite[Defn. 2.1.4]{KurzhanskiTextbook},
\begin{align}
    \mathcal{E}(\overline{\mu}_{\bx_\tau},Q_{\bx_\tau}(\kappa))&\triangleq\left\{ \overline{z}\in \mathcal{X}: {(\overline{z}- \overline{\mu}_{\bx_\tau})}^\top {(Q_{\bx_\tau}(\kappa))}^{-1}(\overline{z}- \overline{\mu}_{\bx_\tau})\leq 1 \right\}\label{eq:ell_defn}
\end{align}
with $Q_{\bx_\tau}(\kappa) = - 2\log{\left(\kappa\sqrt{\vert 2\pi\Sigma_{\bx_\tau}\vert}\right)}\Sigma_{\bx_\tau}$.
The initial turning rates are $\omega_0=\omega_1=0$ and $\omega_2\neq0$ (see Appendix~\ref{app:UnicycleVals}).
From \eqref{eq:obs_unicycle_dyn_param_DPV}, the heading $\theta_t=\frac{\pi}{4}$ for $t\in\{0,1,2\}$ and $\theta_3\neq\frac{\pi}{4}$.
This results in $ \mathscr{C}_W(t;\cdot)$ having rank $1$ for $t\in\{1,2,3\}$ and $ \mathscr{C}_W(4;\cdot)$ has rank $2$ (full rank) since $\theta_3\neq \theta_t$ for $t\in\{1,2,3\}$.
Specifically, the FSR set/the confidence region collapses to a line/line segment along the sole linearly independent column in $ \mathscr{C}_W(t;\cdot)$, the initial heading direction, for $t\in\{1,2,3\}$.

\subsection{Convexity properties of the forward stochastic reachability of \DPV{} dynamics}
\label{sub:FSR_cvx}

Understanding the convexity properties of the FSRPM and the FSRPD is useful for tractability.
Many standard distributions are log-concave, including Gaussian, uniform, and exponential distributions~\cite[e.g. 3.39, 3.40]{boyd_convex_2004}.

Lemma~\ref{lem:FSRPM_log_concave} follows from Theorem~\ref{thm:FSRPM_def_DPV} and~\cite[Lem. 2.1]{dharmadhikari1988unimodality}, and Lemma~\ref{lem:other_cvx} follows from~\cite[Thms 2.5 and 2.8]{dharmadhikari1988unimodality}.
\begin{lem}{\textbf{(Log-concave FSRPM for \DPV{})}}\label{lem:FSRPM_log_concave}
    If $\Prob_{\bw}$ is a log-concave probability measure, then the FSRPM $\ProbbxNoTail$ is log-concave over $ \mathcal{X}$ for $\forall\tau\in \mathbb{N}_{[1,N]}$.
\end{lem}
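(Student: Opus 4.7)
The plan is to chain three observations: the IID log-concave disturbance produces a log-concave joint disturbance, linear transformation preserves log-concavity (this is precisely the invocation of~\cite[Lem. 2.1]{dharmadhikari1988unimodality}), and an additive constant shift does not disturb the convex-combination inequality that defines log-concavity.

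First I would note that by the IID assumption on $\bw$ with log-concave measure $\Prob_{\bw}$, the product measure $\ProbOvBwtauNoTail$ of the concatenated disturbance $\overline{\bw}_{\tau-1}\in \mathcal{W}^\tau$ is log-concave, since products of log-concave measures on Cartesian products are log-concave. Then I would apply~\cite[Lem. 2.1]{dharmadhikari1988unimodality} to the linear map $\mathscr{C}_W(\tau;\cdot)$ to conclude that $\ProbbWtau$, the induced measure of $\bW_\tau = \mathscr{C}_W(\tau;\cdot)\overline{\bw}_{\tau-1}$ defined in \eqref{eq:ProbbW_defn}, is also log-concave on $\mathcal{X}$.

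Next, I would exploit Theorem~\ref{thm:FSRPM_def_DPV}, which writes
\begin{align*}
\ProbbxNoTail\{\bx_\tau\in \mathcal{G}_X\} = \ProbbWtau\{\bW_\tau\in \mathcal{G}_X\oplus\{-\xunpertNoTail\}\},
\end{align*}
so that $\ProbbxNoTail$ is obtained from $\ProbbWtau$ by a translation of the query set by the deterministic vector $-\xunpertNoTail\in\mathcal{X}$. Fix any convex Borel sets $\mathcal{G}_A,\mathcal{G}_B\in\sigmaAlg(\mathcal{X})$ and $\zeta\in[0,1]$, and set $\mathcal{G}_A'=\mathcal{G}_A\oplus\{-\xunpertNoTail\}$, $\mathcal{G}_B'=\mathcal{G}_B\oplus\{-\xunpertNoTail\}$, both convex. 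The key set identity I would verify is
\begin{align*}
\bigl(\zeta\mathcal{G}_A\oplus(1-\zeta)\mathcal{G}_B\bigr)\oplus\{-\xunpertNoTail\} = \zeta\mathcal{G}_A'\oplus(1-\zeta)\mathcal{G}_B',
\end{align*}
which is immediate from $\zeta\{-\xunpertNoTail\}\oplus(1-\zeta)\{-\xunpertNoTail\}=\{-\xunpertNoTail\}$ and the fact that Minkowski sum distributes over scalar multiplication on singletons.

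Finally, applying Theorem~\ref{thm:FSRPM_def_DPV} to the convex combination set, followed by log-concavity of $\ProbbWtau$ on the convex sets $\mathcal{G}_A',\mathcal{G}_B'$, and translating back yields
\begin{align*}
\ProbbxNoTail\{\bx_\tau\in\zeta\mathcal{G}_A\oplus(1-\zeta)\mathcal{G}_B\} &= \ProbbWtau\{\bW_\tau\in\zeta\mathcal{G}_A'\oplus(1-\zeta)\mathcal{G}_B'\}\\
&\geq \ProbbWtau\{\bW_\tau\in\mathcal{G}_A'\}^{\zeta}\ProbbWtau\{\bW_\tau\in\mathcal{G}_B'\}^{1-\zeta}\\
&= \ProbbxNoTail\{\bx_\tau\in\mathcal{G}_A\}^{\zeta}\ProbbxNoTail\{\bx_\tau\in\mathcal{G}_B\}^{1-\zeta},
\end{align*}
which is the definition of log-concavity of $\ProbbxNoTail$ stated in Section~\ref{sub:FT}. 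The only subtle point, and hence the main obstacle, is invoking~\cite[Lem. 2.1]{dharmadhikari1988unimodality} correctly; in particular, one must take care when $\mathscr{C}_W(\tau;\cdot)$ is rank-deficient (as illustrated in Fig.~\ref{fig:unicycle_FSRPD}), since log-concavity is asserted for the induced measure on $\mathcal{X}$ even when its support is a lower-dimensional affine subspace.
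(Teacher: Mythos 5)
Your proposal is correct and takes essentially the same route as the paper: the paper's proof is exactly the one-line observation that the lemma follows from Theorem~\ref{thm:FSRPM_def_DPV} together with the preservation of log-concavity under linear maps (\cite[Lem.~2.1]{dharmadhikari1988unimodality}), and your argument simply expands that chain in full (log-concavity of the product measure of the IID disturbances, pushforward under $\mathscr{C}_W(\tau;\cdot)$, and the translation step via Theorem~\ref{thm:FSRPM_def_DPV}). Your closing remark about rank-deficient $\mathscr{C}_W(\tau;\cdot)$ is a worthwhile clarification but does not change the argument, since log-concavity of a measure does not require a full-dimensional support.
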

\begin{lem}{\textbf{(Convex FSR set and log-concave FSRPD for \DPV{})}}\label{lem:other_cvx}
    {\renewcommand{\theenumi}{\alph{enumi}}
    \begin{enumerate}
        \item Log-concave FSRPM  $\ProbbxNoTail$ $ \Rightarrow$ convex $\FSRsetDPVNoTail$.\label{prop:FSRset_cvx_partnumber}
        \item Log-concave FSRPM $\ProbbxNoTail$ and a full-dimensional FSR set $\FSRsetDPVNoTail$ $ \Leftrightarrow$ log-concave FSRPD $\psibxzNotail$.\label{prop:FSRPD_log_concave_partnumber}
    \end{enumerate}
    }
\end{lem}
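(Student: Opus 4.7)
The plan is to prove the two parts separately. For Part~(a), I would use only the explicit characterization of the support in \eqref{eq:supp_defn} together with the log-concavity inequality for the FSRPM, reducing the question to a straightforward Minkowski-sum identity for balls. For Part~(b), I would treat the two implications separately, invoking Pr\'ekopa's inequality (specifically Pr\'ekopa--Leindler) for the reverse direction and a structural result on log-concave measures (in the spirit of Theorem~2.8 of \cite{dharmadhikari1988unimodality}) for the forward direction.

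For Part~(a), I would fix arbitrary $\overline{y}_1,\overline{y}_2\in\FSRsetDPVNoTail$ and $\zeta\in[0,1]$, and aim to show $\overline{y}_\zeta\triangleq \zeta\overline{y}_1+(1-\zeta)\overline{y}_2\in\FSRsetDPVNoTail$. By \eqref{eq:supp_defn}, it suffices to establish $\ProbbxNoTail\{\bx_\tau\in\Ball(\overline{y}_\zeta,r)\}>0$ for every $r>0$. The key identity is
\[
    \Ball(\overline{y}_\zeta,r) = \zeta\,\Ball(\overline{y}_1,r)\oplus(1-\zeta)\,\Ball(\overline{y}_2,r),
\]
which holds because $\zeta\,\Ball(\overline{y}_i,r)=\Ball(\zeta\overline{y}_i,\zeta r)$ and the Minkowski sum of Euclidean balls adds centers and radii, with $\zeta r+(1-\zeta)r=r$. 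Applying log-concavity of $\ProbbxNoTail$ to the two convex balls, both of which have strictly positive measure by $\overline{y}_i\in\FSRsetDPVNoTail$, yields strictly positive measure on the Minkowski sum, i.e.\ on $\Ball(\overline{y}_\zeta,r)$, as required.

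For Part~(b), the reverse direction $(\Leftarrow)$ is the classical Pr\'ekopa implication: if $\psibxzNotail$ is log-concave, then integrating over convex $\mathcal{G}_A,\mathcal{G}_B\in\sigmaAlg(\mathcal{X})$ and applying Pr\'ekopa--Leindler to $\psibxzNotail\cdot 1_{\mathcal{G}_A}$ and $\psibxzNotail\cdot 1_{\mathcal{G}_B}$ yields the log-concavity bound for $\ProbbxNoTail$; full-dimensionality of the support follows because a log-concave density that is not identically zero is strictly positive on an open convex set, whose closure is exactly $\FSRsetDPVNoTail$. For the forward direction $(\Rightarrow)$, I would invoke the structural theorem for log-concave measures: every log-concave probability measure on $\mathbb{R}^n$ is concentrated on an affine subspace and is absolutely continuous with respect to Lebesgue measure on that subspace. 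When $\FSRsetDPVNoTail$ is full-dimensional, this affine subspace is all of $\mathcal{X}$, so $\ProbbxNoTail$ admits a density with respect to the $n$-dimensional Lebesgue measure, and log-concavity of the measure transfers to log-concavity of this Radon--Nikodym derivative, which is exactly $\psibxzNotail$ by Corollary~\ref{corr:FSRPD_def_DPV}.

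I expect the main obstacle to be the forward direction of~(b): it requires both the absolute continuity of a full-dimensional log-concave measure with respect to Lebesgue measure and the almost-everywhere transfer of log-concavity from the measure to its density. Both are nontrivial and I would simply cite the machinery of \cite{dharmadhikari1988unimodality} rather than reprove it. In contrast, Part~(a) and the Pr\'ekopa direction of~(b) are essentially elementary once the Minkowski-sum identity for balls and the Pr\'ekopa--Leindler inequality are in hand.
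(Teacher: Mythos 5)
Your proof is correct. The paper itself offers no argument for this lemma beyond the remark that it ``follows from~\cite[Thms 2.5 and 2.8]{dharmadhikari1988unimodality}''; your Part~(a) is exactly the standard proof of the convexity-of-support result (Thm.~2.5 there, via the Minkowski-sum identity for balls and the log-concavity inequality applied to two balls of positive measure), and your Part~(b) invokes the same Pr\'ekopa--Borell characterization that constitutes Thm.~2.8, so you have simply unpacked the paper's citation correctly, including the correct identification of the forward direction of~(b) as the step that genuinely needs the structural theorem.
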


For the example discussed in Figure~\ref{fig:unicycle_FSRPD}, we saw that the FSR set is not full-dimensional for $t\in\{1,2,3\}$.
By Lemma~\ref{lem:FSRPM_log_concave}, $\bx_t$ has a log-concave probability measure for every $t$, and a log-concave PDF over $\mathbb{R}^2$ for $t\geq 4$. 
By Lemma~\ref{lem:other_cvx}b, for $t\in\{1,2,3\}$, the random variable corresponding to the restriction of $\bx_t$ to the line along the initial heading direction has a log-concave PDF over $\mathbb{R}$.


\subsection{Forward stochastic reachability for \DMSP{} dynamics}
\label{sub:FSR_DMSP}

Consider a \DMSP{} \eqref{eq:DMSP_sys} initialized to  $ \overline{s}_0$, $ \overline{\lambda}_0$, and $ \UN$.
Given a Borel set $ \mathcal{G}_S\in \sigmaAlg( \mathcal{S})$, we define a set $\QtauProj$ that consists of the discrete state realizations $ \overline{q}_{\tau}$ such that the event $ \mathcal{G_S}$ has a non-zero probability of occurrence at time $\tau$.
Formally,
\begin{align}
    \QtauProj&=\{ \overline{q}_\tau \in \mathcal{Q}^\tau: q_\tau \in \Projq( \mathcal{G}_S),\  \ProbOvbsigTauMOne\{ \overline{\bsig}_{\tau-1} = \overline{q}_{\tau-1}\}>0\}.\label{eq:QtauProj}
\end{align}
using $\Projq$ defined in \eqref{eq:ProjectionDefn}.

Forward stochastic reachability of \eqref{eq:DMSP_sys} initialized to $ \overline{s}_0$, $ \overline{\lambda}_0$, and $ \UN$ consists of the FSRPM $\Probbfs$ and the FSR set $\FSRsetDMSP$, that is the probability measure and the support of the state of the \DMSP{} $\bfs_\tau$. 
We omit the definition of a FSRPD for a \DMSP{} \eqref{eq:DMSP_sys} since the state $\bfs_\tau$ is a mixed random vector~\cite[Sec. 5.3]{GubnerProbability2006}.

\begin{thm}{\textbf{(FSRPM for \DMSP{})}} \label{thm:FSRPM_def_DMSP}
    For any time instant $\tau\in\mathbb{N}_{[1,N]}$, the FSRPM $\Probbfs$ is given by 
    \begin{align}
        \Probbfs\{\bfs_\tau\in \mathcal{G}_S\}&=\sum_{\overline{q}_{\tau}\in \QtauProj}\Big(\Probbx\left\{\bx_\tau\in\mathcal{G}_X(q_\tau)\right\} \times\ProbOvbsigTau\{\overline{\bsig}_{\tau}=\overline{q}_{\tau}\}\Big).\label{eq:JM_prob_1}
    \end{align}
\end{thm}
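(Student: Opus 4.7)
The plan is to decompose the event $\{\bfs_\tau \in \mathcal{G}_S\}$ into a disjoint union of events indexed by the discrete-state history, apply the law of total probability conditioning on the discrete-state trajectory $\overline{\bsig}_\tau$, and observe that once a discrete trajectory is fixed, the continuous-state evolution reduces exactly to a \DPV{} whose FSRPM is given by Theorem~\ref{thm:FSRPM_def_DPV}.

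First I would use Lemma~\ref{lem:BorelS} to rewrite
\[
\{\bfs_\tau \in \mathcal{G}_S\} = \bigcup_{q \in \Projq(\mathcal{G}_S)} \{\bsig_\tau = q\} \cap \{\bx_\tau \in \mathcal{G}_X(q)\},
\]
a disjoint union since the sets $\{q\} \times \mathcal{G}_X(q)$ are pairwise disjoint over $q \in \mathcal{Q}$. Countable additivity of $\Probbfs$ gives
\[
\Probbfs\{\bfs_\tau \in \mathcal{G}_S\} = \sum_{q \in \Projq(\mathcal{G}_S)} \Probbfs\{\bsig_\tau = q,\ \bx_\tau \in \mathcal{G}_X(q)\}.
\]
Next, since $\mathcal{Q}$ is finite, for each fixed $q \in \Projq(\mathcal{G}_S)$ I would partition the event further over the realizations of the discrete-state history $\overline{\bsig}_{\tau-1} \in \mathcal{Q}^{\tau-1}$. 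Writing $\overline{q}_\tau = [\overline{q}_{\tau-1}^\top\ q]^\top$, the total probability over this finite partition yields
\[
\Probbfs\{\bsig_\tau = q,\ \bx_\tau \in \mathcal{G}_X(q)\} = \sum_{\overline{q}_{\tau-1} \in \mathcal{Q}^{\tau-1}} \Probbfs\{\bx_\tau \in \mathcal{G}_X(q) \mid \overline{\bsig}_\tau = \overline{q}_\tau\}\, \ProbOvbsigTau\{\overline{\bsig}_\tau = \overline{q}_\tau\}.
\]

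The key observation in the next step is that once $\overline{\bsig}_\tau = \overline{q}_\tau$ is fixed, the parameter trajectory $\overline{\blam}_{\tau-1}$ is deterministically equal to $\overline{\Lambda}_{\tau-1} = \LlambdatauMinusOne(\overline{q}_{\tau-1})$ by the measurability of $l_{\blam}$. Since the disturbance process $\bw$ is IID with parameter-independent density $\psi_{\bw}$ and is independent of $\bsig$, the continuous-state dynamics \eqref{eq:sys_cts} conditioned on $\overline{\bsig}_\tau = \overline{q}_\tau$ coincide with the \DPV{} \eqref{eq:DPV_sys} driven by parameter sequence $\overline{\Lambda}_{\tau-1}$, initial condition $\overline{x}_0$, initial parameter $\overline{\lambda}_0$, and input sequence $\Utau$. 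Hence
\[
\Probbfs\{\bx_\tau \in \mathcal{G}_X(q) \mid \overline{\bsig}_\tau = \overline{q}_\tau\} = \Probbx\{\bx_\tau \in \mathcal{G}_X(q)\}.
\]

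Finally, I would collapse the double sum into a single sum over $\overline{q}_\tau \in \mathcal{Q}^\tau$ with $q_\tau \in \Projq(\mathcal{G}_S)$, then drop all realizations with $\ProbOvbsigTauMOne\{\overline{\bsig}_{\tau-1} = \overline{q}_{\tau-1}\} = 0$ (which contribute zero), leaving exactly the index set $\QtauProj$ from \eqref{eq:QtauProj}. This yields the claimed expression. The main technical point to be careful about is the justification that the conditional law of $\bx_\tau$ given $\overline{\bsig}_\tau = \overline{q}_\tau$ really is the \DPV{} FSRPM; this is where the IID assumption on $\bw$ and its independence from the switching process are essential, together with the deterministic relation $\overline{\blam}_{\tau-1} = \LlambdatauMinusOne(\overline{q}_{\tau-1})$. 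Everything else is bookkeeping using countable additivity over the finite discrete state space.
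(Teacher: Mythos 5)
Your proposal is correct and follows essentially the same route as the paper's proof: decompose $\mathcal{G}_S$ via Lemma~\ref{lem:BorelS}, apply the law of total probability over the discrete-state trajectories, and identify the conditional law of $\bx_\tau$ given a fixed $\overline{q}_\tau$ with the \DPV{} FSRPM driven by $\LlambdatauMinusOne(\overline{q}_{\tau-1})$. The only cosmetic difference is that you condition on the full trajectory $\overline{\bsig}_\tau$ at once, whereas the paper conditions on $\overline{\bsig}_{\tau-1}$ and then invokes the conditional independence of $\bsig_\tau$ and $\bx_\tau$ to factor the joint probability --- the same structural fact you use implicitly when noting that $\bx_\tau$ depends on the switching only through $\overline{\Lambda}_{\tau-1}$.
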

\begin{proof}
    We simplify $\Probbfs\{\bfs_\tau\in \mathcal{G}_S\}$ by Lemma~\ref{lem:BorelS}.
    \begin{align}
        \Probbfs\{\bfs_\tau\in \mathcal{G}_S\} &= \Probbfs\left\{\bfs_\tau\in \bigcup_{q_\tau\in \Projq( \mathcal{G}_S)} \left(\{q_\tau\}\times \mathcal{G}_X(q_\tau)\right)\right\} \nonumber \\
                &= \sum_{q_\tau\in \Projq( \mathcal{G}_S)} \Probbfs\Big\{\bfs_\tau\in
    \{q_\tau\}\times \mathcal{G}_X(q_\tau)\Big\}.\nonumber
    \end{align}
    Next, we apply the law of total probability and use \eqref{eq:QtauProj},
    \begin{align}
    \Probbfs\{\bfs_\tau\in \mathcal{G}_S\} &= \sum_{\overline{q}_\tau\in \QtauProj}\Big(\ProbOvbsigTauMOne\{\overline{\bsig}_{\tau-1}=\overline{q}_{\tau-1}\}\nonumber\\
    &\quad\times\Probbfs\Big\{\bx_\tau\in\mathcal{G}_X(q_\tau),\bsig_\tau=q_\tau \Big\vert \overline{\bsig}_{\tau-1}=\overline{q}_{\tau-1}\Big\}\Big). \label{eq:FSRPM_DPV_intermediate_1}
    \end{align}
    From \eqref{eq:DMSP_sys}, $\bsig_\tau$ and $\bx_\tau$ are mutually independent for a fixed sequence of discrete state evolution $ \overline{q}_{\tau-1}$. 
    Hence, the second multiplicand in \eqref{eq:FSRPM_DPV_intermediate_1} simplifies to
    \begin{align}
        \Probbfs\Big\{\bx_\tau\in\mathcal{G}_X(q_\tau),\bsig_\tau=q_\tau \Big\vert \overline{\bsig}_{\tau-1}=\overline{q}_{\tau-1}\Big\} &= \Probbfs\{ \bsig_\tau=q_\tau\vert \overline{\bsig}_{\tau-1}=\overline{q}_{\tau-1}\}\nonumber \\
        &\qquad \times\Probbfs\left\{\bx_\tau\in\mathcal{G}_X(q_\tau)\vert \overline{\bsig}_{\tau-1}=\overline{q}_{\tau-1}\right\}. \label{eq:FSRPM_DPV_intermediate}
    \end{align}
    The first multiplicand  in \eqref{eq:FSRPM_DPV_intermediate} deals with the conditional probabilities associated solely with the discrete state $\bsig_t$, whose evolution is uninfluenced by $\bx_t$ (see \eqref{eq:DMSP_sys} and \eqref{eq:ProbBarSig}). 
    Hence, we simplify the first multiplicand in \eqref{eq:FSRPM_DPV_intermediate}, and combine it with the first multiplicand of \eqref{eq:FSRPM_DPV_intermediate_1},
    \begin{align}
        \Probbfs\{ \bsig_\tau=q_\tau\vert \overline{\bsig}_{\tau-1}=\overline{q}_{\tau-1}\} &=\ProbOvbsigTau\{ \bsig_\tau=q_\tau\vert \overline{\bsig}_{\tau-1} =\overline{q}_{\tau-1}\} \nonumber \\
        \ProbOvbsigTau\{ \bsig_\tau=q_\tau\vert \overline{\bsig}_{\tau-1} =\overline{q}_{\tau-1}\}\ProbOvbsigTauMOne\{\overline{\bsig}_{\tau-1}=\overline{q}_{\tau-1}\} &=\ProbOvbsigTau\{\overline{\bsig}_{\tau}=\overline{q}_{\tau}\}.\nonumber
    \end{align} 
    The second multiplicand in \eqref{eq:FSRPM_DPV_intermediate} is the FSRPM of the \DPV{} with the parameter sequence $\overline{\Lambda}_{\tau-1}= \LlambdatauMinusOne(\overline{q}_{\tau-1})$.
    This completes the proof.
\end{proof}

The number of summands in \eqref{eq:JM_prob_1}, $\vert \QtauProj \vert$, grows exponentially with $\tau$.
However, the growth rate depends on $ \mathscr{P}_{\bsig}$.
For the switching law in \eqref{eq:MarkovSwitchLaw}, we see that increasing $\tausw$ (infrequent switching) substantially reduces the number of summands.
The forward stochastic reachability of a \DMSP{} is given by a mixture probability which rules out simple sufficient conditions for the log-concavity of the FSRPM like Lemma~\ref{lem:FSRPM_log_concave}.
\begin{thm}{\textbf{(FSR set for \DMSP{})}}\label{thm:FSRset_oplus_DMSP} 
    For $\tau\in \mathbb{N}_{[1,N]}$, 
    \begin{align}
        \FSRsetDMSP &=\bigcup_{\overline{q}_{\tau}\in \QtauProj}\Big( q_\tau, \FSRsetDPVLlambda\Big).\nonumber
    \end{align}
\end{thm}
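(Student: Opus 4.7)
The plan is to unwind the definition of the FSR set as the support of $\bfs_\tau$ via \eqref{eq:supp_defn}, express the relevant probability through Theorem~\ref{thm:FSRPM_def_DMSP}, and then reduce each summand to the already-characterized FSR set of a DPV via Theorem~\ref{thm:FSRset_oplus}. The structural observation I would invoke first is that in the topology on $\mathcal{S}=\mathcal{Q}\times\mathcal{X}$ implied by \eqref{eq:sigmaAlgDMSP}, each discrete mode is an isolated component, so for $\overline{s}=(q,\overline{z})\in\mathcal{S}$ the neighborhood $\Ball(\overline{s},r)\cap\mathcal{S}$ reduces to $\{q\}\times\Ball(\overline{z},r)$ for all sufficiently small $r>0$. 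Hence $\overline{s}\in\FSRsetDMSPNoTail$ iff $\Probbfs\{\bfs_\tau\in \{q\}\times\Ball(\overline{z},r)\}>0$ for every sufficiently small $r>0$.

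Next I would apply Theorem~\ref{thm:FSRPM_def_DMSP} to the Borel set $\{q\}\times\Ball(\overline{z},r)$, which writes this probability as a finite sum over $\overline{q}_\tau$ with $q_\tau=q$ and $\ProbOvbsigTauMOne\{\overline{\bsig}_{\tau-1}=\overline{q}_{\tau-1}\}>0$, with summands
\begin{align*}
\Probbx\{\bx_\tau\in \Ball(\overline{z},r)\}\cdot\ProbOvbsigTau\{\overline{\bsig}_{\tau}=\overline{q}_{\tau}\},
\end{align*}
where the DPV is parameterized by $\LlambdatauMinusOne(\overline{q}_{\tau-1})$. Non-negativity of the summands implies the whole sum is positive iff at least one summand is positive. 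Combining this with the support characterization of $\bx_\tau$ for the DPV, namely $\FSRsetDPVLlambda$ from Theorem~\ref{thm:FSRset_oplus}, gives: $\overline{s}\in\FSRsetDMSPNoTail$ iff there exists $\overline{q}_\tau\in\QtauProj$ (interpreted with the whole state space as the target) with $q_\tau=q$ and $\overline{z}\in\FSRsetDPVLlambda$. Rewriting this bi-implication set-theoretically yields the claimed union.

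The one subtlety I expect to be the main obstacle is the exchange "the sum is positive for all small $r$" iff "some \emph{fixed} summand is positive for all small $r$". This exchange is not automatic in general, but here $\mathcal{Q}$ and $\mathcal{P}$ are finite and $\tau$ is bounded, so $\QtauProj$ is a finite index set; a pigeonhole argument applied along a monotone decreasing sequence $r_k\downarrow 0$ extracts a single $\overline{q}_\tau$ for which the corresponding DPV summand $\Probbx\{\bx_\tau\in\Ball(\overline{z},r_k)\}$ remains positive for infinitely many (hence cofinally all) $r_k$, which by monotonicity means it is positive for every $r>0$, and therefore $\overline{z}\in\FSRsetDPVLlambda$ by \eqref{eq:supp_defn} and Theorem~\ref{thm:FSRset_oplus}. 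The reverse direction is immediate: if $\overline{z}\in\FSRsetDPVLlambda$ for some $\overline{q}_\tau\in\QtauProj$, the corresponding summand is positive for every $r>0$ and so is the whole sum. Once this exchange is justified, the remaining work is pure bookkeeping to match the union expression in the statement.
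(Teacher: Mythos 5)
Your proposal is correct and follows exactly the route the paper intends: the paper's own justification is a one-sentence remark that the result "follows from Theorems~\ref{thm:FSRset_oplus} and~\ref{thm:FSRPM_def_DMSP} by enumeration of the probable discrete state sequences," and your argument is a fully detailed version of that — decomposing the support condition via the mixture formula of Theorem~\ref{thm:FSRPM_def_DMSP} and reducing each summand to a \DPV{} support via Theorem~\ref{thm:FSRset_oplus}. Your explicit treatment of the "sum positive for all $r$ iff some fixed summand positive for all $r$" exchange (finiteness of $\QtauProj$ plus monotonicity of $r\mapsto\Probbx\{\bx_\tau\in\Ball(\overline{z},r)\}$) supplies the one step the paper leaves implicit.
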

Theorem~\ref{thm:FSRset_oplus_DMSP} follows from Theorems~\ref{thm:FSRset_oplus} and~\ref{thm:FSRPM_def_DMSP}, by the enumeration of all the probable discrete state sequences $\overline{q}_\tau$ and the supports of the continuous state of the corresponding \DPV{} dynamics.
Theorems~\ref{thm:FSRPM_def_DMSP} and~\ref{thm:FSRset_oplus_DMSP} addresses Problem~\ref{prob_st:FSR_DMSP}.

\section{\Occupfunstext{}: theory and computation}
\label{sec:occup}

We will now apply the developed forward stochastic reachability methods to define a probabilistic occupancy functions and the \avoidsettext{} for rigid body obstacles that have \DPV{}/\DMSP{} dynamics.
Tables~\ref{tab:prop_occup} and~\ref{tab:prop_AS} summarize the results for the \occupfuntext{} and the \avoidsetstext{} discussed in this section.
\begin{table}
    \centering
    \begin{tabular}{|c|c|c|}
    \cline{2-3}
    \multicolumn{1}{c|}{} & \multicolumn{2}{c|}{\Occupfunstext{}} \\\cline{2-3}
    \multicolumn{1}{c|}{} &  $\occupDPVNoTail$ & $\occupDMSPNoTail$ \\ \hline               
    Obstacle dynamics        & \DPV{} & \DMSP{}  \\ \hline
    Definition               & \eqref{eq:occup_DPV_defn}--\eqref{eq:occup_DPV_as_conv}, Proposition~\ref{prop:OccupDPV_symm} & \eqref{eq:occup_DMSP} \\ \hline
    Log-concavity              & Proposition~\ref{prop:OccupDPV_logcon} &  - \\ \hline
    Upper semicontinuous     & Proposition~\ref{prop:OccupDPV_usc}    & Proposition~\ref{prop:OccupDMSP_prop}\ref{prop:OccupDMSP_prop_usc}    \\ \hline
   \end{tabular}
   \caption{Properties of the \occupfuntext{} discussed in Section~\ref{sec:occup}}
   \label{tab:prop_occup}
\end{table}
\begin{table}
    \centering
    \tabcolsep=4pt
    \begin{tabular}{|c|c|c|}
    \cline{2-3}
    \multicolumn{1}{c|}{} & \multicolumn{2}{c|}{\avoidsettext{}} \\\cline{2-3}
    \multicolumn{1}{c|}{} & $\ASDPVNoTail$ & $\ASDMSPNoTail$ \\ \hline               
    Obstacle dynamics & \DPV{} & \DMSP{}  \\ \hline
    Definition        & \eqref{eq:ASDPV_defn} & \eqref{eq:ASDMSP_defn} \\ \hline
    Convexity         & Proposition~\ref{prop:OccupDPV_logcon} & - \\ \hline
    Closedness        & Proposition~\ref{prop:OccupDPV_usc}    & Proposition~\ref{prop:OccupDMSP_prop}\ref{prop:OccupDMSP_prop_usc} \\ \hline
    Boundedness       & Proposition~\ref{prop:ASDPV_bounded}   & Proposition~\ref{prop:OccupDMSP_prop}\ref{prop:OccupDMSP_prop_bounded} \\ \hline
    Compactness       & Theorem~\ref{thm:convexCompactDPV}  & Proposition~\ref{prop:OccupDMSP_prop}\ref{prop:OccupDMSP_prop_compact} \\ \hline
    Cover of compact                 & \multirow{2}{*}{\begin{minipage}{1cm}\centering - \end{minipage}}  & \multirow{2}{*}{\begin{minipage}{2cm}\centering Theorem~\ref{thm:AvoidSetUnionDMSP} \end{minipage}} \\ 
    and convex sets                 &   &  \\ \hline
   \end{tabular}
   \caption{Properties of the \avoidsettext{} discussed in Section~\ref{sec:occup}}
   \label{tab:prop_AS}
\end{table}

To define collision, we consider a rigid body robot with shape $ \mathcal{R}( \overline{0})\subseteq \mathcal{X}$ which also satisfies Assumptions~\ref{assum:Borel} and~\ref{assum:obs}.
From~\cite[Sec. 4.3.2]{lavalle2006planning}, we see that the probability of collision between the rigid body obstacle and the rigid body robot is equivalent to the probability of collision between the robot with exactly same dynamics but shape reduced to a point and the obstacle with an augmented shape $ \mathcal{O}( \overline{0}) \oplus (-\mathcal{R}(0))$.  
This motivates the definition of collision probability using the probability with which the obstacle ``occupies'' the state of the point robot.
Without loss of generality, we assume the robot shape is a point and $\mathcal{O}( \overline{0})$ is the appropriately augmented obstacle shape in the sequel.

\subsection{\Occupfunstext{} and \avoidsetstext{} for a rigid body obstacle with \DPV{} dynamics}
\label{sub:occup_DPV}

Consider an obstacle with \DPV{} dynamics \eqref{eq:DPV_sys} initialized to $ \overline{x}_0, \overline{\lambda}_0$, $\overline{\Lambda}_{N-1}$, and $\UN$, and a shape $ \mathcal{O}( \overline{0})$ defined by $h_\mathrm{obs}: \mathcal{X} \rightarrow \mathbb{R}$ (see \eqref{eq:obs_rigidbody_defn}). 
We can define \occupfuntext{} $\occupDPV :\mathcal{X} \rightarrow [0,1]$ using Lemma~\ref{lem:rigidBody}\ref{lem:rigidBody1} as,
\begin{align}
     \occupDPVNoTail&=\ProbbxNoTail\left\{\bx_\tau\in\left\{\overline{z}\in \mathcal{X}:\overline{y}\in \mathcal{O}(\overline{z})\right\}\right\} \label{eq:occup_DPV_defn} \\
              &=\ProbbxNoTail\left\{\bx_\tau\in(-\mathcal{O}(-\overline{y}))\right\}.\label{eq:occup_DPV_as_prob}
\end{align}
For brevity, we will omit references to $ \overline{\lambda}_0$, $\overline{x}_0$, $\overline{\Lambda}_{N-1}$, $\UN$, and $h_\mathrm{obs}$.
Separately from \eqref{eq:occup_DPV_defn} and \eqref{eq:occup_DPV_as_prob}, $\occupDPVnothing$ can also be defined:
\begin{enumerate}
    \item using Lemma~\ref{lem:rigidBody}\ref{lem:rigidBodyCenter},
        \begin{align}
            \occupDPVNoTail=\ProbbxNoTail\left\{\bx_\tau\in-\left(\{-\overline{y}\}\oplus-\mathcal{O}(\overline{0})\right)\right\} &=\ProbbxNoTail\left\{\bx_\tau\in\left(\{\overline{y}\}\oplus-\mathcal{O}(\overline{0})\right)\right\} \label{eq:occup_DPV_as_prob_separated_positive} \\
                            &=\ProbbxNoTail\left\{(\overline{y}-\bx_\tau)\in\mathcal{O}(\overline{0})\right\} \label{eq:occup_DPV_as_prob_separated}
        \end{align}
    \item using expectations ($\ProbbxNoTail\{\bx_\tau\in \mathcal{G}_X\}=\ExpbxNoTail\left[1_{ \mathcal{G}_X}( \bx_\tau)\right]$),
        \begin{align}
            \occupDPVNoTail&=\ExpbxNoTail\left[1_{\mathcal{O}(\overline{0})}(\overline{y}-\bx_\tau)\right]\label{eq:occup_DPV_as_exp}
        \end{align}
    \item using convolution (see~\cite{HomChaudhuriACC2017}): By \eqref{eq:occup_DPV_as_exp} and Lemma~\ref{lem:rigidBody}\ref{lem:rigidBody2},
        \begin{align}
            \occupDPVNoTail &=\int_{ \mathcal{X}} \psibxzNotail 1_{\mathcal{O}(\overline{0})}(\overline{y}-\overline{z})d\overline{z} \label{eq:occup_DPV_intg} \\
                             &=[\psibxnoarg\ast 1_{\mathcal{O}(\overline{0})}](\overline{y}).\label{eq:occup_DPV_as_conv}
        \end{align}
\end{enumerate}

We obtain a simpler description of $\occupDPVnothing$ for centrally symmetric obstacles, which
follows from \eqref{eq:obs_rigidbody_defn_reflect} and \eqref{eq:occup_DPV_as_prob_separated}.
\begin{prop}{\textbf{($\occupDPVnothing$ under central symmetry)}}\label{prop:OccupDPV_symm}
    For a centrally symmetric rigid body obstacle, the \occupfuntext{} is given by $\occupDPVNoTail=\ProbbxNoTail\left\{\bx_\tau\in \mathcal{O}(\overline{y})\right\}$.
\end{prop}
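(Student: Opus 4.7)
The plan is to chain together three identities that are already in hand. First I would start from the alternative representation of the occupancy function given in \eqref{eq:occup_DPV_as_prob_separated_positive}, namely
\begin{align*}
    \occupDPVNoTail=\ProbbxNoTail\left\{\bx_\tau\in\left(\{\overline{y}\}\oplus-\mathcal{O}(\overline{0})\right)\right\},
\end{align*}
which was itself obtained by combining the definition \eqref{eq:occup_DPV_defn} with the translation-invariance identity of Lemma~\ref{lem:rigidBody}\ref{lem:rigidBodyCenter}.

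Next I would invoke the central-symmetry hypothesis on $\mathcal{O}(\overline{0})$. By the definition of centrally symmetric sets in Section~\ref{sub:rigid}, $\mathcal{O}(\overline{0}) = -\mathcal{O}(\overline{0})$, so the set appearing inside the probability simplifies to $\{\overline{y}\}\oplus \mathcal{O}(\overline{0})$. Finally, I would apply Lemma~\ref{lem:rigidBody}\ref{lem:rigidBodyCenter} once more, this time in the forward direction, to recognize $\{\overline{y}\}\oplus \mathcal{O}(\overline{0}) = \mathcal{O}(\overline{y})$, which yields the claim.

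There is no real obstacle here: the result is essentially a one-line consequence of combining the central symmetry relation $-\mathcal{O}(\overline{0}) = \mathcal{O}(\overline{0})$ (from \eqref{eq:obs_rigidbody_defn_reflect}) with the translation-invariance property in Lemma~\ref{lem:rigidBody}\ref{lem:rigidBodyCenter}. The only care needed is to keep the ``$-$'' that originates from the reflection $-\mathcal{O}(-\overline{y})$ in the original definition \eqref{eq:occup_DPV_as_prob}, which is precisely the sign that disappears under central symmetry and is what makes the statement nontrivial relative to the general (asymmetric) case.
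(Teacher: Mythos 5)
Your proof is correct and takes essentially the same route as the paper, which states that the proposition follows from the reflection identity \eqref{eq:obs_rigidbody_defn_reflect} and the representation \eqref{eq:occup_DPV_as_prob_separated}; your use of the adjacent line \eqref{eq:occup_DPV_as_prob_separated_positive} together with $-\mathcal{O}(\overline{0})=\mathcal{O}(\overline{0})$ and Lemma~\ref{lem:rigidBody}\ref{lem:rigidBodyCenter} is the same one-line argument. Your closing remark correctly identifies the sign from the reflection as the only point of care.
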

Evaluating $\occupDPVNoTail$ over a grid on $ \mathcal{X}$ provides the occupancy grid as defined in~\cite{elfes1989using}. 
Occupancy grids have been used in mapping and navigation~\cite{elfes1989using, thrun_probabilistic_2005}.
However, improving the accuracy in grid-based approach typically comes at significantly high computational cost.
With $\occupDPVNoTail$, we have a grid-free formulation to characterize ``occupancy'' by a stochastic rigid body obstacle.
Also, the definition of collision probability given in~\cite[eq. (8)]{du2011probabilistic} coincides with $\phi_{\bx}$ under Proposition~\ref{prop:OccupDPV_symm}, since the robot state in our formulation is deterministic. 

Given $\alpha \in \mathbb{R},\ \alpha\geq 0$, the \avoidsettext{} is
\begin{align}
    \ASDPV =\{\overline{y}\in \mathcal{X} : \occupDPV \geq \alpha\}\label{eq:ASDPV_defn}
\end{align}
where the subscript $P$ denotes that the obstacle dynamics are \DPV{}.
From \eqref{eq:ASDPV_defn}, we have $\forall \tau\in \mathbb{N}_{[1,N]}$ and $\forall\alpha'\in \mathbb{R},\ \alpha\leq\alpha'$,
\begin{subequations}
    \begin{align}
            \ASDPVNoTail&\supseteq \ASDPVNoTailPrime.\label{eq:ASDPV_subset}\\
    \ASDPVNoTail &=\begin{cases}
    {\arraycolsep=0pt
        \begin{array}{ll}
            \mathcal{X}, &\  \alpha= 0 \\
            \emptyset, &\  \alpha>\occupDPVymax \\
        \end{array}
    }
    \end{cases} \label{eq:ASDPV_full_empty}%
\end{align}\label{eq:ASDPV_properties}%
\end{subequations}%
where $\ymax\in \mathcal{X}$ is the maximizer of $\occupDPVNoTail$.
Equation \eqref{eq:ASDPV_full_empty} follows from the fact that $\forall\tau\in \mathbb{N}_{[1,N]}$, $\occupDPVNoTail$ is non-negative, and $\occupDPVymax\geq \occupDPVNoTail, \forall \overline{y}\in \mathcal{X}$.
By definition, $\occupDPVymax\leq 1$.

\begin{prop}{\textbf{(Log-concave $\occupDPVnothing$ and convex $\ASDPVnothing$)}}
    If $\Prob_{\bw}$ is log-concave over $ \mathcal{W}$ and $ \mathcal{O}(\overline{0})$ is convex, then $\occupDPVNoTail$ is log-concave over $ \overline{y}$ $\forall\tau\in \mathbb{N}_{[1,N]}$. Moreover, $\ASDPVNoTail$ is convex $\forall\alpha\in \mathbb{R}$.~\label{prop:OccupDPV_logcon}
\end{prop}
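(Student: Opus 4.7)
The plan is to combine the log-concavity of the forward stochastic reach probability measure, established in Lemma~\ref{lem:FSRPM_log_concave}, with the ``translation-preserves-log-concavity'' property recorded in equation \eqref{eq:log-concaveProbSetMove}. The two hypotheses of the proposition line up exactly with the two ingredients required: log-concavity of $\Prob_{\bw}$ yields a log-concave FSRPM $\ProbbxNoTail$ for each $\tau\in \mathbb{N}_{[1,N]}$, and convexity of $\mathcal{O}(\overline{0})$ gives a convex set to take the Minkowski sum against.

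First, I would rewrite $\occupDPVNoTail$ using the identity \eqref{eq:occup_DPV_as_prob_separated_positive}, namely
\begin{align*}
\occupDPVNoTail=\ProbbxNoTail\left\{\bx_\tau\in\{\overline{y}\}\oplus\bigl(-\mathcal{O}(\overline{0})\bigr)\right\}.
\end{align*}
Convexity of $\mathcal{O}(\overline{0})$ implies convexity of the reflection $-\mathcal{O}(\overline{0})=-I_n\mathcal{O}(\overline{0})$, since affine transformations preserve convexity. Borel measurability of $-\mathcal{O}(\overline{0})$ follows from Assumption~\ref{assum:Borel} and the fact that $-I_n$ is a homeomorphism, so $-\mathcal{O}(\overline{0})\in \sigmaAlg(\mathcal{X})$ and the probability above is well-defined.

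Second, I would apply Lemma~\ref{lem:FSRPM_log_concave} to conclude $\ProbbxNoTail$ is a log-concave probability measure over $\mathcal{X}$ for every $\tau\in \mathbb{N}_{[1,N]}$. With $\mathcal{G}=-\mathcal{O}(\overline{0})$ convex and Borel, and with $\overline{c}=\overline{y}$ playing the role of the translation in \eqref{eq:log-concaveProbSetMove}, the cited property immediately yields log-concavity of $\overline{y}\mapsto\occupDPVNoTail$ on $\mathcal{X}$, which is the first claim.

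Third, to obtain convexity of $\ASDPVNoTail$, I split on $\alpha$. For $\alpha\leq 0$, \eqref{eq:ASDPV_full_empty} (and the non-negativity of the occupancy function) gives $\ASDPVNoTail=\mathcal{X}$, which is convex. For $\alpha>0$, a point $\overline{y}$ belongs to $\ASDPVNoTail$ iff $\log\occupDPVNoTail\geq \log\alpha$; since $\log\occupDPVNoTail$ is concave (possibly extended-real, with value $-\infty$ where $\occupDPVnothing$ vanishes), its superlevel set is convex. No step looks subtle — the only small care point is checking that $-\mathcal{O}(\overline{0})$ is Borel and convex so that \eqref{eq:log-concaveProbSetMove} applies verbatim, and handling the $\alpha\leq 0$ boundary case separately from the interior one.
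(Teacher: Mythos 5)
Your proposal is correct and follows essentially the same route as the paper: both invoke Lemma~\ref{lem:FSRPM_log_concave} for log-concavity of the FSRPM, note that $-\mathcal{O}(\overline{0})$ is convex because linear maps preserve convexity, and then apply \eqref{eq:occup_DPV_as_prob_separated_positive} together with \eqref{eq:log-concaveProbSetMove} to get log-concavity of the occupancy function. The only cosmetic difference is at the end, where the paper simply cites that log-concave functions are quasiconcave, whereas you spell out the superlevel-set argument and the $\alpha\leq 0$ case explicitly; these are equivalent.
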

\begin{proof}
    From Lemma~\ref{lem:FSRPM_log_concave}, we know the FSRPM $\ProbbxNoTail$ is log-concave under these conditions.
    Note that the set $-\mathcal{O}(\overline{0})$ is convex since $\mathcal{O}(\overline{0})$ is convex and convexity is preserved under linear transformation (see \eqref{eq:obs_rigidbody_defn_reflect} and~\cite[Sec. 2.3.2]{boyd_convex_2004}).
    Using the definition \eqref{eq:occup_DPV_as_prob_separated_positive} and property \eqref{eq:log-concaveProbSetMove}, we conclude that $\occupDPVNoTail$ is log-concave over $ \overline{y}\in \mathcal{X}$  $\forall\tau\in \mathbb{N}_{[1,N]}$.

    The set $\ASDPVNoTail$ is convex since log-concave functions are quasiconcave~\cite[Sec. 3.5]{boyd_convex_2004}.
\end{proof}
Proposition~\ref{prop:OccupDPV_logcon} provides the conditions under which the computation of $\ymax$ may be posed as an unconstrained log-concave optimization problem, which may be tractably solved when $\phi_{\bx}$ is given.
Alternatively, we can avoid the computation of $\ymax$ completely by using Proposition~\ref{prop:ymax} for centrally symmetric disturbances, like a zero-mean Gaussian disturbance.
\begin{lem}
    If $ \Prob_{\bw}$ is centrally symmetric, then $\ProbbWtau$ is centrally symmetric. \label{lem:symmetric_ProbbWtau}
\end{lem}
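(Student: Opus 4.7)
The plan is to first establish the central symmetry of the concatenated disturbance measure $\ProbOvBwtauNoTail$, and then push this symmetry through the linear map $\mathscr{C}_W(\tau;\cdot)$ using the definition \eqref{eq:ProbbW_defn}.

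First, I would show that $\ProbOvBwtauNoTail$ is centrally symmetric on $\mathcal{W}^\tau$. Because $\bw$ is IID, $\ProbOvBwtauNoTail$ is the $\tau$-fold product of the common measure $\Prob_{\bw}$. On any measurable rectangle $\mathcal{R} = \mathcal{G}_0 \times \mathcal{G}_1 \times \cdots \times \mathcal{G}_{\tau-1}$, with each $\mathcal{G}_i \in \sigmaAlg(\mathcal{W})$,
\begin{align}
    \ProbOvBwtauNoTail\{\overline{\bw}_{\tau-1} \in \mathcal{R}\} = \prod_{i=0}^{\tau-1} \Prob_{\bw}\{\bw \in \mathcal{G}_i\} = \prod_{i=0}^{\tau-1} \Prob_{\bw}\{\bw \in -\mathcal{G}_i\} = \ProbOvBwtauNoTail\{\overline{\bw}_{\tau-1} \in -\mathcal{R}\}, \nonumber
\end{align}
where the middle equality uses the central symmetry of $\Prob_{\bw}$ and the final equality uses $-\mathcal{R} = (-\mathcal{G}_0) \times \cdots \times (-\mathcal{G}_{\tau-1})$. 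Measurable rectangles form a $\pi$-system that generates $\sigmaAlg(\mathcal{W}^\tau)$, so a standard Dynkin/monotone-class argument extends the identity to all Borel sets, yielding $\ProbOvBwtauNoTail\{\overline{\bw}_{\tau-1}\in \mathcal{A}\}=\ProbOvBwtauNoTail\{\overline{\bw}_{\tau-1}\in -\mathcal{A}\}$ for every $\mathcal{A}\in\sigmaAlg(\mathcal{W}^\tau)$.

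Next, I would use the definition \eqref{eq:ProbbW_defn}. Fix an arbitrary $\mathcal{G} \in \sigmaAlg(\mathcal{W}^\tau)$ and set $\mathcal{H} = \{\overline{y}\in\mathcal{W}^\tau : \mathscr{C}_W(\tau;\cdot)\overline{y}\in \mathcal{G}\}$. Since $\mathscr{C}_W(\tau;\cdot)$ is linear, $\mathscr{C}_W(\tau;\cdot)(-\overline{y}) = -\mathscr{C}_W(\tau;\cdot)\overline{y}$, so
\begin{align}
    -\mathcal{H} = \{\overline{y}\in\mathcal{W}^\tau : \mathscr{C}_W(\tau;\cdot)\overline{y}\in -\mathcal{G}\}. \nonumber
\end{align}
Applying \eqref{eq:ProbbW_defn} twice and invoking the central symmetry of $\ProbOvBwtauNoTail$ established above,
\begin{align}
    \ProbbWtau\{\bW_\tau \in \mathcal{G}\} = \ProbOvBwtauNoTail\{\overline{\bw}_{\tau-1}\in \mathcal{H}\} = \ProbOvBwtauNoTail\{\overline{\bw}_{\tau-1}\in -\mathcal{H}\} = \ProbbWtau\{\bW_\tau \in -\mathcal{G}\}, \nonumber
\end{align}
which is the desired central symmetry of $\ProbbWtau$.

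The only subtle step is the extension from rectangles to all Borel sets in the first part; once that is handled by a standard measure-theoretic argument (or by noting that $\mathcal{W}^\tau$ with the product sigma-algebra is generated by rectangles and using uniqueness of product measures), the rest reduces to a one-line consequence of the linearity of $\mathscr{C}_W(\tau;\cdot)$.
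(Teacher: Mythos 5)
Your proof is correct and follows essentially the same route as the paper: both arguments hinge on the observation that the preimage of $-\mathcal{G}$ under the linear map $\mathscr{C}_W(\tau;\cdot)$ is the negation of the preimage of $\mathcal{G}$, combined with the central symmetry of the concatenated disturbance measure. The only difference is that you spell out, via rectangles and a $\pi$-system argument, why the IID product measure inherits central symmetry from $\Prob_{\bw}$ --- a step the paper simply asserts.
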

\begin{proof}
    See Appendix~\ref{app:proof_symmetric_ProbbWtau}. 
\end{proof}
\begin{prop}{ \textbf{($\ymax$ under symmetry)}}\label{prop:ymax}
    If $ \Prob_{\bw}$ is centrally symmetric and log-concave, and $ \mathcal{O}(\overline{0})$ is centrally symmetric and convex, then $\ymax = \xunpertNoTail$.
\end{prop}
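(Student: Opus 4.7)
The plan is to rewrite $\occupDPVNoTail$ as a function of the displacement $\overline{c}=\overline{y}-\xunpertNoTail$, show that this function is log-concave and even in $\overline{c}$, and then conclude by a standard ``even + log-concave $\Rightarrow$ maximum at $\overline{0}$'' argument.

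First I would use the affine decomposition $\bx_\tau=\xunpertNoTail+\bW_\tau$ from \eqref{eq:traj_LTV}, together with definition \eqref{eq:occup_DPV_as_prob_separated_positive} and the central symmetry $-\mathcal{O}(\overline{0})=\mathcal{O}(\overline{0})$, to obtain
\begin{align*}
    \occupDPVNoTail
    &=\Prob_{\bW_\tau}\!\left\{\bW_\tau\in \{\overline{y}-\xunpertNoTail\}\oplus \mathcal{O}(\overline{0})\right\}
    \;=:\;f\!\left(\overline{y}-\xunpertNoTail\right).
\end{align*}
Thus maximizing $\occupDPVNoTail$ in $\overline{y}$ is the same as maximizing $f(\overline{c})$ in $\overline{c}\in\mathcal{X}$, and it suffices to show the maximizer is $\overline{c}=\overline{0}$.

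Next I would establish the two key properties of $f$. For log-concavity, I would invoke \eqref{eq:log-concaveProbSetMove}, which requires $\Prob_{\bW_\tau}$ to be log-concave and $\mathcal{O}(\overline{0})$ to be convex. The convexity of $\mathcal{O}(\overline{0})$ is an assumption, and log-concavity of $\Prob_{\bW_\tau}$ follows exactly as in the proof of Lemma~\ref{lem:FSRPM_log_concave}: the IID assumption promotes the log-concavity of $\Prob_{\bw}$ to log-concavity of the joint measure $\ProbOvBwtauNoTail$ on $\mathcal{W}^\tau$ (product of log-concave densities is log-concave), and log-concavity is preserved under the linear map $\mathscr{C}_W(\tau;\cdot)$ by \cite[Lem.~2.1]{dharmadhikari1988unimodality}. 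For central symmetry of $f$, I would apply Lemma~\ref{lem:symmetric_ProbbWtau} to obtain that $\Prob_{\bW_\tau}$ is centrally symmetric, and then compute
\begin{align*}
    f(-\overline{c})
    &=\Prob_{\bW_\tau}\!\left\{\bW_\tau\in \{-\overline{c}\}\oplus \mathcal{O}(\overline{0})\right\}
    =\Prob_{\bW_\tau}\!\left\{\bW_\tau\in -\big(\{-\overline{c}\}\oplus \mathcal{O}(\overline{0})\big)\right\}
    =\Prob_{\bW_\tau}\!\left\{\bW_\tau\in \{\overline{c}\}\oplus \mathcal{O}(\overline{0})\right\}=f(\overline{c}),
\end{align*}
where the second equality uses central symmetry of $\Prob_{\bW_\tau}$ and the third uses $-\mathcal{O}(\overline{0})=\mathcal{O}(\overline{0})$.

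Finally, for any $\overline{c}\in\mathcal{X}$, log-concavity of $f$ applied to $\overline{0}=\tfrac{1}{2}\overline{c}+\tfrac{1}{2}(-\overline{c})$ combined with evenness gives
\begin{align*}
    f(\overline{0})\;\geq\; f(\overline{c})^{1/2}\,f(-\overline{c})^{1/2}\;=\;f(\overline{c}),
\end{align*}
so $\overline{c}=\overline{0}$ is a maximizer of $f$, i.e., $\ymax=\xunpertNoTail$. The only mildly delicate step is justifying log-concavity of $\Prob_{\bW_\tau}$ under the linear (and possibly rank-deficient) map $\mathscr{C}_W(\tau;\cdot)$; the rank deficiency is harmless because log-concavity of measures (as opposed to densities) is preserved under arbitrary linear pushforwards, which is precisely the content of Lemma~\ref{lem:FSRPM_log_concave} already invoked in the paper.
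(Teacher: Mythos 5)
Your proposal is correct and follows essentially the same route as the paper's proof: reduce to a function of the displacement $\overline{y}-\xunpertNoTail$, show it is even (via Lemma~\ref{lem:symmetric_ProbbWtau} and central symmetry of $\mathcal{O}(\overline{0})$) and log-concave, and conclude that the origin is the maximizer from $f(\overline{0})\geq \sqrt{f(\overline{c})f(-\overline{c})}$. The only cosmetic difference is that you inline the log-concavity argument through \eqref{eq:log-concaveProbSetMove} and the pushforward of $\Prob_{\overline{\bw}}$ under $\mathscr{C}_W(\tau;\cdot)$, whereas the paper cites Proposition~\ref{prop:OccupDPV_logcon} and composes with the affine shift; these are the same argument.
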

\begin{proof}
    For every $\overline{y}\in \mathcal{X}$, we define $\yshift = \overline{y}-\xunpertNoTail \in \mathcal{X}$, and $\ell: \mathcal{X} \rightarrow [0,1]$,
    \begin{align}
        \ell(\yshift)\triangleq\occupDPVshift=\occupDPVNoTail.
    \end{align}
    We will show that I) $\ell(\yshift)$ is even in $\yshift\in \mathcal{X}$, II) $\ell(\yshift)$ is log-concave in $\yshift\in \mathcal{X}$, and III) $ \overline{0}$ is the maximizer of $\ell$ since it is log-concave and even.
    Using III), $\ell( \overline{0})\geq \ell(\yshift) \equiv \occupDPVxunpert\geq \occupDPVNoTail$ for all $\yshift\in \mathcal{X}$ and the corresponding $ \overline{y}$, which completes the proof.

    \emph{Part I) $\ell$ is even over $\yshift$}: Since $ \mathcal{O}( \overline{0})$ is centrally symmetric, we use Proposition~\ref{prop:OccupDPV_symm} to define $\occupDPVNoTail$.
    By definition of $\ell(\cdot)$,
    \begin{align}
        \ell(\yshift)&=\ProbbxNoTail\{\bx_\tau \in \mathcal{O}( \xunpertNoTail+\yshift)\} &=\ProbbxNoTail\{\bx_\tau \in \mathcal{O}( \overline{0}) \oplus \{ \xunpertNoTail+\yshift\}\} \label{eq:ymax_symm1} \\
                  &=\ProbbWtau\{\bW_\tau \in \mathcal{O}( \overline{0}) \oplus \{\yshift\}\}. \label{eq:ymax_symm2}
    \end{align}
    where \eqref{eq:ymax_symm1} follows from Lemma~\ref{lem:rigidBody}\ref{lem:rigidBodyCenter}, and \eqref{eq:ymax_symm2} follows from Theorem~\ref{thm:FSRPM_def_DPV}.
    By Lemmas~\ref{lem:rigidBody}\ref{lem:rigidBodyCenter} and~\ref{lem:symmetric_ProbbWtau}, we have $\ell(\yshift)=\ell(-\yshift)$.

    \emph{Part II) $\ell$ is log-concave over $\yshift$}: From Proposition~\ref{prop:OccupDPV_logcon}, we know that $\log{(\occupDPVNoTail)}$ is concave in $ \overline{y}$. Since composition of a concave function with an affine function preserves concavity~\cite[Sec. 3.2.2]{boyd_convex_2004}, $\ell(\yshift)$ is log-concave.
    
    \emph{Part III) $ \overline{0}$ is the maximizer of $\ell$}: By definition of log-concavity, $\ell( \overline{0})\geq \sqrt{\ell(\yshift)\ell(-\yshift)},\ \forall\yshift\in \mathcal{X}$~\cite[Sec. 3.5.1]{boyd_convex_2004}. 
    Since $\ell$ is non-negative and $\ell(\yshift)$ is even in $\yshift$, we have $\ell( \overline{0})\geq \ell(\yshift),\ \forall\yshift\in \mathcal{X}$.
\end{proof}
\begin{prop}{\textbf{(U.s.c. $\occupDPVnothing$ and closed $\ASDPVnothing$)}}\label{prop:OccupDPV_usc}
    If $ \mathcal{O}(\overline{0})$ is closed, then $\forall\tau\in \mathbb{N}_{[1,N]}$,  $\occupDPVNoTail$ is upper semicontinuous, and $\ASDPVNoTail$ is closed $\forall\alpha\in \mathbb{R}$.
\end{prop}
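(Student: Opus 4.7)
The plan is to use the expectation form \eqref{eq:occup_DPV_as_exp} of the occupancy function, namely $\occupDPVNoTail=\ExpbxNoTail[1_{\mathcal{O}(\overline{0})}(\overline{y}-\bx_\tau)]$, and transfer upper semicontinuity from the integrand to the expectation via (reverse) Fatou's lemma. Closedness of $\ASDPVNoTail$ is then immediate from the superlevel-set characterization of u.s.c.\ functions given in Section~\ref{sub:real}.

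First, I would establish the key auxiliary fact that the indicator $1_{\mathcal{O}(\overline{0})}:\mathcal{X}\to\mathbb{R}$ is u.s.c.\ whenever $\mathcal{O}(\overline{0})$ is closed. This is immediate from the superlevel-set definition: for $c\le 0$ the set is $\mathcal{X}$, for $0<c\le 1$ it is $\mathcal{O}(\overline{0})$ (closed by hypothesis), and for $c>1$ it is empty, so all superlevel sets are closed.

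Next, fix $\overline{y}\in\mathcal{X}$ and let $\overline{y}_i\to\overline{y}$. For each realization of $\bx_\tau$, the shifted point $\overline{y}_i-\bx_\tau$ converges to $\overline{y}-\bx_\tau$, so by u.s.c.\ of $1_{\mathcal{O}(\overline{0})}$,
\begin{equation*}
\limsup_{i\to\infty} 1_{\mathcal{O}(\overline{0})}(\overline{y}_i-\bx_\tau)\le 1_{\mathcal{O}(\overline{0})}(\overline{y}-\bx_\tau)\quad\text{almost surely.}
\end{equation*}
Since the indicator is bounded above by the integrable constant $1$, the reverse Fatou lemma yields
\begin{equation*}
\limsup_{i\to\infty}\occupDPVNoTailSeq=\limsup_{i\to\infty}\ExpbxNoTail\bigl[1_{\mathcal{O}(\overline{0})}(\overline{y}_i-\bx_\tau)\bigr]\le \ExpbxNoTail\bigl[1_{\mathcal{O}(\overline{0})}(\overline{y}-\bx_\tau)\bigr]=\occupDPVNoTail,
\end{equation*}
establishing upper semicontinuity of $\occupDPVNoTail$ in $\overline{y}$ for every $\tau\in\mathbb{N}_{[1,N]}$.

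Finally, closedness of $\ASDPVNoTail=\{\overline{y}\in\mathcal{X}:\occupDPVNoTail\ge\alpha\}$ follows directly from the u.s.c.\ characterization in Section~\ref{sub:real}, since superlevel sets of a u.s.c.\ function are closed. Trivial cases ($\alpha\le 0$ giving $\mathcal{X}$, and $\alpha>1$ giving $\emptyset$) are also closed, so the statement holds for all $\alpha\in\mathbb{R}$. The main subtlety to watch is simply verifying that reverse Fatou applies (needing the uniform bound by $1$), which is immediate here, so I do not anticipate a major obstacle.
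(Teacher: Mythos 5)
Your proposal is correct and is essentially the paper's own argument in disguise: the paper applies the standard Fatou lemma to the non-negative l.s.c.\ function $1-1_{\mathcal{O}(\overline{0})}(\overline{y}-\overline{z})$ and then flips signs, which is exactly the reverse Fatou lemma you invoke with the uniform bound $1$. Both proofs rest on the same two ingredients --- u.s.c.\ of the indicator of a closed set composed with the continuous shift $\overline{y}-\overline{z}$, and a Fatou-type interchange of $\limsup$ with expectation --- so no further comparison is needed.
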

\begin{proof}
    For every sequence $ \overline{y}_i \rightarrow \overline{y}$, we need to show that $\limsup_{i \rightarrow \infty} \occupDPVNoTailSeq \leq \occupDPVNoTail$.
    Using \eqref{eq:occup_DPV_as_exp}, this is equivalent to proving $\limsup_{i \rightarrow \infty} \ExpbxNoTail\left[1_{\mathcal{O}(\overline{0})}(\overline{y}_i-\bx_\tau)\right] \leq \ExpbxNoTail\left[1_{\mathcal{O}(\overline{0})}(\overline{y}-\bx_\tau)\right]$. 

    Since $ \mathcal{O}( \overline{0})$ is closed, the function $1_{\mathcal{O}( \overline{0})}( \overline{y})$ is u.s.c. in $ \overline{y}$.
    The function $-1_{\mathcal{O}(\overline{0})}(\overline{y}- \overline{z})$ is l.s.c in $ \overline{y}, \overline{z}$ since $-1_{\mathcal{O}( \overline{0})}( \overline{y})$ is l.s.c and $ \overline{y}- \overline{z}$ is continuous in $ \overline{y}, \overline{z}$~\cite[Ex. 1.4]{rockafellar_variational_2009}.
    Therefore, the function $f: \mathcal{X}^2 \rightarrow \{0,1\}$ with $f( \overline{y}, \overline{z}) = 1-1_{\mathcal{O}(\overline{0})}(\overline{y}- \overline{z})$ is l.s.c. in $ \overline{y}$ and $ \overline{z}$.
    Specifically, ${\liminf}_{i \rightarrow \infty} f(\overline{y}_i, \overline{z}) \geq  f(\overline{y}, \overline{z})$ for any $ \overline{z}$.
    Using~\cite[Thm. 1.12d]{RudinReal1987}, we also conclude that $f$ is Borel-measurable in $ \overline{y}$ and $\overline{z}$.
    By construction, $f(\overline{y}_i, \bx_\tau), f(\overline{y}, \bx_\tau)$ is non-negative (pointwise). 
    Hence, by Fatou's lemma~\cite[Sec. 6.2, Thm. 2.1]{ChowProbability1997}, the fact that $f$ is l.s.c, Borel-measurable, and non-negative, and linearity of the expectation operator, we have
    \begin{align}
        \liminf_{i \rightarrow \infty} \ExpbxNoTail\left[f(\overline{y}_i, \bx_\tau)\right] &\geq \ExpbxNoTail\left[\liminf_{i \rightarrow \infty} f(\overline{y}_i, \bx_\tau)\right] \geq \ExpbxNoTail\left[f(\overline{y}, \bx_\tau)\right] \nonumber\\
        \therefore \limsup_{i \rightarrow \infty} \ExpbxNoTail\left[1_{\mathcal{O}(\overline{0})}(\overline{y}_i-\bx_\tau)\right] &\leq \ExpbxNoTail\left[1_{\mathcal{O}(\overline{0})}(\overline{y}-\bx_\tau)\right].\nonumber
    \end{align}

    Closedness of $\ASDPVNoTail$ follows from the u.s.c. of $\occupDPVNoTail$ and \eqref{eq:ASDPV_defn}.
\end{proof}

\begin{prop}{\textbf{(Bounded $\ASDPVnothing$)}}
    If $ \mathcal{O}(\overline{0})$ is bounded, then $\ASDPVNoTail$ is bounded for every $\alpha>0$ and $\tau\in \mathbb{N}_{[1,N]}$.\label{prop:ASDPV_bounded}
\end{prop}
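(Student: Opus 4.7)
The plan is to use boundedness of $\mathcal{O}(\overline{0})$ to enclose it in a Euclidean ball, and then exploit the tightness of any Borel probability measure on $\mathbb{R}^n$ to force $\occupDPVNoTail$ below $\alpha$ for all $\overline{y}$ outside a sufficiently large ball.

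First, since $\mathcal{O}(\overline{0})$ is bounded, there exists $r>0$ such that $\mathcal{O}(\overline{0})\subseteq \Ball(\overline{0},r)$. Using the equivalent definition \eqref{eq:occup_DPV_as_prob_separated}, for every $\overline{y}\in\mathcal{X}$,
\begin{align*}
\occupDPVNoTail = \ProbbxNoTail\{(\overline{y}-\bx_\tau)\in \mathcal{O}(\overline{0})\}
\leq \ProbbxNoTail\{(\overline{y}-\bx_\tau)\in \Ball(\overline{0},r)\}
= \ProbbxNoTail\{\bx_\tau\in \Ball(\overline{y},r)\}.
\end{align*}
Thus it suffices to make the last probability strictly less than $\alpha$ for all $\overline{y}$ of sufficiently large norm.

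Next, since $\ProbbxNoTail$ is a Borel probability measure on $\mathbb{R}^n$, continuity of measure applied to the increasing sequence of balls $\Ball(\overline{0},M)\uparrow \mathbb{R}^n$ yields $\lim_{M\to\infty}\ProbbxNoTail\{\bx_\tau\in \Ball(\overline{0},M)\}=1$. Hence, given $\alpha>0$, we can choose $M_\alpha>0$ large enough that $\ProbbxNoTail\{\bx_\tau\notin \Ball(\overline{0},M_\alpha)\}<\alpha$. For any $\overline{y}$ with $\|\overline{y}\|_2>M_\alpha+r$, the triangle inequality gives $\Ball(\overline{y},r)\cap \Ball(\overline{0},M_\alpha)=\emptyset$, so $\Ball(\overline{y},r)\subseteq \mathcal{X}\setminus \Ball(\overline{0},M_\alpha)$ and therefore $\occupDPVNoTail \leq \ProbbxNoTail\{\bx_\tau\notin \Ball(\overline{0},M_\alpha)\}<\alpha$.

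This shows $\overline{y}\notin\ASDPVNoTail$ whenever $\|\overline{y}\|_2>M_\alpha+r$, i.e., $\ASDPVNoTail\subseteq \Ball(\overline{0},M_\alpha+r)$, which is bounded. I do not anticipate any serious obstacle: the only ingredient beyond the already-established definitions is the standard fact that any Borel probability measure on $\mathbb{R}^n$ is tight, which follows directly from continuity of measure on an exhausting sequence of balls. The hypothesis $\alpha>0$ is used crucially in the final step, as expected from \eqref{eq:ASDPV_full_empty} which shows $\ASDPVNoTail=\mathcal{X}$ when $\alpha=0$.
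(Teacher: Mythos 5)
Your proof is correct, and it takes a genuinely different route from the paper's. The paper argues by contradiction with a packing argument: supposing there is a unit direction $\overline{d}_c$ along which $\occupDPVNoTailbdc \geq \alpha$ for every $b>0$, it picks $N_\alpha$ with $N_\alpha\alpha>1$ (Archimedean property) and places $N_\alpha$ mutually disjoint translates of $-\mathcal{O}(-(\overline{y}+b\overline{d}_c))$ along that ray at spacing $R>2r$; summing their probabilities then exceeds $1$, a contradiction. You instead bound $\occupDPVNoTail \leq \ProbbxNoTail\{\bx_\tau\in\Ball(\overline{y},r)\}$ and invoke tightness of the Borel probability measure (continuity from below along an exhausting sequence of balls) to obtain a single radius $M_\alpha$ valid for all $\overline{y}$ simultaneously. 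What your approach buys is an explicit, direction-uniform bounding ball $\Ball(\overline{0},M_\alpha+r)\supseteq\ASDPVNoTail$; this is worth noting because the paper's stated reduction --- that along each fixed unit vector $\overline{d}$ the occupancy eventually falls below $\alpha$, with a threshold $b_0$ that may depend on $\overline{d}$ --- is, for a general (not necessarily convex) set, formally weaker than boundedness, and your tightness step closes that quantifier gap cleanly. What the paper's approach buys is that it needs only finite additivity and the disjointness of finitely many translates, rather than continuity of measure. Both arguments use $\alpha>0$ essentially and require only boundedness of $\mathcal{O}(\overline{0})$.
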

\begin{proof}
    Let $b, b_0\in \mathbb{R}$.
Consider some $ \overline{y} \in \mathcal{X}$ such that $ \overline{y}\in \ASDPVNoTail \Rightarrow \occupDPVNoTail > \alpha$. We need to show that for every unit vector $ \overline{d}\in \mathcal{X}$, there exists $b_0>0$ such that $\occupDPVNoTailbd <\alpha$ for all $b> b_0$.

Assume, for contradiction, that there is one unit vector $ \overline{d}_c\in \mathcal{X}$ for which no such $b_0$ exists, or equivalently, for every $b>0$, $\occupDPVNoTailbdc \geq \alpha$.

Since $ \mathcal{O}(\overline{0})$ is compact, there exists a ball $ \Ball( \overline{0},r)$ centered at origin with radius $r\in \mathbb{R}$, $r>0$ such that $-\mathcal{O}(\overline{0})\subset \Ball( \overline{0}, r)$.
From Lemma~\ref{lem:rigidBody}\ref{lem:rigidBodyCenter}, $ -\mathcal{O}(-(\overline{y}+b \overline{d}_c)) \subset \Ball( \overline{y}+b \overline{d}_c, r)$ for any $b>0$. 
Given $\alpha>0$, there exists $N_\alpha\in \mathbb{N}\setminus\{0\}$ such that $\alpha N_\alpha >1$ (Archimedean property~\cite[Corr. 5.4.13]{TaoAnalysisI}). 
We define $\GammaBi = \{iR: i\in \mathbb{N}_{[1, N_\alpha]} \}\subset \mathbb{R}$ where $ R > 2r$ which is a collection of $N_\alpha$ options for $b$.
By our choice of $R$, for any $b_1, b_2\in\GammaBi,\ b_1\neq b_2$, the sets
$\Ball( \overline{y}+b_1 \overline{d}_c;r)$ and $\Ball(\overline{y}+b_2 \overline{d}_c;r)$ are distinct, and thereby the sets $-\mathcal{O}(-(\overline{y}+b_1\overline{d}_c))$ and $-\mathcal{O}(-(\overline{y}+b_2\overline{d}_c))$ are distinct.
Construct a Borel set $\mathcal{G}_X = \cup_{b\in \GammaBi} (-\mathcal{O}(-(\overline{y}+b\overline{d}_c)))$, a union of mutually disjoint sets.
Hence, $\ProbbxNoTail\{\bx_\tau\in \mathcal{G}_X\}=\sum_{i\in\GammaBi}\ProbbxNoTail\{\bx_\tau\in (-\mathcal{O}(-(\overline{y}+b\overline{d}_c)))\}$.
From \eqref{eq:occup_DPV_as_prob}, our assumption on $ \overline{d}_c$, and the definition of $N_\alpha$, we have
\begin{align}
    \ProbbxNoTail\{\bx_\tau\in \mathcal{G}_X\}&= \sum_{i\in\GammaBi} \occupDPVNoTailbidc \geq N_\alpha\alpha>1 \nonumber
\end{align}
which leads to a contradiction, completing the proof.
\end{proof}

We utilize the Heine-Borel theorem and summarize the results from Propositions~\ref{prop:OccupDPV_logcon},~\ref{prop:OccupDPV_usc}, and~\ref{prop:ASDPV_bounded} as Theorem~\ref{thm:convexCompactDPV}.
Theorem~\ref{thm:convexCompactDPV} addresses Problem~\ref{prob_st:occupDPV_cvx_cmpt}.
\begin{thm}{\textbf{(Compact and convex $\ASDPVnothing$)}}~\label{thm:convexCompactDPV}
    If $\Prob_{\bw}$ is log-concave over $ \mathcal{W}$ and $ \mathcal{O}(\overline{0})$ is convex and compact, then $\ASDPVNoTail$ is a convex and compact set for all $\alpha>0$ and $\forall\tau\in \mathbb{N}_{[1,N]}$.
\end{thm}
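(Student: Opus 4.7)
The plan is to assemble the conclusion directly from the three preceding propositions together with the Heine--Borel theorem, so no substantially new argument is required. First I would unpack the hypothesis that $\mathcal{O}(\overline{0})$ is compact: by Heine--Borel (Section~\ref{sub:real}), a compact subset of $\mathcal{X}=\mathbb{R}^n$ is both closed and bounded, which puts in hand exactly the separate geometric hypotheses that the three propositions need.

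Next I would invoke the propositions one by one for arbitrary $\tau\in\mathbb{N}_{[1,N]}$ and the given $\alpha>0$. Proposition~\ref{prop:OccupDPV_logcon} applies since $\Prob_{\bw}$ is assumed log-concave over $\mathcal{W}$ and $\mathcal{O}(\overline{0})$ is convex, yielding convexity of $\ASDPVNoTail$. Proposition~\ref{prop:OccupDPV_usc} applies since $\mathcal{O}(\overline{0})$ is closed, yielding closedness of $\ASDPVNoTail$. Proposition~\ref{prop:ASDPV_bounded} applies since $\mathcal{O}(\overline{0})$ is bounded and $\alpha>0$, yielding boundedness of $\ASDPVNoTail$.

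Finally I would apply Heine--Borel in the reverse direction: a closed and bounded subset of $\mathbb{R}^n$ is compact, so $\ASDPVNoTail$ is compact, and combined with the convexity already established this yields the stated claim. Because all three of the analytical ingredients (log-concavity, upper semicontinuity, and a volumetric/Archimedean argument for boundedness) have been handled in the propositions, there is no real obstacle remaining; the theorem is essentially a packaging result, and the only care needed is to verify that the hypotheses of each proposition are precisely the ones available after decomposing ``compact and convex'' into ``convex, closed, and bounded.''
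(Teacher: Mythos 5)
Your proposal is correct and matches the paper's own argument exactly: the paper states that Theorem~\ref{thm:convexCompactDPV} simply summarizes Propositions~\ref{prop:OccupDPV_logcon},~\ref{prop:OccupDPV_usc}, and~\ref{prop:ASDPV_bounded} via the Heine--Borel theorem, which is precisely the decomposition and reassembly you describe.
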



\subsection{\Occupfunstext{} and \avoidsetstext{} for a rigid body obstacle with \DMSP{} dynamics}
\label{sub:occup_DMSP}

Consider a rigid body obstacle with \DMSP{} dynamics \eqref{eq:DMSP_sys} initialized to  $ \overline{s}_0 = {[q_0, \overline{x}_0^\top]}^\top, \overline{\lambda}_0$, and $\UN$, and shape $ \mathcal{O}( \overline{0})$ defined using an appropriate $h_\mathrm{obs}$. 
We define collision only using the continuous state, i.e., the robot and the obstacle are said to be in collision, if the state of the point robot $ \overline{y}$ and the obstacle $\bfs_\tau=[\bsig_\tau\  \bx_\tau^\top]^\top$ satisfies the condition,
\begin{align}
    \bfs_\tau\in\cup_{q\in \mathcal{Q}} (q, -\mathcal{O}(-\overline{y})) \quad \Leftrightarrow \quad \bx_\tau \in -\mathcal{O}(- \overline{y})\ \forall \bsig_\tau.\label{eq:occup_event_DMSP}
\end{align}
We collect all discrete state realizations $ \overline{q}_\tau$ that have a non-zero probability of occurrence under $ \mathscr{P}_{\bsig}$ \eqref{eq:sys_disc} in $\QtauMinus$. 
Formally, we define $\QtauMinus$ using \eqref{eq:QtauProj} as
\begin{align}
    \QtauMinus = \QtauProjObs\subseteq \mathcal{Q}^\tau.\label{eq:QtauMinusDefn}
\end{align}
The \occupfuntext{} of a rigid body obstacle with \DMSP{} dynamics, $\occupDMSP :\mathcal{X} \rightarrow [0,1]$, may then be defined using \eqref{eq:occup_event_DMSP}, Lemma~\ref{lem:rigidBody}\ref{lem:rigidBody1}, and Theorem~\ref{thm:FSRPM_def_DMSP} in terms of $\occupDPVnothing$,
\begin{align}
    \occupDMSP &= \ProbbfsNoU\left\{\bfs_\tau\in\cup_{q\in \mathcal{Q}} (q, -\mathcal{O}(-\overline{y}))\right\} \label{eq:occup_def_DMSP}\\
    &=\sum_{\overline{q}_{\tau}\in \QtauMinus}\Big(\ProbOvbsigTauNoTail\{\overline{\bsig}_{\tau}=\overline{q}_{\tau}\}\ProbbxDMSPAS\left\{\bx_\tau\in(-\mathcal{O}(-\overline{y}))\right\}\Big)\nonumber\\
    &=\sum_{\overline{q}_{\tau}\in \QtauMinus}\Big(\occupDPVLlambda\ProbOvbsigTauNoTail\{\overline{\bsig}_{\tau}=\overline{q}_{\tau}\}\Big). \label{eq:occup_DMSP}
\end{align}
Similarly to \eqref{eq:ASDPV_defn}, we define the $\alpha$-superlevel set of the \occupfuntext{} $\occupDMSPNoTail$ as the \avoidsettext{} which have properties identical to \eqref{eq:ASDPV_properties},
\begin{align}
    \ASDMSP =\{\overline{y}\in \mathcal{X} : \occupDMSP\geq \alpha\}\label{eq:ASDMSP_defn}.
\end{align}
Here, the subscript $S$ denotes that the obstacle dynamics are \DMSP{}.
For brevity, we will omit references to $ \overline{\lambda}_0$, $ \overline{s}_0$, $ \overline{x}_0$, $ \LlambdatauMinusOne(\cdot)$, $\Utau$, and $ h_\mathrm{obs}$.

\begin{prop}{\textbf{(U.s.c. $\occupDMSPnothing$ and closed, bounded, and compact $\ASDMSPnothing$)}}\label{prop:OccupDMSP_prop}
    {\renewcommand{\theenumi}{\alph{enumi}}
    \begin{enumerate}
        \item If $ \mathcal{O}(\overline{0})$ is bounded, then $\ASDMSPNoTail$ is bounded $\forall\alpha>0,\ \tau\in \mathbb{N}_{[1,N]}$.\label{prop:OccupDMSP_prop_bounded}
        \item If $ \mathcal{O}(\overline{0})$ is closed, then $\forall\tau\in \mathbb{N}_{[1,N]}$, $\occupDMSPNoTail$ is upper semicontinuous over $ \overline{y}\in\mathcal{X}$, and $\ASDMSPNoTail$ is closed $\forall\alpha\in \mathbb{R},\ \alpha\geq 0,\ \tau\in \mathbb{N}_{[1,N]}$.\label{prop:OccupDMSP_prop_usc}
        \item If $ \mathcal{O}(\overline{0})$ is compact, then $\ASDMSPNoTail$ is compact $\forall\alpha>0,\ \tau\in \mathbb{N}_{[1,N]}$.\label{prop:OccupDMSP_prop_compact}
    \end{enumerate}
    }
\end{prop}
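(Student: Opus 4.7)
The plan is a clean reduction to the \DPV{} analogues via representation \eqref{eq:occup_DMSP}, which expresses $\occupDMSPNoTail$ as a finite convex combination of \DPV{} occupancy functions $\occupDPVLlambda$ indexed by $\overline{q}_\tau \in \QtauMinus$. The index set $\QtauMinus \subseteq \mathcal{Q}^\tau$ is finite because $\mathcal{Q}$ is finite, and the weights $\ProbOvbsigTauNoTail\{\overline{\bsig}_\tau = \overline{q}_\tau\}$ are non-negative and, by construction of $\QtauMinus$ as the collection of all discrete-state trajectories carrying positive probability, sum to unity.

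For part \textbf{(b)}, closedness of $\mathcal{O}(\overline{0})$ combined with Proposition~\ref{prop:OccupDPV_usc} gives that each $\occupDPVLlambda$ is u.s.c.\ in $\overline{y}$. Since a non-negatively-weighted finite sum of u.s.c.\ functions is u.s.c.\ (immediate from $\limsup_i\sum_j w_j f_j(\overline{y}_i) \leq \sum_j w_j \limsup_i f_j(\overline{y}_i)$ for $w_j \geq 0$), the function $\occupDMSPNoTail$ is u.s.c.\ in $\overline{y}$, and closedness of the superlevel set $\ASDMSPNoTail$ follows for every $\alpha \in \mathbb{R}$ from the standard characterization of u.s.c.\ functions via closed superlevel sets.

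For part \textbf{(a)}, the crucial observation is that any convex combination of values in $[0,1]$ is bounded above by their maximum, so $\occupDMSPNoTail \geq \alpha$ forces $\occupDPVLlambda \geq \alpha$ for at least one $\overline{q}_\tau \in \QtauMinus$. This yields the inclusion
\begin{align}
    \ASDMSPNoTail \subseteq \bigcup_{\overline{q}_\tau \in \QtauMinus} \left\{\overline{y}\in\mathcal{X}:\occupDPVLlambda \geq \alpha\right\}.\nonumber
\end{align}
When $\mathcal{O}(\overline{0})$ is bounded, Proposition~\ref{prop:ASDPV_bounded} ensures every set on the right is bounded for $\alpha>0$; a finite union of bounded sets is bounded, so $\ASDMSPNoTail$ is bounded.

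Part \textbf{(c)} is then immediate from the Heine-Borel theorem: a compact $\mathcal{O}(\overline{0})$ is both closed and bounded, so parts (a) and (b) together render $\ASDMSPNoTail$ closed and bounded, hence compact. No serious obstacle is anticipated; the entire argument is a routine transfer of \DPV{} properties through the finite-mixture structure of \eqref{eq:occup_DMSP}. The only subtle point is ensuring that $\QtauMinus$ enumerates \emph{exactly} the discrete-state trajectories contributing positive probability to $\bfs_\tau$, so that the weights form a bona fide probability distribution permitting the maximum-bound argument used in part (a).
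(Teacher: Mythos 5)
Your proof is correct. Parts (b) and (c) coincide with the paper's argument: the paper likewise obtains upper semicontinuity of $\occupDMSPNoTail$ from the finite non-negative weighted sum of u.s.c.\ functions in \eqref{eq:occup_DMSP} together with Proposition~\ref{prop:OccupDPV_usc}, and then gets compactness from (a), (b), and Heine--Borel. Part (a) is where you genuinely diverge: the paper simply re-runs the contradiction argument of Proposition~\ref{prop:ASDPV_bounded} directly on $\occupDMSPnothing$ (disjoint translated copies of the obstacle along an unbounded direction, plus the Archimedean property), whereas you exploit the mixture structure of \eqref{eq:occup_DMSP} to get the inclusion $\ASDMSPNoTail\subseteq\bigcup_{\overline{q}_\tau\in\QtauMinus}\{\overline{y}:\occupDPVLlambda\geq\alpha\}$ and then invoke Proposition~\ref{prop:ASDPV_bounded} term by term. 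Your route is shorter and reuses the \DPV{} result as a black box, at the price of the one hypothesis you correctly flag --- that the weights $\ProbOvbsigTauNoTail\{\overline{\bsig}_\tau=\overline{q}_\tau\}$ over $\QtauMinus$ sum to (at most) one, which holds because the trajectories excluded from $\QtauMinus$ have zero-probability prefixes and hence contribute nothing; the paper's route avoids any appeal to the normalization of the weights but repeats a longer argument. As a side benefit, your covering by the sets $\{\overline{y}:\occupDPVLlambda\geq\alpha\}$ is a tighter relative of the covering in Theorem~\ref{thm:AvoidSetUnionDMSP} (which uses the rescaled thresholds $\alphaSqtau$), so the same inclusion does double duty. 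Both approaches are sound.
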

\begin{proof}
    \emph{Proof of~\ref{prop:OccupDMSP_prop_bounded})}: Similar to the proof in Proposition~\ref{prop:ASDPV_bounded}.

    \emph{Proof of~\ref{prop:OccupDMSP_prop_usc})}: Follows from \eqref{eq:occup_DMSP}, Propositions~\ref{prop:OccupDPV_logcon} and~\ref{prop:OccupDPV_usc}, and the fact that non-negative finite sums of bounded u.s.c functions are u.s.c~\cite[Prop. B.3]{PuttermanMarkov2005}.

    \emph{Proof of~\ref{prop:OccupDMSP_prop_compact})}: By~\ref{prop:OccupDMSP_prop_bounded}),~\ref{prop:OccupDMSP_prop_usc}), and Heine-Borel theorem.
\end{proof}
We do not have sufficient conditions for log-concavity of $\occupDMSPnothing$ or convexity of $\ASDMSPnothing$ similar to Proposition~\ref{prop:OccupDPV_logcon}, since non-negative weighted sums of log-concave functions are not log-concave in general~\cite[Sec. 3.5]{boyd_convex_2004}.
\begin{thm}{\textbf{(Covering of $\ASDMSPnothing$ using convex and compact $\ASDPVnothing$)}}~\label{thm:AvoidSetUnionDMSP}
    If $\Prob_{\bw}$ is log-concave over $ \mathcal{W}$ and $ \mathcal{O}(\overline{0})$ is convex and compact, then $\ASDMSPNoTail$ is covered by a union of convex and compact sets for $\alpha>0$ and $\forall\tau\in \mathbb{N}_{[1,N]}$,
    \begin{align}
        \ASDMSPNoTail&\subseteq\bigcup_{ \overline{q}_\tau\in \QtauMinus} \ASDPValpha \label{eq:overapproxDMSP}
\end{align}
with $\alphaSqtau= \frac{\alpha}{\vert\QtauMinus\vert\ \ProbOvbsigTau\{\overline{\bsig}_{\tau}=\overline{q}_{\tau}\}}$.
\end{thm}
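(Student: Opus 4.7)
The plan is to combine Theorem~\ref{thm:convexCompactDPV} with a direct averaging/contradiction argument that exploits the representation \eqref{eq:occup_DMSP} of the DMSP occupancy function as a $\ProbOvbsigTauNoTail$-weighted convex combination of DPV occupancy functions indexed by $\overline{q}_\tau\in\QtauMinus$.

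First I would establish that each set in the proposed cover is well-defined, convex, and compact. For every $\overline{q}_\tau\in\QtauMinus$ the definition \eqref{eq:QtauMinusDefn} together with \eqref{eq:QtauProj} guarantees $\ProbOvbsigTau\{\overline{\bsig}_\tau=\overline{q}_\tau\}>0$, so $\alphaSqtau=\alpha/(|\QtauMinus|\cdot\ProbOvbsigTau\{\overline{\bsig}_\tau=\overline{q}_\tau\})$ is a strictly positive real number. Since the underlying continuous dynamics along the fixed discrete realization $\overline{q}_\tau$ are exactly the DPV dynamics \eqref{eq:DPV_sys} with parameter sequence $\LlambdatauMinusOne(\overline{q}_{\tau-1})$, Theorem~\ref{thm:convexCompactDPV} applies (the hypotheses on $\Prob_{\bw}$ and $\mathcal{O}(\overline{0})$ are inherited from those of the current statement) and each $\ASDPValpha$ is convex and compact. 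A finite union of such sets therefore gives a cover by convex and compact sets, as claimed.

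Next I would prove the inclusion $\ASDMSPNoTail\subseteq\bigcup_{\overline{q}_\tau\in\QtauMinus}\ASDPValpha$ by contrapositive. Fix $\overline{y}\in\mathcal{X}$ and suppose $\overline{y}\notin\ASDPValpha$ for every $\overline{q}_\tau\in\QtauMinus$, i.e., the strict inequality $\occupDPVLlambda<\alphaSqtau$ holds for every $\overline{q}_\tau\in\QtauMinus$. Multiplying each such inequality by $\ProbOvbsigTauNoTail\{\overline{\bsig}_\tau=\overline{q}_\tau\}>0$, summing over $\overline{q}_\tau\in\QtauMinus$, and substituting into \eqref{eq:occup_DMSP} yields
\begin{align*}
\occupDMSPNoTail&=\sum_{\overline{q}_\tau\in\QtauMinus}\occupDPVLlambda\cdot\ProbOvbsigTauNoTail\{\overline{\bsig}_\tau=\overline{q}_\tau\}\\
&<\sum_{\overline{q}_\tau\in\QtauMinus}\alphaSqtau\cdot\ProbOvbsigTauNoTail\{\overline{\bsig}_\tau=\overline{q}_\tau\}=\sum_{\overline{q}_\tau\in\QtauMinus}\frac{\alpha}{|\QtauMinus|}=\alpha,
\end{align*}
so $\overline{y}\notin\ASDMSPNoTail$. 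Taking contrapositives delivers \eqref{eq:overapproxDMSP}.

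I do not anticipate a substantive obstacle: the only subtleties are bookkeeping ones. Specifically, (i) one must restrict the summation to $\QtauMinus$ rather than all of $\mathcal{Q}^\tau$ so that dividing by $\ProbOvbsigTau\{\overline{\bsig}_\tau=\overline{q}_\tau\}$ is legal in the definition of $\alphaSqtau$ (and so the strict inequality is not trivialized by zero weights), and (ii) one must invoke $|\QtauMinus|<\infty$, which follows from $\mathcal{Q}$ being finite, so that the normalizing factor $1/|\QtauMinus|$ makes sense. Once these are in place, the averaging step is the usual "a weighted average bounded below by $\alpha$ must have at least one summand bounded below by the corresponding share of $\alpha$" argument, and the convexity/compactness of the cover is an immediate consequence of Theorem~\ref{thm:convexCompactDPV}.
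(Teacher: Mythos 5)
Your proposal is correct and follows essentially the same route as the paper's own proof: argue the contrapositive, use the strict inequalities $\occupDPVLlambda < \alphaSqtau$ weighted by $\ProbOvbsigTauNoTail\{\overline{\bsig}_\tau=\overline{q}_\tau\}$ and summed via \eqref{eq:occup_DMSP} to conclude $\occupDMSPNoTail<\alpha$, and invoke Theorem~\ref{thm:convexCompactDPV} for convexity and compactness of each set in the cover. The only cosmetic difference is that the paper additionally remarks on the existence of a point outside the finite union of compact sets, which your contrapositive does not need.
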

\begin{proof}
    By definition of $\QtauMinus$, $\alphaSqtau$ is well-defined.
    We will prove the contrapositive of \eqref{eq:overapproxDMSP}, i.e., for some $ \overline{y}\in \mathcal{X}$, $\overline{y}\not\in\bigcup_{ \overline{q}_\tau\in \QtauMinus} \ASDPValphaNoTail$ implies $\overline{y}\not\in\ASDMSPNoTail$.
    Existence of a $ \overline{y}\in \mathcal{X}$ such that $\overline{y}\not\in\bigcup_{ \overline{q}_\tau\in \QtauMinus} \ASDPValphaNoTail$ is guaranteed by Theorem~\ref{thm:convexCompactDPV}, $\alpha>0$, and the fact that a finite union of compact sets is compact~\cite[Thm. 12.5.10]{TaoAnalysisII}.

    From \eqref{eq:ASDPV_defn}, we have for every ${ \overline{q}_\tau\in \QtauMinus}$, 
    \begin{align}
        \occupDPVNoTailbeta&< \frac{\alpha}{\vert\QtauMinus\vert\ProbOvbsigTau\{\overline{\bsig}_{\tau}=\overline{q}_{\tau}\}} \nonumber \\
        \therefore\ProbOvbsigTau\{\overline{\bsig}_{\tau}=\overline{q}_{\tau}\}\occupDPVNoTailbeta&< \frac{\alpha}{\vert\QtauMinus\vert}. \label{eq:subset_proof}
\end{align}
From \eqref{eq:occup_DMSP} and \eqref{eq:subset_proof}, 
\begin{align}
        \occupDMSPNoTail = \sum_{ \overline{q}_\tau \in \QtauMinus}\ProbOvbsigTau\{\overline{\bsig}_{\tau}=\overline{q}_{\tau}\} \occupDPVNoTailbeta&< \alpha.\nonumber
\end{align}
By \eqref{eq:ASDMSP_defn}, $ \overline{y}\not\in\ASDMSPNoTail$.

Theorem~\ref{thm:convexCompactDPV} shows that the over-approximation \eqref{eq:overapproxDMSP} is a finite union of convex and compact sets.
\end{proof}

Theorem~\ref{thm:AvoidSetUnionDMSP} addresses Problem~\ref{prob_st:occupDMSP_algo}.
We can use a collection of $\ASDPVnothing$ corresponding to appropriately defined \DPV{} dynamics to overapproximate the (non-convex) keep-out region.

\begin{rem}
    In the case of multiple rigid body obstacles with \DMSP{}/\DPV{} dynamics, risk allocation-based techniques~\cite[Sec. 3.D]{blackmore_probabilistic_2006} replace the ``joint'' safety guarantee with a conservative collection of ``individual'' safety guarantees, resulting in a union of keep-out regions defined for individual obstacles.  
\end{rem}

\section{Computation of $\ASDPVnothing$}
\label{sec:compute}

In this section, we will address Problem~\ref{prob_st:occupDPV_algo} and compute approximations of $\ASDPVnothing$ for a rigid body obstacle with \DPV{} dynamics initialized to $ \overline{x}_0, \overline{\lambda}_0$, $\overline{\Lambda}_{N-1}$, and $\UN$.
By Theorem~\ref{thm:AvoidSetUnionDMSP}, we can use these sets to overapproximate $\ASDMSPnothing$ for a rigid body obstacle with \DMSP{} dynamics as well.

Unfortunately, an exact, closed-form expression for $\occupDPVnothing$ is typically unavailable.
To avoid calculating \eqref{eq:ASDPV_defn} via a grid over the state space $ \mathcal{X}$, which is computationally expensive and lacks scalability to higher dimensional systems, we propose two grid-free alternatives to compute $\ASDPVnothing$,
\begin{enumerate}
    \item projection-based tight polytopic overapproximation and underapproximation (Algorithm~\ref{algo:ASDPV_poly}), and 
    \item Minkowski sum-based overapproximation (Algorithm~\ref{algo:MinkSum}).
\end{enumerate}
In particular, we seek an overapproximation of $\occupDPVnothing$ so that we remain conservative with respect to safety, i.e., avoiding the overapproximations of $\occupDPVnothing$ still provides the desired safety guarantees.
An underapproximation of the keep-out regions, when available, provides insight on the degree of conservativeness.
These algorithms are recursion-free by the use of forward stochastic reachability via Fourier transforms.

We can also split the computation of $\ASDPVnothing$ into an offline and online computation, using Proposition~\ref{prop:linear_exp} which follows from \eqref{eq:traj_LTV}, \eqref{eq:occup_DPV_as_prob_separated}, and \eqref{eq:ASDPV_defn}.
\begin{prop}{\textbf{(Effect of $ \overline{x}_0$ on $\occupDPVnothing$ and $\ASDPVnothing$)}}~\label{prop:linear_exp}
    Consider the \DPV{} \eqref{eq:DPV_sys} initialized to $ \overline{\lambda}_0$, $\overline{x}_0$, $ \overline{\Lambda}_{N-1}$, and $\UN$. Then,
    \begin{align}
        \occupDPVinitx &= \occupDPVinitxZero \nonumber \\
        \ASDPVinitx &= \{\xunpertNoTail\}\oplus\ASDPVinitxZero \nonumber
    \end{align}
\end{prop}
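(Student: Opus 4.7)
The plan is to reduce both equalities to the decomposition of the state in \eqref{eq:traj_LTV}, namely $\bx_\tau = \xunpertNoTail + \mathscr{C}_W(\tau;\cdot)\overline{\bw}_{\tau-1}$, noting that the stochastic part $\mathscr{C}_W(\tau;\cdot)\overline{\bw}_{\tau-1}$ does not depend on $\overline{x}_0$ (or on the drift term). Hence the only effect of changing $\overline{x}_0$ (and the known drift) is a deterministic shift of $\bx_\tau$ by $\xunpertNoTail$, which will translate directly into a shift of the argument of $\occupDPVnothing$ and therefore of $\ASDPVnothing$. Conceptually, this is just a ``change of origin'' argument, riding on the affine structure of \DPV{} dynamics.

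For the first equality, I would start from the probabilistic representation \eqref{eq:occup_DPV_as_prob}, $\occupDPVNoTail = \ProbbxNoTail\{\bx_\tau \in -\mathcal{O}(-\overline{y})\}$, substitute \eqref{eq:traj_LTV}, and move the deterministic offset $\xunpertNoTail$ to the other side of the set-membership relation:
\begin{align*}
\occupDPVinitx
&= \Prob\{\mathscr{C}_W(\tau;\cdot)\overline{\bw}_{\tau-1} \in -\mathcal{O}(-\overline{y}) \oplus \{-\xunpertNoTail\}\}.
\end{align*}
Translation invariance of the rigid body (Lemma~\ref{lem:rigidBody}\ref{lem:rigidBodyCenter}) together with \eqref{eq:obs_rigidbody_defn_reflect} gives $-\mathcal{O}(-\overline{y}) \oplus \{-\xunpertNoTail\} = -\mathcal{O}\bigl(-(\overline{y}-\xunpertNoTail)\bigr)$, so the right-hand side equals $\Prob\{\mathscr{C}_W(\tau;\cdot)\overline{\bw}_{\tau-1} \in -\mathcal{O}(-(\overline{y}-\xunpertNoTail))\}$. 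Recognizing that $\mathscr{C}_W(\tau;\cdot)\overline{\bw}_{\tau-1}$ is precisely the state of the system with all deterministic contributions zeroed out, this is $\phi_{\bx}(\overline{y}-\overline{x}_{\mathrm{nodist}}(\tau);\tau,\overline{0}) = \occupDPVinitxZero$, giving the first equality.

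For the second equality, I would apply the superlevel-set definition \eqref{eq:ASDPV_defn} together with the identity just proved:
\begin{align*}
\ASDPVinitx &= \{\overline{y}\in\mathcal{X} : \occupDPVinitx \geq \alpha\} = \{\overline{y}\in\mathcal{X} : \occupDPVinitxZero \geq \alpha\}.
\end{align*}
Substituting $\overline{y}' = \overline{y} - \xunpertNoTail$ reindexes the set as $\{\overline{y}' + \xunpertNoTail : \phi_{\bx}(\overline{y}';\tau,\overline{0})\geq \alpha\}$, which is $\{\xunpertNoTail\}\oplus\ASDPVinitxZero$ by definition of the Minkowski sum.

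I do not anticipate any real obstacle here; the main thing to be careful about is the bookkeeping on the rigid-body reflection $-\mathcal{O}(-\overline{y})$ under translation, which is handled cleanly by Lemma~\ref{lem:rigidBody}\ref{lem:rigidBodyCenter}. The result is essentially a restatement of the affine-shift symmetry of \eqref{eq:traj_LTV}, and it has practical value since it lets us precompute $\ASDPVinitxZero$ once offline and then simply translate it online by $\xunpertNoTail$ as $\overline{x}_0$ (or the open-loop input sequence) changes.
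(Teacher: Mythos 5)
Your proof is correct and follows essentially the same route the paper intends: the paper gives no explicit proof but states that the proposition ``follows from \eqref{eq:traj_LTV}, \eqref{eq:occup_DPV_as_prob_separated}, and \eqref{eq:ASDPV_defn},'' and your argument is precisely the fleshing-out of those three ingredients (your use of \eqref{eq:occup_DPV_as_prob} plus Lemma~\ref{lem:rigidBody}\ref{lem:rigidBodyCenter} is equivalent to the paper's \eqref{eq:occup_DPV_as_prob_separated}). Your explicit remark that $\phi_{\bx}(\cdot;\tau,\overline{0})$ must be read as the system with \emph{all} deterministic contributions zeroed (state $\bW_\tau$ alone), not merely $\overline{x}_0=\overline{0}$ with the input retained, is exactly the right disambiguation and is consistent with the paper's subsequent offline/online discussion.
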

Proposition~\ref{prop:linear_exp} allows the computation of $\ASDPVinitxZero$ be done offline, and an online computation of its Minkowski sum with $\xunpertNoTail$, which is the translation of $\ASDPVinitxZero$ by $\xunpertNoTail$ to obtain $\ASDPVinitx$.
Proposition~\ref{prop:linear_exp} can enable faster planning, especially when solving motion planning problems in an environment with homogeneous rigid body obstacles with stochastic \DPV{} dynamics.


\subsection{Projection-based tight polytopic approximation}
\label{sub:overpoly}

In Section~\ref{sub:tight_polytopic_approx}, we discussed how projections may be used to compute (see Algorithm~\ref{algo:tight}) arbitrarily tight polytopic over/underapproximation of convex and compact sets.
From Theorem~\ref{thm:convexCompactDPV}, we know the sufficient conditions under which $\ASDPVnothing$ is convex and compact.
To compute tight polytopic approximations of $\ASDPVnothing$, we replace the generic projection problem \eqref{prob:projection_problem} with
\begin{mini!}
    { \overline{y}\in \mathcal{X}}{{\Vert \overline{y} - \overline{p}_i \Vert}_2\label{eq:tight_poly_cost}}
{\label{prob:tight_poly}}{}
    \addConstraint{\log(\occupDPVNoTail)}{\geq\log(\alpha)\label{eq:tight_poly_constraint1}}
\end{mini!}
Problem \eqref{prob:tight_poly} is also convex, as guaranteed by Proposition~\ref{prop:OccupDPV_logcon}.
Using Algorithm~\ref{algo:tight}, we obtain two polytopes $\ASDPVNoTailunder$ and $\ASDPVNoTailover$ such that 
\begin{align}
    \ASDPVNoTailunder \subseteq &\ASDPVNoTail\subseteq \ASDPVNoTailover. \nonumber
\end{align}
In addition, we use \eqref{eq:ASDPV_full_empty} to simplify the computation.
We summarize this approach in Algorithm~\ref{algo:ASDPV_poly}.

The implementation of Algorithm~\ref{algo:ASDPV_poly} requires a nonlinear solver like MATLAB's \emph{fmincon} to solve \eqref{prob:tight_poly} due to the nonlinearity of \eqref{eq:tight_poly_constraint1}.
The evaluation of $\occupDPVnothing$ is done via Monte-Carlo simulation using MATLAB's \emph{mvncdf}~\cite{GenzJCGS1992}, resulting in MATLAB's \emph{fmincon} having to minimize under a noisy constraint.
This results in numerical issues like sensitivity to the bounding ball radius $r$ used in Algorithm~\ref{algo:tight} which is not anticipated by the theory (see Section~\ref{sub:tight_polytopic_approx}).
We plan to investigate alternative approaches to solving \eqref{prob:tight_poly} in future.
\begin{algorithm}
    \caption{Projection-based tight polytopic approximations of $\ASDPVNoTail$\label{algo:ASDPV_poly}}
    \begin{algorithmic}[1]    
        \Require{DPV with arbitrary $\bw$, threshold $\alpha\geq 0$, convex and compact rigid body $ \mathcal{O}( \overline{0})$, a desired number of samples $K$ for Algorithm~\ref{algo:tight}, bounding ball radius $r>0$}
        \Ensure{$\ASDPVNoTailunder,\ASDPVNoTailover$}     
        \State $\ymax \gets \max_{ \overline{y}\in \mathcal{X}} \occupDPVNoTail$\Comment{Use Prop.~\ref{prop:ymax} when valid}
        \If{$ \occupDPVymax\leq \alpha$}
            \State $\ASDPVNoTailunder,\ASDPVNoTailover$ $\gets \emptyset$ by \eqref{eq:ASDPV_full_empty}
        \ElsIf{$\alpha> 0$}
        \State Sample  $K$ points $ \overline{p}_i\in \partial\Ball( \ymax, r)$
        \State $\ASDPVNoTailunder,\ASDPVNoTailover$ $\gets$ Algorithm~\ref{algo:tight} with \eqref{prob:tight_poly}
        \Else
            \State $\ASDPVNoTailunder,\ASDPVNoTailover$ $\gets \mathcal{X}$ by \eqref{eq:ASDPV_full_empty}
        \EndIf
  \end{algorithmic}
\end{algorithm}

\subsection{Minkowski sum-based overapproximation}
\label{sub:Mink}

Consider the following set defined using $\mathcal{O}( \overline{0})$ and a superlevel set of $\psibxzNotail$,
\begin{align}
    \ASDPVoverMink=\left\{ \overline{z}\in \mathcal{X}: \psibxzNotail \geq \frac{\alpha}{ \mathrm{m}( \mathcal{O}(\overline{0}))}\right\}\oplus \mathcal{O}( \overline{0}). \label{eq:ASDPVoverMink}
\end{align}
where ${ \mathrm{m}( \mathcal{O}(\overline{0}))}$ refers to the Lebesgue measure of the set $ \mathcal{O}( \overline{0})$.
Proposition~\ref{prop:overapproxMinkSum} ensures that $\ASDPVoverMink$ is an overapproximation of the \avoidsettext{}.
\begin{prop} 
    For any $\alpha\in \mathbb{R}, \alpha\geq 0$, $\tau\in \mathbb{N}_{[1,N]}$ and a bounded FSRPD $\psibxzNotail$, $\ASDPVNoTail\subseteq \ASDPVoverMink$.\label{prop:overapproxMinkSum}
\end{prop}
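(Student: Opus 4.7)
The plan is to prove the contrapositive: if $\overline{y} \notin \ASDPVoverMink$, then $\occupDPVNoTail < \alpha$, hence $\overline{y} \notin \ASDPVNoTail$ by \eqref{eq:ASDPV_defn}. The whole argument runs through the integral representation \eqref{eq:occup_DPV_intg} of the \occupfuntext{} as the convolution of $\psibxnoarg$ with $1_{\mathcal{O}(\overline{0})}$.

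First, I would unpack the negation of membership in $\ASDPVoverMink$. Writing $\ASDPVoverMink = \mathcal{A} \oplus \mathcal{O}(\overline{0})$ with $\mathcal{A} = \{\overline{z} \in \mathcal{X}: \psibxzNotail \geq c\}$ and $c = \alpha/\mathrm{m}(\mathcal{O}(\overline{0}))$, the point $\overline{y}$ is outside the sum exactly when no $\overline{o} \in \mathcal{O}(\overline{0})$ satisfies $\overline{y} - \overline{o} \in \mathcal{A}$. Substituting $\overline{z} = \overline{y} - \overline{o}$, this is equivalent to $\psi_{\bx}(\overline{z};\tau,\cdot) < c$ for every $\overline{z}$ in the set $S = \{\overline{w} \in \mathcal{X}: 1_{\mathcal{O}(\overline{0})}(\overline{y} - \overline{w}) = 1\} = \{\overline{y}\} \oplus (-\mathcal{O}(\overline{0}))$. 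By translation/reflection invariance of Lebesgue measure together with Assumption~\ref{assum:Borel}, $\mathrm{m}(S) = \mathrm{m}(\mathcal{O}(\overline{0})) > 0$.

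Second, I would bound \eqref{eq:occup_DPV_intg} using this information: on $S$ the integrand equals $\psi_{\bx}(\overline{z};\tau,\cdot) < c$, and off $S$ it vanishes. The main obstacle is converting this pointwise strict inequality into a strict integral bound, since a merely weak inequality would yield only $\occupDPVNoTail \leq \alpha$, which does not rule out $\overline{y} \in \ASDPVNoTail$. To cross that gap, I would decompose $S = \bigcup_{n \geq 1} S_n$ with $S_n = \{\overline{z} \in S: \psi_{\bx}(\overline{z};\tau,\cdot) \leq c - 1/n\}$; since $\psi_{\bx}$ is measurable (as a PDF) and strictly less than $c$ on $S$, continuity of $\mathrm{m}$ from below together with $\mathrm{m}(S) > 0$ forces some $S_{n_0}$ to have positive measure. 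Splitting the integral across $S_{n_0}$ and $S \setminus S_{n_0}$ yields $\occupDPVNoTail \leq c \, \mathrm{m}(\mathcal{O}(\overline{0})) - \mathrm{m}(S_{n_0})/n_0 < \alpha$, which completes the contrapositive.

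The role of the boundedness hypothesis on $\psi_{\bx}$ is to guarantee that the dominated-convergence / level-set manipulations above are well-defined over $S$ without pathological density spikes. The edge case $\alpha = 0$ is immediate from \eqref{eq:ASDPV_full_empty}: then $\mathcal{A} = \mathcal{X}$ since $\psi_{\bx} \geq 0$ and $\mathcal{O}(\overline{0}) \neq \emptyset$, so both sides coincide with $\mathcal{X}$.
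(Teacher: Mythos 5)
Your proof is correct, and at its core it rests on the same inequality as the paper's: the occupancy probability $\occupDPVNoTail$ is the integral of $\psibxzNotail$ over $-\mathcal{O}(-\overline{y}) = \{\overline{y}\}\oplus(-\mathcal{O}(\overline{0}))$, a set of Lebesgue measure $\mathrm{m}(\mathcal{O}(\overline{0}))>0$, so it can reach $\alpha$ only if the density reaches $\alpha/\mathrm{m}(\mathcal{O}(\overline{0}))$ somewhere on that set. The difference is in the logical direction and in how the strict-versus-weak inequality is handled. The paper argues the direct implication: from $\occupDPVNoTail\geq\alpha$ it passes to the bound \eqref{eq:upperboundoccupDPV}, $\big(\sup_{\overline{z}\in-\mathcal{O}(-\overline{y})}\psibxzNotail\big)\,\mathrm{m}(\mathcal{O}(\overline{0}))\geq\alpha$, and then asserts the existence of a point where the density is at least $\alpha/\mathrm{m}(\mathcal{O}(\overline{0}))$ --- a step that tacitly assumes the supremum is attained. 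You take the contrapositive and close exactly that $\epsilon$-gap from the other side: the level-set decomposition $S=\cup_n S_n$ converts the pointwise strict inequality $\psibxzNotail<c$ on a set of positive finite measure into the strict integral inequality $\int_S\psibxzNotail\,d\overline{z}<c\,\mathrm{m}(S)=\alpha$, which is what the definition \eqref{eq:ASDPV_defn} (with its non-strict $\geq\alpha$) actually requires. This is the more watertight of the two arguments. Two small remarks. First, your stated reason for the boundedness hypothesis is off: your argument never uses boundedness of $\psibxzNotail$ --- Borel measurability and integrability of the PDF suffice for the $S_n$ construction and for continuity of $\mathrm{m}$ from below --- so your proof in fact establishes the proposition without that hypothesis, whereas the paper invokes boundedness only so that the supremum bound in \eqref{eq:upperboundoccupDPV} is not vacuously $\infty$. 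Second, both arguments implicitly use $\mathrm{m}(\mathcal{O}(\overline{0}))<\infty$ so that $c\,\mathrm{m}(S)=\alpha$; when that measure is infinite the threshold $c$ degenerates to $0$ and $\ASDPVoverMink$ is all of $\mathcal{X}$, so the inclusion is trivial, exactly as in your $\alpha=0$ edge case.
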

\begin{proof}
    Let $ \overline{y}\in \ASDPVNoTail \Rightarrow \occupDPVNoTail\geq \alpha$. 
    Recall that for any two non-negative measurable functions $f_1,f_2$ such that $f_1( \overline{z})\leq f_2( \overline{z})$ for all $ \overline{z}\in \mathcal{X}$, we have  $\int_{\mathcal{X}} f_1( \overline{z})d\overline{z}\leq \int_{\mathcal{X}}f_2( \overline{z})d\overline{z}$~\cite[Prop. 19.2.6(c)]{TaoAnalysisII}. 
    By \eqref{eq:occup_DPV_as_prob},
    \begin{align}
        \alpha\leq \occupDPVNoTail&= \int_{ \mathcal{X}} \psibxzNotail 1_{(-\mathcal{O}(-\overline{y}))}( \overline{z}) d\overline{z} \leq \left(\sup_{ \overline{z}\in (-\mathcal{O}(-\overline{y}))}\psibxzNotail\right) \mathrm{m}\left(-\mathcal{O}(-\overline{y})\right).\label{eq:upperboundoccupDPV}  
    \end{align}
If the FSRPD $\psibxzNotail$ is unbounded, the upper bound given in \eqref{eq:upperboundoccupDPV} could be trivially $\infty$.
    From Lemma~\ref{lem:rigidBody}\ref{lem:rigidBodyCenter} and~\cite[Thms. 12.1 and 12.2]{billingsley_probability_1995}, $\mathrm{m}\left(-\mathcal{O}(-\overline{y})\right) = \mathrm{m}\left(\{ \overline{y}\} \oplus (-\mathcal{O}(\overline{0}))\right) = \mathrm{m}\left(\mathcal{O}(\overline{0})\right)$.
    By Lemma~\ref{lem:rigidBody}\ref{lem:rigidBody1},
    \begin{align}
                  \alpha\leq \occupDPVNoTail \Rightarrow \left(\sup_{ \overline{y}\in \mathcal{O}(\overline{z})}\psibxzNotail\right) \mathrm{m}\left(\mathcal{O}(\overline{0})\right) \geq \alpha.\nonumber
    \end{align}
    By Assumption~\ref{assum:Borel}, $ \mathrm{m}( \mathcal{O}(\overline{0}))\neq 0$. 
    Hence, $\exists \overline{z}\in \mathcal{X}$ such that $\overline{y}\in \mathcal{O}(\overline{z})$ and $\psibxzNotail \geq \frac{\alpha}{ \mathrm{m}( \mathcal{O}(\overline{0}))}$.
    From Lemma~\ref{lem:rigidBody}\ref{lem:rigidBodyCenter}, $ \overline{y}\in (\mathcal{O}( \overline{0}) \oplus \{ \overline{z}\})$ where $ \overline{z}$ satisfies the condition $\psibxzNotail \geq \frac{\alpha}{ \mathrm{m}( \mathcal{O}(\overline{0}))}$, which completes the proof.
\end{proof}

Equation \eqref{eq:upperboundoccupDPV} is the only overapproximation step in the proof of Proposition~\ref{prop:overapproxMinkSum}.
This inequality becomes tighter when the ``variation'' of $\psibxzNotail$ in $ - \mathcal{O}( - \overline{y})$ becomes smaller.
For example, when the set $ \mathcal{O}( \overline{0})$ is contained in a relatively small ball.

Recall that support functions may be used to characterize a convex and compact set $ \mathcal{G}\subset \mathcal{X}$~\cite[Thm. 5.6.4]{webster1994convexity}.
We denote the \emph{support function} of $ \mathcal{G}$ by $\rho: \mathcal{X} \rightarrow \mathcal{X}$ 
\begin{align}
    \rho( \overline{l}; \mathcal{G}) = \sup_{ \overline{y}\in \mathcal{G}} \overline{l}^\top \overline{y},\qquad \overline{l}\in \mathbb{R}^n.\label{eq:supp_fn}
\end{align}
From~\cite[Defn 2.1.4]{KurzhanskiTextbook}, we have a closed form expression for the support function of ellipsoid \eqref{eq:ell_defn} with some positive definite $Q\in \mathbb{R}^{n\times n}$, 
\begin{align}
    &\rho\left( \overline{l}; \mathcal{E}\left(\overline{\mu}_{\bx_\tau},Q\right)\right) = \overline{l}^\top \overline{\mu}_{\bx_\tau} + \sqrt{\overline{l}^\top Q\overline{l}}.\label{eq:rho_ell}
\end{align}
For a Gaussian $\bx_\tau$ \eqref{eq:Gauss_psibx_pdf}, \eqref{eq:ASDPVoverMink} simplifies to
\begin{align}
\ASDPVoverMink&= \mathcal{E}\left(\overline{\mu}_{\bx_\tau},Q_{\bx_\tau}\left(\frac{\alpha}{ \mathrm{m}( \mathcal{O}(\overline{0}))}\right)\right) \oplus \mathcal{O}( \overline{0}).\label{eq:ASDPVoverMinkGauss}
\end{align}
The support function of the Minkowski sum of two non-empty, convex, and compact sets is the sum of the respective support functions~\cite[Thm. 5.6.2]{webster1994convexity}.
Therefore, we have a closed-form description of the support function of $\ASDPVoverMink$,
\begin{align}
    \rho( \overline{l}; \ASDPVoverMink) = \overline{l}^\top \overline{\mu}_{\bx_\tau} + \sqrt{\overline{l}^\top Q_{\bx_\tau}\left(\frac{\alpha}{ \mathrm{m}( \mathcal{O}(\overline{0}))}\right)\overline{l}} + \rho( \overline{l}; \mathcal{O}(\overline{0})). \label{eq:support_ASDPVoverMink}
\end{align}
For a polytopic $ \mathcal{O}( \overline{0}) = \{ \overline{y}\in \mathbb{R}^n: H \overline{y}\leq k\}$ with appropriate $H, k$, $\rho( \overline{l}; \mathcal{O}(\overline{0}))$ is solved via a linear program given $ \overline{l}$~\cite[Sec. 4.1]{le_guernic_reachability_2009}.
For an ellipsoidal $ \mathcal{O}( \overline{0})=\mathcal{E}(\overline{0},Q_\mathcal{O})$ with an appropriate $Q_\mathcal{O}\in \mathbb{R}^{n\times n}$, $\rho( \overline{l}; \mathcal{O}( \overline{0}))$ is given by \eqref{eq:rho_ell}.
Alternatively, we can use the ellipsoidal toolbox (\texttt{ET})~\cite{ET} to compute ellipsoidal overapproximations of $\ASDPVoverMink$, as Minkowski sums of ellipsoids need not be ellipsoids~\cite[Pg. 97]{KurzhanskiTextbook}.
Note that $\ASDPVoverMink$ is an extension of the results in~\cite{du2011probabilistic} for arbitrary robot and obstacle rigid body shapes and Gaussian-perturbed obstacle dynamics.

In general, given a support function, a tight polytopic overapproximation of $\ASDPVoverMink$ can be found by ``sampling'' the direction vectors $ \overline{l}$~\cite[Prop. 3]{le_guernic_reachability_2009}.
Specifically, using the support function of $\ASDPVoverMink$ given in \eqref{eq:support_ASDPVoverMink}, we define a tight polytopic overapproximation
\begin{align}
    \OvASDPVoverMink&=\{ \overline{y}\in \mathbb{R}^n: A_\mathrm{des} \overline{y} \leq \overline{b}\} \label{eq:tightASDPVoverMink}
\end{align}
with $N_\mathrm{des}>0$, $A_\mathrm{des}={[ \overline{a}_1^\top\ \overline{a}_2^\top\ \ldots\ \overline{a}_{N_\mathrm{des}}^\top]}^\top$ with $ \overline{a}_{i} \in \mathbb{R}^n$ as the desired supporting hyperplane directions, and $ \overline{b}={[b_1\ b_2\ \ldots\ b_{N_\mathrm{des}}]}^\top\in \mathbb{R}^{N_\mathrm{des}}$ with
\begin{align}
    b_i = \rho( \overline{a}_i; \ASDPVoverMink). &\forall i\in \mathbb{N}_{[1,N_\mathrm{des}]}.\label{eq:supporting_hyp_ASDPVoverMink}
\end{align}
Here, tightness refers to the fact that the supporting hyperplanes of the polytope $\OvASDPVoverMink$, $ \overline{a}_i^\top \overline{y}\leq b_i$, support the set $\ASDPVoverMink$.
Algorithm~\ref{algo:MinkSum} computes $\OvASDPVoverMink$ for a Gaussian $\bx_\tau$ using   $A_\mathrm{des}$ and \eqref{eq:supporting_hyp_ASDPVoverMink}.
Algorithm~\ref{algo:MinkSum} does not require numerical quadrature like Algorithm~\ref{algo:ASDPV_poly}, and works well even when $ \mathcal{O}( \overline{0})$ is contained in a relatively small ball.
\begin{algorithm}
    \caption{Minkowski sum-based approximation of $\ASDPVnothing$ for a Gaussian FSRPD\label{algo:MinkSum}}
    \begin{algorithmic}[1]    
        \Require{DPV with Gaussian disturbance $\bw$, threshold $\alpha\geq 0$, support function of the rigid body $ \rho( \overline{l};\mathcal{O}( \overline{0}))$, matrix of desired supporting hyperplane directions $A_\mathrm{des}$}
        \Ensure{$\OvASDPVoverMink$}     
        \State $\ymax \gets \max_{ \overline{y}\in \mathcal{X}} \occupDPVNoTail$\Comment{Use Prop.~\ref{prop:ymax} when valid}
        \If{$ \occupDPVymax\leq \alpha$}
        \State $\OvASDPVoverMink\gets \emptyset$ by \eqref{eq:ASDPV_full_empty}
        \ElsIf{$\alpha> 0$}
            \For{$i\in \mathbb{N}_{[1,N_\mathrm{des}]}$}
                \State $b_i \gets \rho( \overline{a}_i; \ASDPVoverMink) $ by \eqref{eq:support_ASDPVoverMink} and \eqref{eq:supporting_hyp_ASDPVoverMink}
            \EndFor
            \State $\OvASDPVoverMink\gets \{\overline{y}\in \mathbb{R}^n: A_\mathrm{des} \overline{y} \leq \overline{b}\}$
        \Else
            \State $\OvASDPVoverMink\gets \mathcal{X}$  by \eqref{eq:ASDPV_full_empty}
        \EndIf
  \end{algorithmic}
\end{algorithm}

For a non-Gaussian disturbance $\bw$, the FSRPD may be obtained through Corollary~\ref{corr:FSRPD_def_DPV}.
We can then utilize Algorithm~\ref{algo:tight} to compute a tight polytopic overapproximation of the level set of the FSRPD and, if required, $ \mathcal{O}( \overline{0})$.
In this case, the set $\ASDPVoverMink$ is the Minkowski sum of two polytopes, which can be easily computed using the Multi-Parametric Toolbox (\texttt{MPT3})~\cite{MPT3}.
\begin{rem}
    Proposition~\ref{prop:linear_exp} holds true for the approximations 
    $\ASDPVNoTailunder$, $\ASDPVNoTailover$, $\ASDPVoverMink$, and $\OvASDPVoverMink$.
\end{rem}

Algorithms~\ref{algo:ASDPV_poly} and~\ref{algo:MinkSum} can provide higher quality approximations at the cost of computational time.
For Algorithm~\ref{algo:ASDPV_poly}, using more external points $ \overline{p}_i$ in Algorithm~\ref{algo:tight}  (increasing $K$) based on $\ASDPVNoTailunder$ can yield tighter overapproximations of $\ASDPVNoTail$.
For Algorithm~\ref{algo:MinkSum}, using a larger set of direction vectors $A_\mathrm{des}$ tightens the overapproximation of $\ASDPVoverMink\supset\ASDPVNoTail$.


\section{Numerical simulations}
\label{sec:num}

All computations were performed using MATLAB on an Intel Xeon CPU with 3.4GHz clock rate and 32 GB RAM.
All polyhedral computations were done using \texttt{MPT3}~\cite{MPT3}.
We will consider centrally symmetric $ \mathcal{O}( \overline{0})$ (origin-centered boxes/balls) which enables the use of Proposition~\ref{prop:OccupDPV_symm} for the definition of the occupancy function, and Proposition~\ref{prop:ymax} for avoiding the computation of $\ymax$.

\subsection{Quality of the approximations of $\ASDPVnothing$}

\begin{figure*}
\centering
\newcommand{\figw}{0.24}
\subfloat[]{\includegraphics[width=\figw\linewidth, Trim=\trimValuesNoLegend, clip]{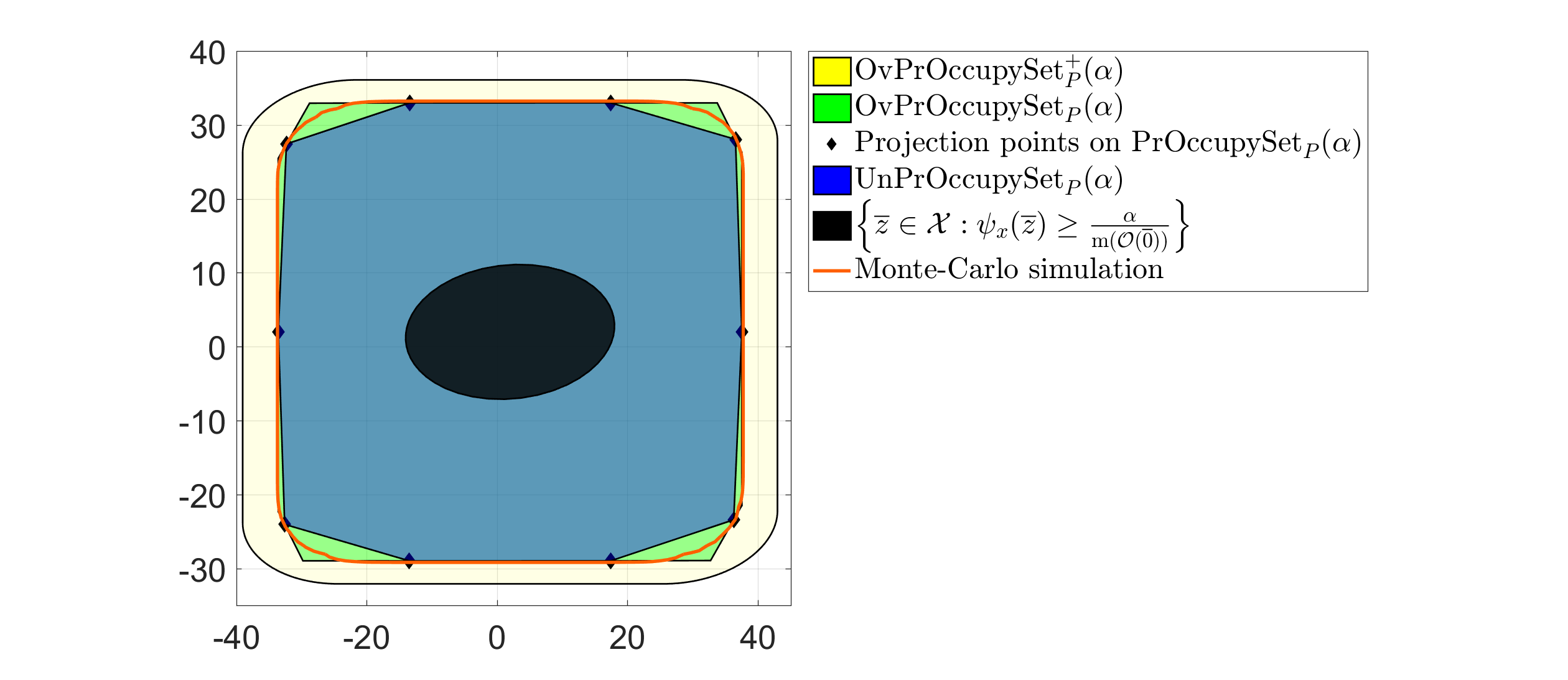}}
\quad
\subfloat[]{\includegraphics[width=\figw\linewidth, Trim=\trimValuesNoLegend, clip]{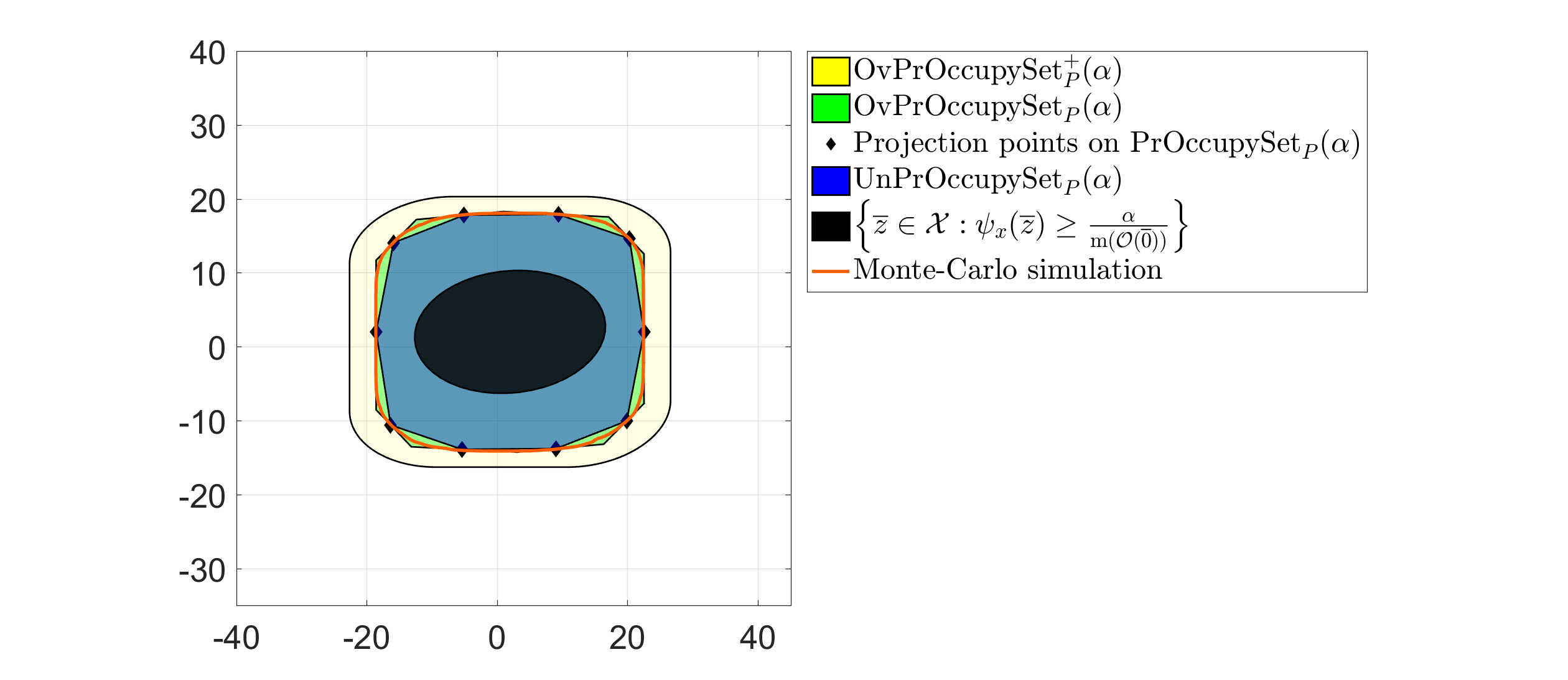}}
\quad
\subfloat[]{\includegraphics[width=0.46\linewidth, Trim=\trimValuesWithLegend, clip]{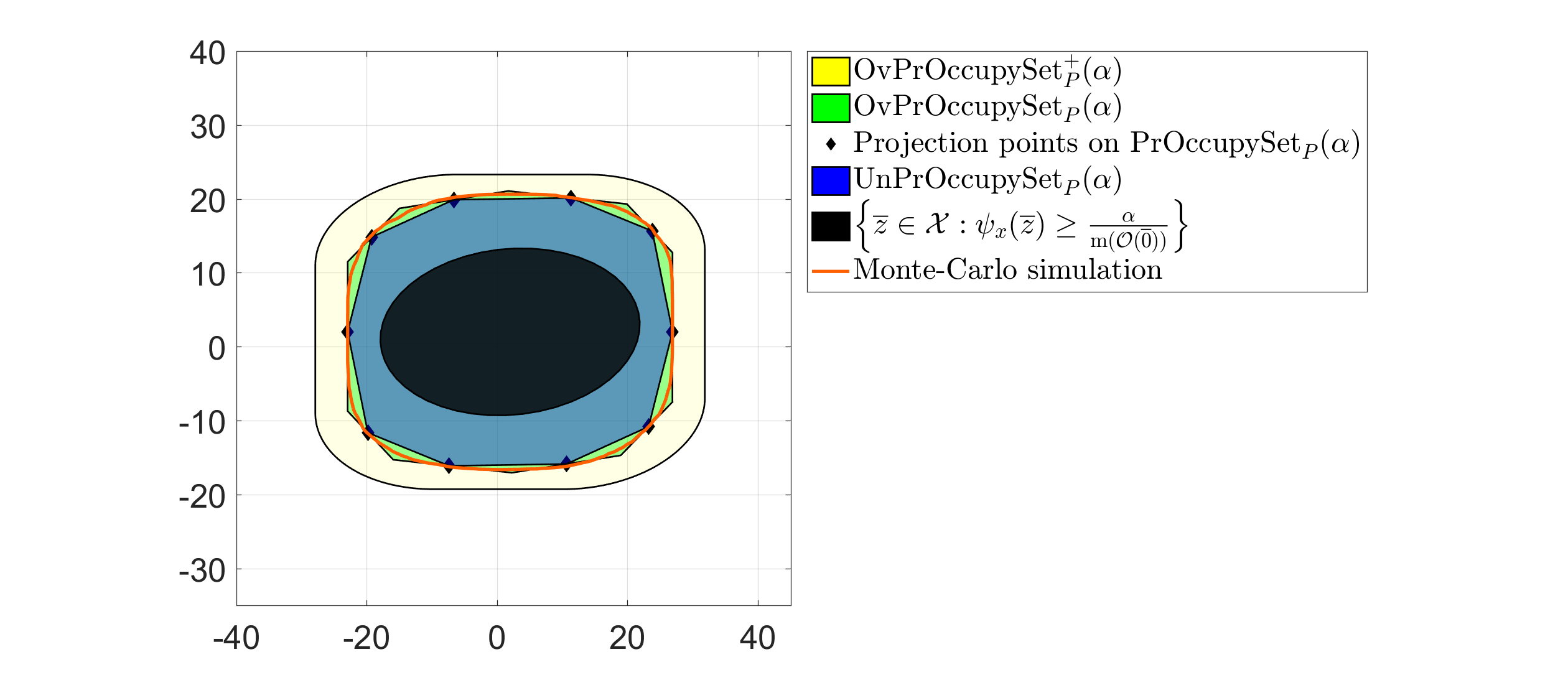}}
\caption{
    Comparison of the tight polytopic approximations $\ASDPVNoTailover$ and $\ASDPVNoTailunder$ (Algorithm~\ref{algo:ASDPV_poly}), the Minkowski sum-based overapproximation $\OvASDPVoverMink$ (Algorithm~\ref{algo:MinkSum}), and the Monte-Carlo simulation-based computation ($10^5$ samples) of the $\ASDPVnothing$ for a Gaussian FSRPD \eqref{eq:Gauss_psibx_pdf} with $\alpha=0.001$. The obstacle shape $ \mathcal{O}( \overline{0})$ is a box of side $50$ in (a) and side $20$ in (b) and (c).
The covariance matrix is same in (a) and (b), and is doubled in (c). 
The computation time is given in Table~\ref{tab:compute_time_VA}.\label{fig:approxCompare}}
\end{figure*}

Figure~\ref{fig:approxCompare} compares the results of Algorithms~\ref{algo:ASDPV_poly} and~\ref{algo:MinkSum} for a Gaussian FSRPD with mean $\mu_{\bx_{\tau}}=[2\ 2]^\top$ under different covariance matrices $\Sigma_{\bx_{\tau}}$ and obstacle geometries $ \mathcal{O}( \overline{0})$ at some $\tau>0$.
Figure~\ref{fig:approxCompare}.a and~\ref{fig:approxCompare}.b uses $\Sigma_{\bx_{\tau}} = H$ and Figure~\ref{fig:approxCompare}.c uses $\Sigma_{\bx_{\tau}} = 2\cdot H$ where $H=\left[ {\begin{array}{cc}    11.62  &    0.59 \\ 0.59 &   3.75   \end{array} } \right]$.
Figure~\ref{fig:approxCompare}.a uses $ \mathcal{O}( \overline{0}) = \Mybox( \overline{0}, 50)$ and Figure~\ref{fig:approxCompare}.b and~\ref{fig:approxCompare}.c uses $ \mathcal{O}( \overline{0}) = \Mybox( \overline{0}, 20)$.
Using $\alpha=0.001$, the \avoidsettext{} is the set of all states that have a probability of being occupied by the rigid body is greater than $0.001$ when its state $\bx_\tau\sim \mathcal{N}(\mu_{\bx_{\tau}},\Sigma_{\bx_{\tau}})$.
Therefore, avoiding this set guarantees probabilistic safety of at least $0.999$.

We used a Monte-Carlo simulation with $10^5$ samples as the ground truth $\ASDPVNoTail$.
The sets provided by Algorithm~\ref{algo:ASDPV_poly}, $\ASDPVNoTailover$ and $\ASDPVNoTailunder$, tightly approximates the set $\ASDPVNoTail$, as expected.
Moreover, $\ASDPVNoTailover$ provides a tighter overapproximation of $\ASDPVNoTail$ than $\OvASDPVoverMink$ obtained from Algorithm~\ref{algo:MinkSum} for a sufficiently high $K$ (here, $K=10$) and well spread-out external points $ \overline{p}_i$ in Algorithm~\ref{algo:ASDPV_poly}.
From Table~\ref{tab:compute_time_VA}, Algorithms~\ref{algo:ASDPV_poly} and~\ref{algo:MinkSum} completed in under a half of a second, while the Monte-Carlo simulation took significantly longer due to the obstacle size and the use of a grid.

\subsection{Obstacle prediction for rigid body obstacles with unicycle dynamics}

We consider the unicycle dynamics \eqref{eq:obs_DMSP} with parameters described in Appendix~\ref{app:UnicycleVals}.
Figure~\ref{fig:unicycle} shows the \avoidsettext{} with $\alpha=0.01$ after $15$ time steps. 
Using Theorem~\ref{thm:AvoidSetUnionDMSP}, we can compute a cover for the \avoidsettext{} consisting of convex and compact sets.
We present only the results from Algorithm~\ref{algo:MinkSum} due to Algorithm~\ref{algo:ASDPV_poly}'s numerical instability for $ \mathcal{O}( \overline{0})=\Ball( \overline{0},0.2)$.
Figure~\ref{fig:unicycle}.a shows that the ``banana distribution'' seen in the simultaneous localization and mapping literature~\cite{long_banana_2013} is covered by the proposed overapproximation.
The modified threshold $\alphaSqtau$ eliminates the computation of covers for empty sets leading to a minor decrease in computation time in Figure~\ref{fig:unicycle}.b.

As seen in Figure~\ref{fig:unicycle_FSRPD}, the \DPV{} correponding to $\bom=0$ (denoted by $ \overline{\Lambda}_\tau'$, the black $\diamond$s in Figure~\ref{fig:unicycle}.b) forces the $\FSRsetDPVNoTail$ to be the line corresponding to the column space of $ \mathscr{C}_W(\tau;\cdot,\overline{\Lambda}_\tau')$.
In this case, the point $ \overline{y}$ is occupied when $\bx_\tau$ lies in the line segment corresponding to the intersection of $ \mathcal{O}( \overline{0})$ with $\FSRsetDPVNoTail$ (since $\bx_\tau$ can not lie outside $\FSRsetDPVNoTail$). 
By Proposition~\ref{prop:OccupDPV_symm}, the occupancy function simplifies for this \DPV{} to an integral of the projected one-dimensional Gaussian density over an interval.
Similar projections may be used to deal with arbitrary \DPV{} dynamics with $ \mathscr{C}_W(\tau;\cdot)$ that does not have full rank.

For obstacle avoidance problems, the collection of sets shown in Figure~\ref{fig:unicycle} is the set of deterministic keep-out regions to achieve a desired collision-avoidance probability (less than $0.01$ in this case).
We will explore the efficacy of using these sets in obstacle avoidance in future work.

\begin{figure}
    \centering
    \newcommand{\figw}{0.28}
    \subfloat[]{\includegraphics[width=\figw\linewidth,Trim=\trimValues,clip]{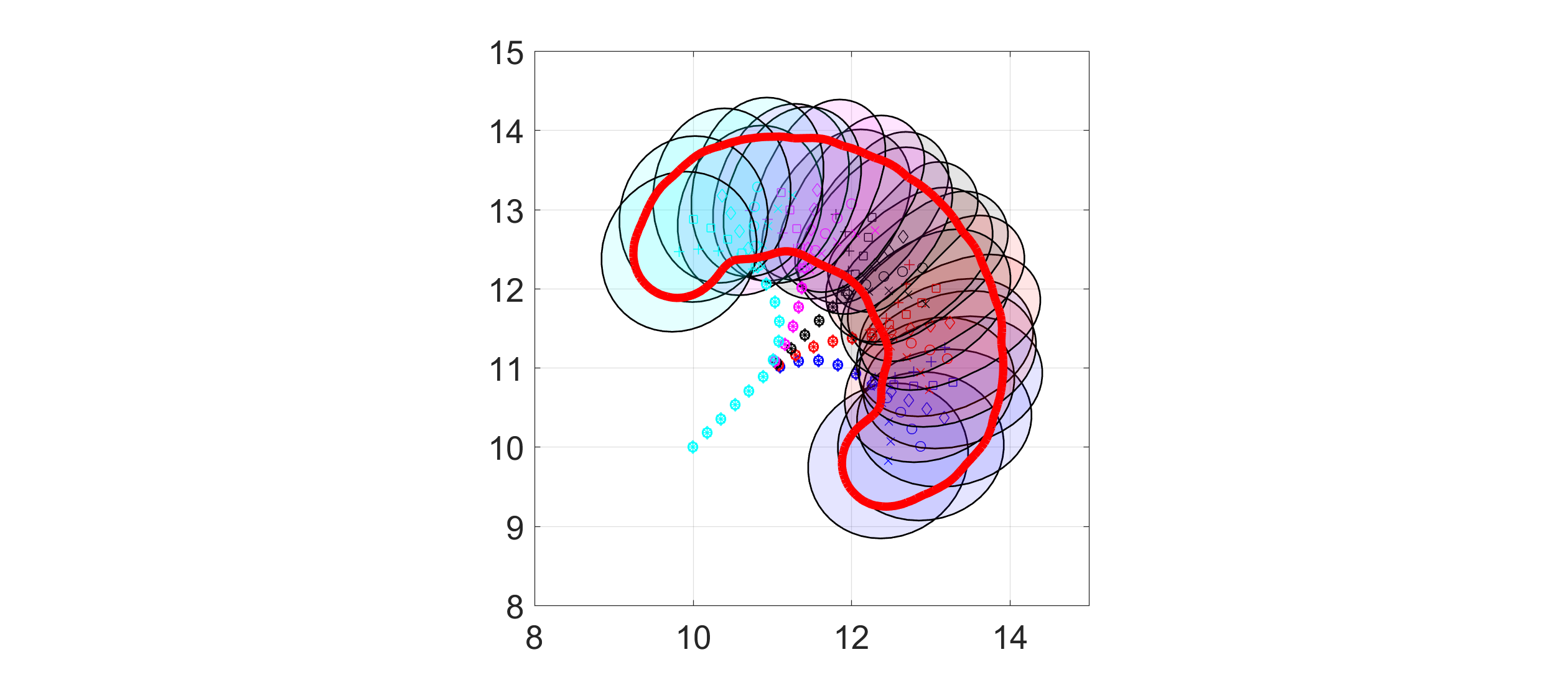}} 
    \ \subfloat[]{\includegraphics[width=\figw\linewidth,Trim=\trimValues,clip]{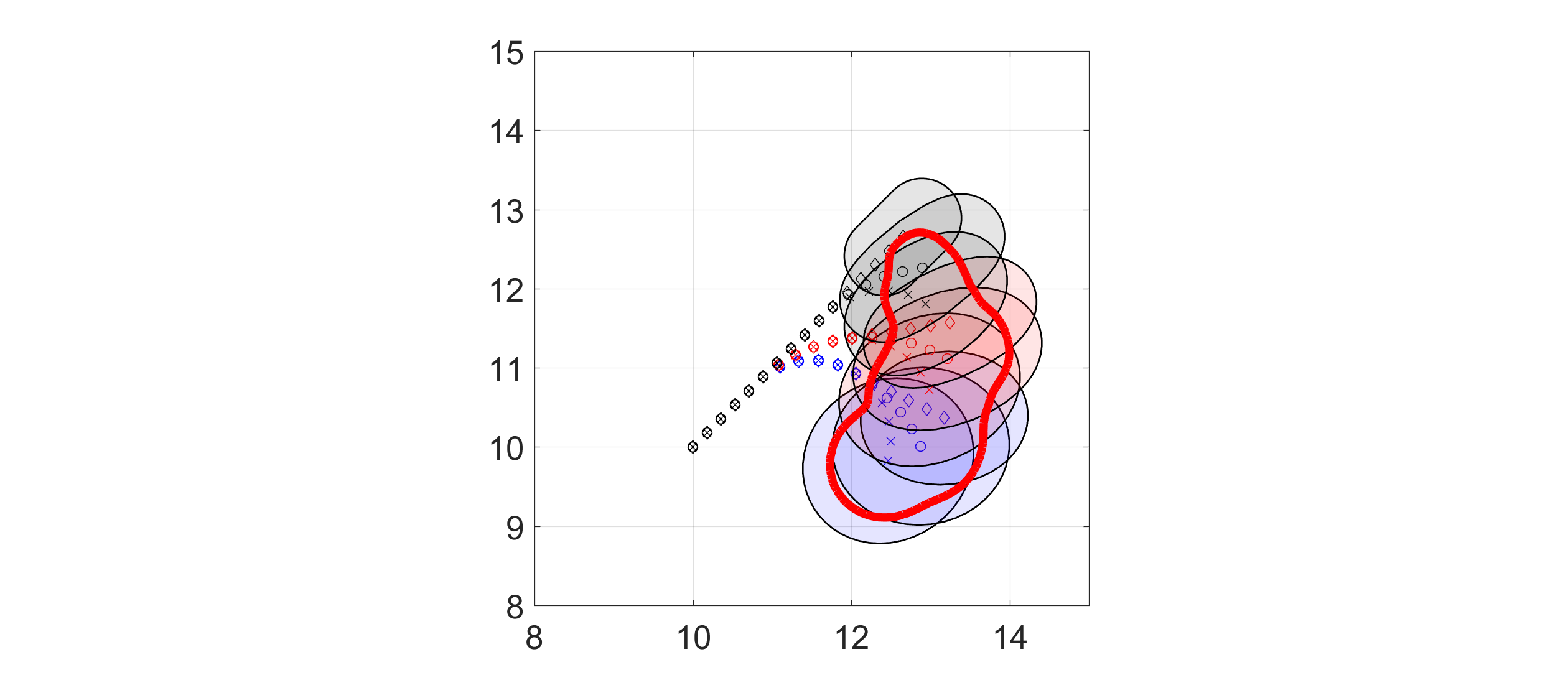}}
    \caption{Overapproximation of the $\ASDMSPnothing{}$ (red contour via Monte-Carlo simulation, $10^5$ samples) for unicycle dynamics \eqref{eq:obs_DMSP} using Algorithm~\ref{algo:MinkSum} and Theorem~\ref{thm:AvoidSetUnionDMSP} (see Appendix~\ref{app:UnicycleVals} for numerical values). Figure~\ref{fig:unicycle}a is the ``banana'' distribution~\cite{long_banana_2013}. The markers indicate the mean state trajectories of the respective \DPV{} dynamics, and match in color to their respective $\ASDPVnothing$. The computation times are given in Table~\ref{tab:compute_time_VA}.}
    \label{fig:unicycle}
\end{figure}

\begin{table}
    \centering
    \begin{tabular}{|c|c|c|c|}
    \hline									
    Figures & Algorithm~\ref{algo:ASDPV_poly} & Algorithm~\ref{algo:MinkSum} & Monte-Carlo approach \\\hline
    Figure~\ref{fig:approxCompare}.a & $0.30$ s & $0.25$ s & $95.89$ s\\ \hline
    Figure~\ref{fig:approxCompare}.b & $0.32$ s & $0.25$ s & $28.44$ s\\ \hline
    Figure~\ref{fig:approxCompare}.c & $0.42$ s & $0.24$ s & $22.77$ s \\ \hline
    Figure~\ref{fig:unicycle}.a & -- & $0.15$ s & $18.67$ s \\ \hline
    Figure~\ref{fig:unicycle}.b & -- & $0.09$ s & $16.10$ s \\ \hline
   \end{tabular}
   \caption{Computation time for Figures~\ref{fig:approxCompare} and~\ref{fig:unicycle} using Algorithms~\ref{algo:ASDPV_poly} and~\ref{algo:MinkSum}, and Monte-Carlo simulation ($10^5$ samples).}
   \label{tab:compute_time_VA}
\end{table}

\section{Conclusion and future work}
\label{sec:conc}

In this paper, we have analyzed the \occupfuntext{} and the \avoidsettext{} of rigid body obstacles with \DPV{}/\DMSP{} dynamics using forward stochastic reachability.
We developed forward stochastic reachability techniques for these dynamics and analyzed their convexity.
Additionally, we characterized sufficient conditions for the log-concavity and the upper semi-continuity of the \occupfuntext{}.
Using these results, we characterized sufficient conditions for the convexity, closedness, and compactness of the \avoidsettext{}.
We proposed two computationally efficient algorithms to compute the approximation of the \avoidsettext{}.

In future, we would like to use the computed \avoidsettext{} to generate probabilistically safe dynamically-feasible trajectories for robots navigating an environment with rigid body obstacles with \DPV{}/\DMSP{} dynamics.
Additionally, we would also like to investigate alternative ways to exploit the convexity of the problem \eqref{prob:tight_poly} to propose an approach to compute the projection point without using a generic nonlinear solver like MATLAB's \emph{fmincon}.

\section*{Acknowledgments}

We would like to thank Baisravan HomChaudhuri for his insightful comments on this research.

\appendix

\subsection{Parameters for the unicycle example \eqref{eq:obs_DMSP}}
\label{app:UnicycleVals}

We consider a unicycle dynamics \eqref{eq:obs_nonlin_sys} modeled as a \DMSP{} \eqref{eq:obs_DMSP} with sampling time $T_s = 0.05$, initial turning rate $ \omega_0 = 0$, initial heading (parameter value) $\lambda_0 = \frac{\pi}{4}$, and Gaussian velocity random vector $\bv \sim \mathcal{N}(5,1)$. 
The set of admissible turning rates is $ \mathcal{Q}= \{-5,-2.5,0,2.5,5\}$.

For Figure~\ref{fig:unicycle_FSRPD}$, $ we define 
\begin{align}
\overline{q}_{\tau-1} = [0, 0, \underbrace{-5, \ldots, -5}_{\mbox{$19$ times}}, \underbrace{0, \ldots, 0}_{\mbox{$6$ times}}, \underbrace{5, \ldots, 5}_{\mbox{$18$ times}}] \nonumber
\end{align}
and the initial obstacle location is $\overline{x}_0$ is at the origin.

For Figure~\ref{fig:unicycle}, the unicycle switches its turning rate based on the switching law \eqref{eq:MarkovSwitchLaw} after every $\tausw = 5$ time steps using a transition matrix $M\in \mathbb{R}^{5\times 5}$, and the initial obstacle location is $\overline{x}_0={[10\ 10]}^\top$.
Figure~\ref{fig:unicycle}.a uses $M_1$ (chose with equal likelihood among the turning rates), and Figure~\ref{fig:unicycle}.b uses $M_2$ (prefer taking right over staying straight and never take left),
\begin{align}
    \small
    M_1= \left[ {\arraycolsep=2pt\begin{array}{ccccc}
                      0.2 & 0.2 & 0.2 & 0.2 & 0.2 \\   
                      0.2 & 0.2 & 0.2 & 0.2 & 0.2 \\   
                      0.2 & 0.2 & 0.2 & 0.2 & 0.2 \\   
                      0.2 & 0.2 & 0.2 & 0.2 & 0.2 \\   
                      0.2 & 0.2 & 0.2 & 0.2 & 0.2 \\   
                  \end{array} } \right],\ 
    M_2= \left[ {\arraycolsep=2pt\begin{array}{ccccc}
                      0.5 & 0.47 & 0.03 & 0 & 0 \\   
                      0.5 & 0.47 & 0.03 & 0 & 0 \\   
                      0.5 & 0.47 & 0.03 & 0 & 0 \\   
                      0.5 & 0.47 & 0.03 & 0 & 0 \\   
                      0.5 & 0.47 & 0.03 & 0 & 0 \\   
                  \end{array} } \right].\nonumber 
\end{align}

\subsection{Proof of Lemma~\ref{lem:rigidBody}}
\label{app:proof_lem_rigid_body}
    To show~\ref{lem:rigidBodyCenter}), define $ \overline{z}'= \overline{z}- \overline{y}$.
    From \eqref{eq:obs_rigidbody_defn}, we have $ \mathcal{O}( \overline{y}) = \{\overline{z}'+ \overline{y}\in \mathcal{X}: h(\overline{z}')\geq 0\} = \{ \overline{y}\} \oplus \{ \overline{z}: h(z) \geq 0\}=\{ \overline{y}\} \oplus \mathcal{O}( \overline{0})$.

    Next, we show \ref{lem:rigidBody1}) and~\ref{lem:rigidBody2}). Consider $ \overline{y},\overline{z}\in \mathcal{X}$ such that $\overline{z}\in(-\mathcal{O}(-\overline{y}))$.
    \begin{align}
        \overline{z}\in(-\mathcal{O}(-\overline{y}))\Longleftrightarrow -\overline{z}\in \mathcal{O}(-\overline{y})\Longleftrightarrow \hobs(-\overline{z}-(-\overline{y}))\geq 0 &\Longleftrightarrow \hobs(\overline{y}-\overline{z})\geq 0 \label{eq:zminusO}\\
                                                    &\Longleftrightarrow \overline{y}\in \mathcal{O}( \overline{z}) \label{eq:part1}\\
                                                    &\Longleftrightarrow \overline{y}\in \{\overline{z}\} \oplus \mathcal{O}( \overline{0}) \nonumber \\
                                                    &\Longleftrightarrow (\overline{y}- \overline{z}) \in \mathcal{O}( \overline{0}).\label{eq:part2a} 
    \end{align}
    Equation \eqref{eq:part1} shows~\ref{lem:rigidBody1}), and \eqref{eq:part2a} shows~\ref{lem:rigidBody2}).

\subsection{Proof of Lemma~\ref{lem:algo_proof}}
\label{app:proof_algo_proof}
    \emph{Approximation:} The optimization problem \eqref{prob:projection_problem} has a unique optimal solution since $ \mathcal{L}$ is convex, closed, and non-empty~\cite[Thm. 2.4.1]{webster1994convexity}. 
The hyperplane in \eqref{eq:hyperplane_defn} is the supporting hyperplane of $ \mathcal{L}$ at $ \ProjL( \overline{p}_i)$~\cite[Sec. 2.5.2]{boyd_convex_2004}~\cite[Thm. 2.4.1]{webster1994convexity}.
Note that the set of $\ProjL(\overline{p}_i)$ is a subset of the extreme points of $ \mathcal{L}$ and the set of hyperplanes defined using \eqref{eq:hyperplane_defn} is a subset of all the closed halfspaces containing $ \mathcal{L}$.
Hence, we have $ \Linner(K) \subseteq \mathcal{L} \subseteq \Louter(K)$~\cite[Thm 2.6.16, Corr. 2.4.8]{webster1994convexity}.

\emph{Tightness:} Increasing the number of external points $ \overline{p}_i$ to $K^+>K$ (while retaining the previously used external points), we have by the same arguments~\cite[Thm 2.6.16, Corr. 2.4.8]{webster1994convexity} as above,
\begin{align}
    \Linner(K) \subseteq\Linner(K^+) \subseteq \mathcal{L} \subseteq \Louter(K^+)\subseteq\Louter(K).\nonumber
\end{align}
We thus have a monotone increasing sequence of polytopes in $\Linner$ and a monotone decreasing sequence of polytopes in $ \Louter$ with increasing $K$~\cite[Sec. 1]{ChowProbability1997}.
Therefore, 
\begin{align}
    \lim_{K \rightarrow \infty}\Linner(K) &= \cup_{K=1}^\infty\Linner(K)= \mathcal{L}, &\mbox{ and} \nonumber \\
    \lim_{K \rightarrow \infty}\Louter(K) &= \cap_{K=1}^\infty\Louter(K)= \mathcal{L}. \nonumber
\end{align}

\subsection{Proof of Lemma~\ref{lem:symmetric_ProbbWtau}}
\label{app:proof_symmetric_ProbbWtau}

    From the IID assumption, $\ProbOvBwtauNoTail$ is centrally symmetric.
    For any $ \mathcal{G}\in\sigmaAlg( \mathcal{W}^{\tau-1})$,
   \begin{align}
       \ProbbWtau\{\bW_\tau\in -\mathcal{G}\} &=\ProbOvBwtauNoTail\{\mathscr{C}_W(\tau;\cdot) \overline{\bw}_{\tau-1} \in -\mathcal{G}\} \nonumber \\
                         &=\ProbOvBwtauNoTail\{\overline{\bw}_{\tau-1} \in \{ \overline{y}\in \mathcal{W}^{\tau-1}: \mathscr{C}_W(\tau;\cdot)\overline{y}\in -\mathcal{G}\}\} \nonumber \\
            &=\ProbOvBwtauNoTail\{\overline{\bw}_{\tau-1} \in -\{ \overline{y}\in \mathcal{W}^{\tau-1}: \mathscr{C}_W(\tau;\cdot)\overline{y}\in \mathcal{G}\}\} \nonumber \\
            &=\ProbOvBwtauNoTail\{\overline{\bw}_{\tau-1} \in \{ \overline{y}\in \mathcal{W}^{\tau-1}: \mathscr{C}_W(\tau;\cdot)\overline{y}\in \mathcal{G}\}\} \nonumber \\
            &=\ProbbWtau\{\bW_\tau\in \mathcal{G}\}. \nonumber
   \end{align} 

\bibliographystyle{IEEEtran}
\bibliography{IEEEabrv,shortIEEE,FSRonSLS}

\begin{thebibliography}{10}
\providecommand{\url}[1]{#1}
\csname url@samestyle\endcsname
\providecommand{\newblock}{\relax}
\providecommand{\bibinfo}[2]{#2}
\providecommand{\BIBentrySTDinterwordspacing}{\spaceskip=0pt\relax}
\providecommand{\BIBentryALTinterwordstretchfactor}{4}
\providecommand{\BIBentryALTinterwordspacing}{\spaceskip=\fontdimen2\font plus
\BIBentryALTinterwordstretchfactor\fontdimen3\font minus
  \fontdimen4\font\relax}
\providecommand{\BIBforeignlanguage}[2]{{%
\expandafter\ifx\csname l@#1\endcsname\relax
\typeout{** WARNING: IEEEtran.bst: No hyphenation pattern has been}%
\typeout{** loaded for the language `#1'. Using the pattern for}%
\typeout{** the default language instead.}%
\else
\language=\csname l@#1\endcsname
\fi
#2}}
\providecommand{\BIBdecl}{\relax}
\BIBdecl

\bibitem{thrun_probabilistic_2005}
S.~Thrun, W.~Burgard, and D.~Fox, \emph{Probabilistic robotics}.\hskip 1em plus
  0.5em minus 0.4em\relax MIT Press, 2005.

\bibitem{lavalle2006planning}
S.~M. LaValle, \emph{Planning Algorithms}.\hskip 1em plus 0.5em minus
  0.4em\relax Cambridge Univ. Press, 2006.

\bibitem{elfes1989using}
A.~Elfes, ``Using occupancy grids for mobile robot perception and navigation,''
  \emph{Computer}, vol.~22, no.~6, pp. 46--57, 1989.

\bibitem{ichter2017real}
B.~Ichter, E.~Schmerling, A.~Agha-mohammadi, and M.~Pavone, ``Real-time
  stochastic kinodynamic motion planning via multiobjective search on {GPU}s,''
  in \emph{Proc. IEEE Int'l Conf. Robotics \& Autom.}, 2017, pp. 5019--5026.

\bibitem{lambert2008fast}
A.~Lambert, D.~Gruyer, and G.~Saint~Pierre, ``A fast monte carlo algorithm for
  collision probability estimation,'' in \emph{International Conference on
  Autom., Rob., Ctrl., and Vis.}\hskip 1em plus 0.5em minus 0.4em\relax IEEE,
  2008, pp. 406--411.

\bibitem{fulgenzi_dynamic_2007}
C.~Fulgenzi, A.~Spalanzani, and C.~Laugier, ``Dynamic obstacle avoidance in
  uncertain environment combining {PVOs} and occupancy grid,'' in \emph{Proc.
  IEEE Int'l Conf. Robotics \& Autom.}, Apr. 2007, pp. 1610--1616.

\bibitem{bautin2010inevitable}
A.~Bautin, L.~Martinez-Gomez, and T.~Fraichard, ``Inevitable collision states:
  a probabilistic perspective,'' in \emph{Proc. IEEE Int'l Conf. Robotics \&
  Autom.}\hskip 1em plus 0.5em minus 0.4em\relax IEEE, 2010, pp. 4022--4027.

\bibitem{blackmore2011chance}
L.~Blackmore, M.~Ono, and B.~Williams, ``Chance-constrained optimal path
  planning with obstacles,'' \emph{{IEEE} Trans. Robot.}, vol.~27, no.~6, pp.
  1080--1094, 2011.

\bibitem{masahiro_ono_iterative_2008}
{Masahiro Ono} and B.~Williams, ``Iterative risk allocation: {A} new approach
  to robust model predictive control with a joint chance constraint,'' in
  \emph{Proc. IEEE Conf. Dec. \& Ctrl.}, 2008, pp. 3427--3432.

\bibitem{ono2015chance}
M.~Ono, M.~Pavone, Y.~Kuwata, and J.~Balaram, ``Chance-constrained dynamic
  programming with application to risk-aware robotic space exploration,''
  \emph{Autonomous Robots}, vol.~39, no.~4, pp. 555--571, 2015.

\bibitem{luders_chance_2010}
B.~Luders, M.~Kothari, and J.~How, ``Chance constrained {RRT} for probabilistic
  robustness to environmental uncertainty,'' in \emph{Guid., Navig., and
  Cntrl.}\hskip 1em plus 0.5em minus 0.4em\relax AIAA, Aug. 2010.

\bibitem{aoude2013probabilistically}
G.~Aoude, B.~Luders, J.~Joseph, N.~Roy, and J.~How, ``Probabilistically safe
  motion planning to avoid dynamic obstacles with uncertain motion patterns,''
  \emph{Autonomous Robots}, vol.~35, no.~1, pp. 51--76, 2013.

\bibitem{du2011probabilistic}
N.~Du~Toit and J.~Burdick, ``Probabilistic collision checking with chance
  constraints,'' \emph{{IEEE} Trans. Robot.}, vol.~27, no.~4, pp. 809--815,
  2011.

\bibitem{althoff2009model}
M.~Althoff, O.~Stursberg, and M.~Buss, ``Model-based probabilistic collision
  detection in autonomous driving,'' \emph{IEEE Trans. on Intel. Transp.
  Syst.}, vol.~10, no.~2, pp. 299--310, 2009.

\bibitem{SummersHSCC2011}
S.~Summers, M.~Kamgarpour, J.~Lygeros, and C.~Tomlin, ``A stochastic
  reach-avoid problem with random obstacles,'' in \emph{Proc. Hybrid Syst.:
  Comput. \& Ctrl.}, 2011, pp. 251 -- 260.

\bibitem{HomChaudhuriACC2017}
B.~HomChaudhuri, A.~Vinod, and M.~Oishi, ``Computation of forward stochastic
  reach sets: Application to stochastic, dynamic obstacle avoidance,'' in
  \emph{Proc. Amer. Ctrl. Conf.}, 2017, pp. 4404--4411.

\bibitem{wu2012guaranteed}
A.~Wu and J.~How, ``Guaranteed infinite horizon avoidance of unpredictable,
  dynamically constrained obstacles,'' \emph{Autonomous Robots}, vol.~32,
  no.~3, pp. 227--242, 2012.

\bibitem{malone2017hybrid}
N.~Malone, H.-T. Chiang, K.~Lesser, M.~Oishi, and L.~Tapia, ``Hybrid dynamic
  moving obstacle avoidance using a stochastic reachable set-based potential
  field,'' \emph{{IEEE} Trans. Robot.}, vol.~33, no.~5, pp. 1124--1138, 2017.

\bibitem{chiang2015aggressive}
H.-T. Chiang, N.~Malone, K.~Lesser, M.~Oishi, and L.~Tapia, ``Aggressive moving
  obstacle avoidance using a stochastic reachable set based potential field,''
  in \emph{Algo. Found. of Robotics}.\hskip 1em plus 0.5em minus 0.4em\relax
  Springer, 2015, pp. 73--89.

\bibitem{karaman2011sampling}
S.~Karaman and E.~Frazzoli, ``Sampling-based algorithms for optimal motion
  planning,'' \emph{Int'l J. Robotics \& Res.}, vol.~30, no.~7, pp. 846--894,
  2011.

\bibitem{schouwenaars2002safe}
T.~Schouwenaars, {\'E}.~F{\'e}ron, and J.~How, ``Safe receding horizon path
  planning for autonomous vehicles,'' in \emph{Allerton Conf. on Comm., Ctrl.,
  and Comp.}, vol.~40, no.~1, 2002, pp. 295--304.

\bibitem{mellinger2012mixed}
D.~Mellinger, A.~Kushleyev, and V.~Kumar, ``Mixed-integer quadratic program
  trajectory generation for heterogeneous quadrotor teams,'' in \emph{Proc.
  IEEE Int'l Conf. Robotics \& Autom.}, 2012, pp. 477--483.

\bibitem{mao2017successive}
Y.~Mao, D.~Dueri, M.~Szmuk, and B.~{A\c c\i kme\c se}, ``Successive
  convexification of non-convex optimal control problems with state
  constraints,'' \emph{IFAC-PapersOnLine}, vol.~50, no.~1, pp. 4063 -- 4069,
  2017.

\bibitem{VinodHSCC2017}
A.~Vinod, B.~HomChaudhuri, and M.~Oishi, ``Forward stochastic reachability
  analysis for uncontrolled linear systems using {F}ourier {T}ransforms,'' in
  \emph{Proc. Hybrid Syst.: Comput. \& Ctrl.}, 2017, pp. 35--44.

\bibitem{calafiore2006scenario}
G.~Calafiore and M.~Campi, ``The scenario approach to robust control design,''
  \emph{{IEEE} Trans. Autom. Control}, vol.~51, no.~5, pp. 742--753, 2006.

\bibitem{Abate2008}
A.~Abate, M.~Prandini, J.~Lygeros, and S.~Sastry, ``Probabilistic reachability
  and safety for controlled discrete time stochastic hybrid systems,''
  \emph{Automatica}, vol.~44, no.~11, pp. 2724--2734, 2008.

\bibitem{lesser_stochastic_2013}
K.~Lesser, M.~Oishi, and S.~Erwin, ``Stochastic reachability for control of
  spacecraft relative motion,'' in \emph{Proc. IEEE Conf. Dec. \& Ctrl.}, 2013,
  pp. 4705--4712.

\bibitem{kariotoglou2015multi}
N.~Kariotoglou, D.~Raimondo, S.~Summers, and J.~Lygeros, ``Multi-agent
  autonomous surveillance: a framework based on stochastic reachability and
  hierarchical task allocation,'' \emph{J. Dyn. Sys., Meas., Ctrl.}, vol. 137,
  no.~3, 2014.

\bibitem{bertsekas_stochastic_1978}
D.~Bertsekas and S.~Shreve, \emph{Stochastic optimal control: {The} discrete
  time case}.\hskip 1em plus 0.5em minus 0.4em\relax Academic Press, 1978.

\bibitem{boyd_convex_2004}
S.~Boyd and L.~Vandenberghe, \emph{Convex Optimization}.\hskip 1em plus 0.5em
  minus 0.4em\relax Cambridge Univ. Press, 2004.

\bibitem{TaoAnalysisII}
T.~Tao, \emph{Analysis}, 2nd~ed.\hskip 1em plus 0.5em minus 0.4em\relax
  Hindustan Book Agency, 2006, vol.~2.

\bibitem{ChowProbability1997}
Y.~Chow and H.~Teicher, \emph{Probability Theory: Independence,
  Interchangeability, Martingales}, 3rd~ed.\hskip 1em plus 0.5em minus
  0.4em\relax Springer New York, 1997.

\bibitem{dharmadhikari1988unimodality}
S.~Dharmadhikari and K.~Joag-Dev, \emph{Unimodality, convexity, and
  applications}.\hskip 1em plus 0.5em minus 0.4em\relax Elsevier, 1988.

\bibitem{prekopa_logarithmic_1980}
A.~Pr\'ekopa, ``Logarithmic concave measures and related topics,''
  \emph{Stochastic Programming}, 1980.

\bibitem{billingsley_probability_1995}
P.~Billingsley, \emph{Probability and measure}, 3rd~ed.\hskip 1em plus 0.5em
  minus 0.4em\relax Wiley, 1995.

\bibitem{SteinFourier1971}
E.~Stein and G.~Weiss, \emph{Introduction to {Fourier} analysis on Euclidean
  spaces}.\hskip 1em plus 0.5em minus 0.4em\relax Princeton Univ. Press, 1971,
  vol.~1.

\bibitem{stein_fourier_2003}
E.~Stein and R.~Shakarchi, \emph{Fourier analysis: {An} introduction}.\hskip
  1em plus 0.5em minus 0.4em\relax Princeton Univ. Press, 2003.

\bibitem{RudinReal1987}
W.~Rudin, \emph{Real and complex analysis}.\hskip 1em plus 0.5em minus
  0.4em\relax Tata McGraw-Hill Ed., 1987.

\bibitem{webster1994convexity}
R.~Webster, \emph{Convexity}.\hskip 1em plus 0.5em minus 0.4em\relax Oxford
  Univ. Press, 1994.

\bibitem{harman2010decompositional}
R.~Harman and V.~Lacko, ``On decompositional algorithms for uniform sampling
  from n-spheres and n-balls,'' \emph{J. of Multi. Anal.}, vol. 101, no.~10,
  pp. 2297--2304, 2010.

\bibitem{hoffmann_survey_2015}
C.~Hoffmann and H.~Werner, ``A survey of linear parameter-varying control
  applications validated by experiments or high-fidelity simulations,''
  \emph{{IEEE} Trans. Ctrl. Syst. Tech.}, vol.~23, no.~2, pp. 416--433, Mar.
  2015.

\bibitem{LimPhD}
S.~Lim, ``Analysis and control of linear parameter-varying systems,'' Ph.D.
  dissertation, Stanford Univ., 1998.

\bibitem{costa_discrete-time_2005}
O.~Costa, M.~Fragoso, and R.~Marques, \emph{Discrete-time {Markov} jump linear
  systems}.\hskip 1em plus 0.5em minus 0.4em\relax Springer, 2005.

\bibitem{davis_piecewise-deterministic_1984}
M.~Davis, ``Piecewise-deterministic {Markov} processes: {A} general class of
  non-diffusion stochastic models,'' \emph{J. of the Royal Stat. Soc.}, pp.
  353--388, 1984.

\bibitem{SkafTAC2010}
J.~Skaf and S.~Boyd, ``Design of affine controllers via convex optimization,''
  \emph{{IEEE} Trans. Autom. Ctrl.}, vol.~55, no.~11, pp. 2476--87, 2010.

\bibitem{GubnerProbability2006}
J.~Gubner, \emph{Probability and random processes for electrical and computer
  engineers}.\hskip 1em plus 0.5em minus 0.4em\relax Cambridge Univ. Press,
  2006.

\bibitem{blackmore_probabilistic_2006}
L.~Blackmore, {Hui Li}, and B.~Williams, ``A probabilistic approach to optimal
  robust path planning with obstacles,'' in \emph{Proc. Amer. Ctrl. Conf.},
  2006, pp. 7 pp.--.

\bibitem{strang_linear_2006}
G.~Strang, \emph{{Linear Algebra and its Applications}}, 3rd~ed.\hskip 1em plus
  0.5em minus 0.4em\relax Thomson Learning, Inc, 1988.

\bibitem{KurzhanskiTextbook}
A.~{Kurzhanski{\u\i}} and I.~V{\'a}lyi, \emph{Ellipsoidal calculus for
  estimation and control}.\hskip 1em plus 0.5em minus 0.4em\relax Laxenburg,
  Austria, 1997.

\bibitem{rockafellar_variational_2009}
R.~Rockafellar and R.~Wets, \emph{Variational analysis}.\hskip 1em plus 0.5em
  minus 0.4em\relax Springer Science \& Business Media, 2009, vol. 317.

\bibitem{TaoAnalysisI}
T.~Tao, \emph{Analysis}, 2nd~ed.\hskip 1em plus 0.5em minus 0.4em\relax
  Hindustan Book Agency, 2006, vol.~1.

\bibitem{PuttermanMarkov2005}
M.~Putterman, \emph{Markov Decision Processes: {Discrete} Stochastic Dynamic
  Programming}.\hskip 1em plus 0.5em minus 0.4em\relax John Wiley \& Sons,
  2005.

\bibitem{GenzJCGS1992}
A.~Genz, ``Numerical computation of multivariate normal probabilities,''
  \emph{J. of Comp. and Graph. Stat.}, vol.~1, no.~2, pp. 141--149, 1992.

\bibitem{le_guernic_reachability_2009}
C.~Le~Guernic and A.~Girard, ``Reachability analysis of hybrid systems using
  support functions,'' in \emph{{Int'l Conf. on Comput. Aided
  Verification}}.\hskip 1em plus 0.5em minus 0.4em\relax Springer, 2009, pp.
  540--554.

\bibitem{ET}
P.~Gagarinov and A.~A. Kurzhanskiy, ``{Ellipsoidal Toolbox}.''

\bibitem{MPT3}
M.~Herceg, M.~Kvasnica, N.~Jones, and M.~Morari, ``{Multi-Parametric Toolbox
  3.0},'' in \emph{Proc. Euro. Ctrl. Conf.}, 2013, pp. 502--510.

\bibitem{long_banana_2013}
A.~Long, K.~Wolfe, M.~Mashner, and G.~Chirikjian, ``The banana distribution is
  {Gaussian}: {A} localization study with exponential coordinates,'' in
  \emph{Robotics: Sci. \& Syst.}, 2013, pp. 265--272.

\end{thebibliography}

\end{document}